
\documentclass[11pt,a4paper]{article}
\usepackage[a4paper]{geometry}
\usepackage{amssymb,latexsym,amsmath,amsfonts,amsthm}
\usepackage{graphicx}
\usepackage{epstopdf}
\usepackage{tikz}
\usepackage{pgflibraryshapes}
\usetikzlibrary{arrows,decorations.markings}
\usepackage{subfigure}
\usepackage{overpic}

\usepackage{comment,verbatim}
\usepackage{hyperref}

\DeclareMathOperator{\diag}{diag}

\newcommand{\cee}{\mathbb{C}}

\newcommand{\R}{\mathbb{R}}
\newcommand{\C}{\mathbb{C}}

\renewcommand{\Re}{\mathrm{Re}\,}
\renewcommand{\Im}{\mathrm{Im}\,}

\newcommand{\Pe}{\mathrm{Pe}}

\newcommand{\ud}{\,\mathrm{d}}

\newcommand{\wtil}{\widetilde}

\def\supp{\mathop{\mathrm{supp}}\nolimits}

\def\Kcr{\mathop{K^{\mathrm{cr}}}\nolimits}
\def\Ktac{\mathop{K^{\mathrm{tac}}}\nolimits}

\newcommand{\crit}{\textrm{crit}}

\newcommand{\ti}{\tau}

\newtheorem{theorem}{Theorem}[section]
\newtheorem{lemma}[theorem]{Lemma}

\newtheorem{corollary}[theorem]{Corollary}

\newtheorem{rhp}[theorem]{RH problem}

\theoremstyle{definition}
\newtheorem{definition}[theorem]{Definition}

\theoremstyle{remark}

\numberwithin{equation}{section}

\hyphenation{pa-ra-me-tri-za-tion}

\begin{document}

\title{Transitions between critical kernels: from the
tacnode kernel and critical kernel in the two-matrix model to the Pearcey kernel}

\author{Dries Geudens\footnotemark[1] ~~and~ Lun Zhang\footnotemark[1]}
\date{\today}

\maketitle
\renewcommand{\thefootnote}{\fnsymbol{footnote}}
\footnotetext[1]{Department of Mathematics, KU Leuven,
Celestijnenlaan 200B, B-3001 Leuven, Belgium. E-mail:
\{dries.geudens, lun.zhang\}\symbol{'100}wis.kuleuven.be.}

\begin{abstract}
In this paper we study two multicritical correlation kernels and
prove that they converge to the Pearcey kernel in a certain double
scaling limit. The first kernel appears in a model of
non-intersecting Brownian motions at a tacnode. The second arises as
a triple scaling limit of the eigenvalue correlation kernel in the
Hermitian two-matrix model with quartic/quadratic potentials. The
two kernels are different but can be expressed in terms of the same
tacnode Riemann-Hilbert problem. The proof is based on a steepest
descent analysis of this Riemann-Hilbert problem.  A special feature
in the analysis is the introduction of an explicit meromorphic
function on a Riemann surface with specified sheet structure.
\end{abstract}

\setcounter{tocdepth}{2} \tableofcontents

\section{Introduction}

Point processes with determinantal correlation kernels have
attracted a lot of interest over the past few decades due to the
rich mathematical structures behind them and their frequent
occurrences in various random models including random matrix theory,
random growth and tiling problems, etc. \cite{J06,Sosh}. A
fundamental issue in the study of determinantal point processes is
to establish microscopic limits of the correlation kernels, which
often leads to universal results. Besides the well-known canonical
kernels like the sine kernels for bulk universality, the Airy/Bessel
kernels for soft/hard edge universality and so on, some new kernels
were found describing critical behavior of certain random models,
and they are believed to be universal as well. This paper deals with
two such models in which a similar phenomenon occurs. We intend to
retrieve a canonical process (the Pearcey process here) from the
critical phenomenon. Physically, this will lead to descriptions of
phase transitions among different processes. For similar transitions
between canonical processes, we refer to the thesis of Deschout
\cite{Des} and the recent papers \cite{ACV, BC1, BC2}.

\subsection{Non-intersecting Brownian motions at a tacnode}
\begin{figure}[t]
\centering
\begin{overpic}[scale=.3]{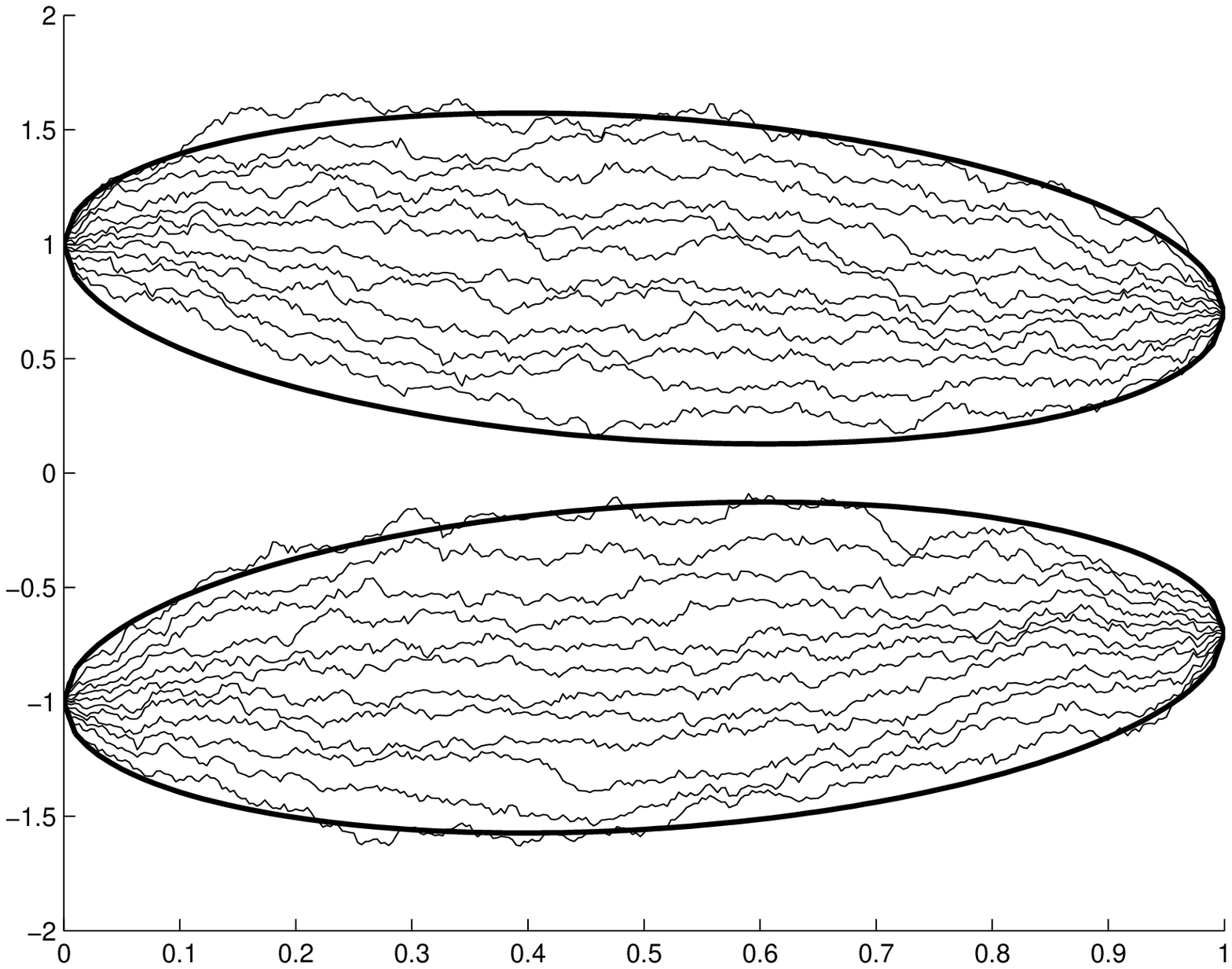}
\small{\put(50,-6){(a)}}
\end{overpic}
\hspace{10mm}
\begin{overpic}[scale=.3]{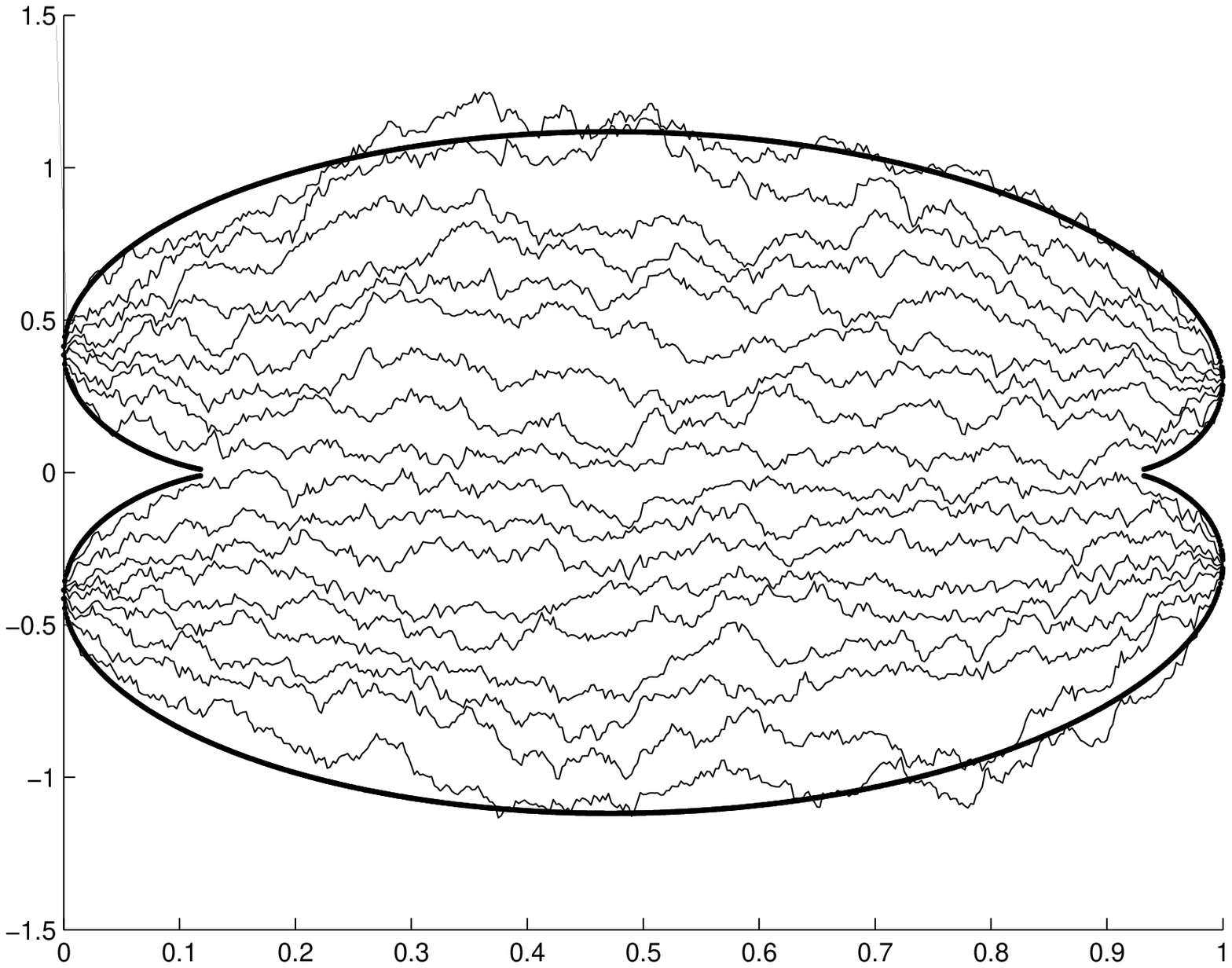}
\small{\put(50,-6){(b)}}
\end{overpic}
\vspace{15mm}
\begin{overpic}[scale=.3]{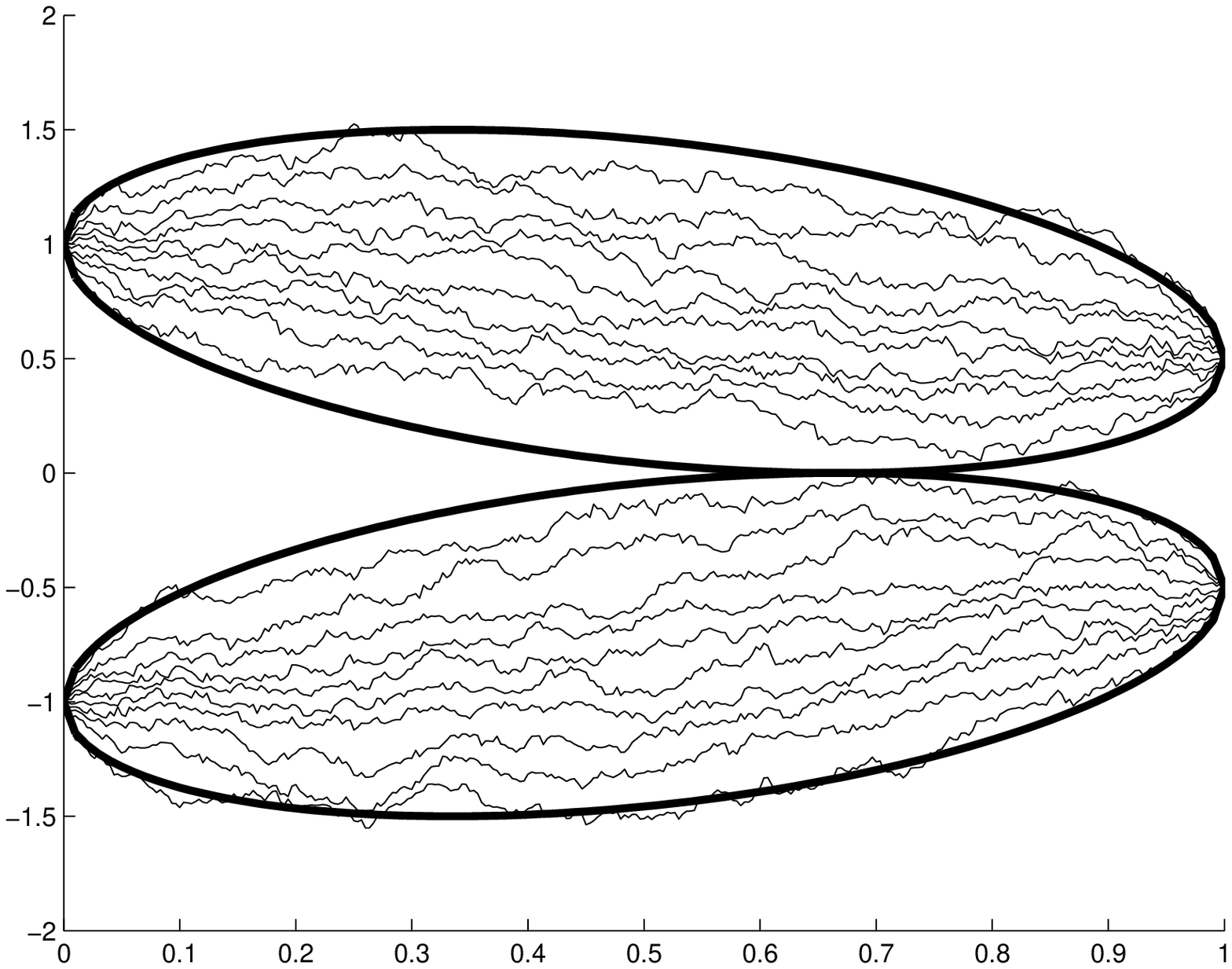}
\small{\put(50,-6){(c)}}
\end{overpic}
\caption{Non-intersecting Brownian
motions with two starting points $\pm\alpha$ and two ending
positions $\pm\beta$ in case of (a) large, (b) small, and (c)
critical separation between the endpoints. Here the horizontal axis
denotes the time, $\ti\in[0,1]$, and for each fixed $\ti$ the
positions of the $n$ non-intersecting Brownian motions at time $\ti$
are denoted on the vertical line through $\ti$. Note that for
$n\to\infty$ the positions of the Brownian motions fill a prescribed
region in the time-space plane, which is bounded by the boldface
lines in the figures. Here we have chosen $n=20,T=1$ in each of the
figures, and (a) $\alpha=1$, $\beta=0.7$, (b) $\alpha=0.4$,
$\beta=0.3$, and (c) $\alpha=1$, $\beta=0.5$, in the cases of large,
small and critical separation, respectively.} \label{fig:3cases}
\end{figure}

As a first model we consider $n$ one-dimensional non-intersecting
Brownian motions with two starting points at time $\tau=0$ and two
ending points at time $\tau=1$. The transition probability density
of the Brownian motions is given by
\begin{equation}\label{transitionprob}
    P_T(\ti,x,y) = \frac{\sqrt{n}}{\sqrt{2\pi \ti T} } \exp\left(- \frac{n}{2\ti T
    }(x-y)^2\right),
\end{equation}
where we interpret $T>0$ as a temperature variable. As the number of
paths tends to infinity, these paths will fill out a certain domain
in the time-space plane. Depending on the locations of the starting
and ending points, the temperature $T$, and the fractions of paths
connecting the topmost and bottommost starting and ending points, we
distinguish three cases, namely, large, small, and critical
separation. This is illustrated in Figure \ref{fig:3cases} which is
taken from \cite{DKV}, see also \cite{DelKui1}.

For each fixed $\ti \in (0,1)$, the positions $x_1,\ldots,x_n$ of
the Brownian paths form a determinantal point process. It is
well-known that the scaling limits of the correlation kernel are
given by the sine kernel in the interior of the domain, the Airy kernel at a typical point of the
boundary, and the Pearcey kernel \cite{BH,BH1} at a cusp
\cite{BK3,TW}. In the critical situation the Brownian
paths fill out two touching ellipses; see Figure
\ref{fig:3cases}(c). The point where the ellipses touch is called the tacnode and the local
particle correlations around this point are described by the tacnode kernel. This tacnode
process was recently studied by different groups of authors using
different techniques. Adler-Ferrari-Van Moerbeke \cite{AFM} resolved
the tacnode problem for non-intersecting random walks (discrete
space and continuous time). Johansson \cite{J} gave an integral
representation of the extended tacnode kernel in the continuous
time-space setting; see also Ferrari and Vet\H o \cite{FV} for
the asymmetric tacnode process. Delvaux-Kuijlaars-Zhang \cite{DKZ} found
an expression of the tacnode kernel in terms of a new $4\times 4$
Riemann-Hilbert (RH) problem. Furthermore, the tacnode process also
appears in a random tiling model called the double Aztec diamond \cite{AJM}.

The critical separation case, see Figure \ref{fig:3cases}(c), can be seen as a limit of the small separation case, in
which the two groups of particles merge at a certain time and
separate again later, see Figure \ref{fig:3cases}(b). In this case
the particles fill out a certain shape in the time-space plane with
two cusp points. At these cusp points the limiting particle density
vanishes with an exponent $1/3$ which indicates that the local
particle correlations are governed by a Pearcey kernel. When we
increase the distance between both starting points and both ending
points keeping the temperature $T$ fixed, or fix the starting and ending
points while lowering the temperature $T$, the cusp singularities
approach each other to form a tacnode. Hence we expect to retrieve the Pearcey kernel from the tacnode
kernel in a certain limit. One aim of this paper is to describe such a transition.

\subsection{Hermitian two-matrix model with quartic/quadratic potentials}
A similar phenomenon occurs in the Hermitian two-matrix
model, which is a probability measure
\begin{equation}\label{twomatrix:DKM}
\frac{1}{Z_n}\exp\left(-n\textrm{Tr}(V(M_1)+W(M_2)-\tau
M_1M_2)\right)\ud M_1\ud M_2,
\end{equation}
defined on the space of pairs $(M_1,M_2)$ of $n\times n$ Hermitian
matrices. Here, $Z_n$ is a normalization constant, $\tau>0$ is the
coupling constant, and $V$ and $W$ are two polynomials which we take
as
\begin{equation}\label{eq:def of VW}
V(x)=\frac{x^2}{2} ,\qquad \textrm{and}\qquad  W(y)=\frac{y^4}{4}+\alpha \frac{y^2}{2},
\end{equation}
for a parameter $\alpha \in \R$.

The two-matrix model can be integrated in terms of biorthogonal
polynomials, see \cite{EMc,BEH1,BEH2,BEH}. The correlation
functions for the eigenvalues of $M_1$ and $M_2$ have a
determinantal structure and admit expressions in terms of these
polynomials \cite{EyM,MS}. Here we are only interested in the
eigenvalues of $M_1$ when averaged over $M_2$, which form a
determinantal point process.

As the parameters $\alpha$ and $\tau$ vary, we encounter phase
transitions. In particular, the point $(-1,1)$ in the
$\alpha \tau$-plane corresponds to a multicritical case. A new
kernel, referred to as the critical kernel in this paper, was recently established by Duits and Geudens \cite{DG} to
describe the local eigenvalue correlations near this point. This critical kernel is
expressed in terms of a RH problem (see RH problem \ref{rhp: tacnode
rhp} below) that is a generalization of the one used in the tacnode
model \cite{DKZ}. Nevertheless, the critical kernel is genuinely different from the tacnode kernel.
Although at the first glance not so clear as in the non-intersecting Brownian motions model, an analysis of the $\alpha\tau$-phase diagram
reveals that this critical kernel can be understood as a limiting case of a Pearcey kernel or a Painlev\'e II kernel \cite{BI,CK}; see
Section \ref{sec:phase diagram in matrix model} for details. In \cite{DG} it was already shown that the critical kernel reduces to
the Painlev\'e II kernel by taking a double scaling limit. The other
aim of this paper is then to retrieve the Pearcey kernel from the
critical kernel, hence, complementing the results in \cite{DG}.

\subsection{The tacnode Riemann-Hilbert problem}
The essential feature allowing us to handle the tacnode kernel and
the critical kernel simultaneously is that both kernels have explicit representations in terms of the following tacnode RH
problem \cite{DKZ,DG}.

\begin{rhp}[Tacnode RH problem] \label{rhp: tacnode rhp}
Fix parameters $r_1,r_2,s_1,s_2$, and $t$. We look for a $4 \times
4$ matrix-valued function $M(z)$ satisfying
\begin{itemize}
\item[\rm (1)] $M$ is analytic for  $z \in \C \setminus \Sigma_M$.
\item[\rm (2)]  For $z\in\Gamma_k$, the limiting values
\[ M_+(z) = \lim_{\substack{\zeta \to z \\\zeta\textrm{ on $+$-side of }\Gamma_k}}M(\zeta), \qquad
   M_-(z) = \lim_{\substack{\zeta \to z \\\zeta\textrm{ on $-$-side of }\Gamma_k}}M(\zeta), \]
exist, where the $+$-side and $-$-side of $\Gamma_k$ are the sides
which lie on the left and right of $\Gamma_k$, respectively, when
traversing $\Gamma_k$ according to its orientation. These limiting
values satisfy the jump relation
\begin{equation}\label{jumps:M}
M_{+}(z) = M_{-}(z)J_k(z),\qquad k=0,\ldots,9.
\end{equation}
\item[\rm (3)] As $z \to \infty$ with $z \in \C \setminus \Sigma_M$ we have
\begin{multline}
M(z)=\left( I+\mathcal O(z^{-1}) \right)B(z) A\\\times \diag \left(
e^{-\psi_2(z)+tz}, e^{-\psi_1(z)-t z}, e^{\psi_2(z)+t
z},e^{\psi_1(z)-t z} \right),
\end{multline}
where
\begin{align}
\psi_1(z)=\frac23 r_1z^{3/2} +2 s_1 z^{1/2}, \quad \psi_2(z)=\frac23
r_2(-z)^{3/2} +2 s_2 (-z)^{1/2} \label{eq: def psi2},
\end{align}
\begin{equation} \label{eq: A}
A=\frac{1}{\sqrt 2} \begin{pmatrix} 1 & 0 & -i & 0 \\ 0 & 1& 0&i \\
-i&0&1&0\\0&i&0&1 \end{pmatrix},
\end{equation}
and
\begin{equation} \label{eq: B}
B(z)=\diag \left((-z)^{-1/4},z^{-1/4},(-z)^{1/4},z^{1/4} \right).
\end{equation}
\item[\rm (4)] $M(z)$ is bounded near $z=0$.
\end{itemize}

The contour $\Sigma_M$ is shown in Figure \ref{fig: contour M} and
consists of 10 rays emanating from the origin. The function $M(z)$
makes constant jumps $J_k$ on each of the rays $\Gamma_k$. These
rays are determined by two angles $\varphi_1$ and $\varphi_2$
satisfying $0<\varphi_1<\varphi_2<\pi/2$. The half-lines $\Gamma_k,$
$k=0,\ldots,9$, are defined by
\begin{align*}
\Gamma_0 = [0,\infty),~\Gamma_1 = [0, e^{i\varphi_1}\infty),~
\Gamma_2 = [0, e^{i\varphi_2}\infty), ~\Gamma_3 =
[0,e^{i(\pi-\varphi_2)}\infty), ~ \Gamma_4=
[0,e^{i(\pi-\varphi_1)}\infty),
\end{align*}
and
\[
\Gamma_{5+k}=-\Gamma_k, \qquad k=0,\ldots,4.
\]
All rays are oriented towards infinity.

The fractional powers are defined with respect to the principal
branch. Hence for example, $z\mapsto (-z)^{3/2}$ is analytic in
$\C\setminus [0,\infty)$ and takes positive values on the negative
part of the real line.

\end{rhp}
\begin{figure}[t]
\centering
\begin{tikzpicture}[scale=.9]
\begin{scope}[decoration={markings,mark= at position 0.5 with {\arrow{stealth}}}]
\draw[postaction={decorate}]      (0,0)--node[near end,
above]{$\Gamma_0$}(4,0) node[right]{$\begin{pmatrix}
0&0&1&0\\0&1&0&0\\-1&0&0&0\\0&0&0&1 \end{pmatrix}$};
\draw[postaction={decorate}]      (0,0)--node[near end,
above]{$\Gamma_1$}(3,1) node[above right]{$\begin{pmatrix}
1&0&0&0\\0&1&0&0\\1&0&1&0\\0&0&0&1 \end{pmatrix}$};
\draw[postaction={decorate}]      (0,0)--node[near end,
right]{$\Gamma_2$}(1.5,2.5) node[above]{$\begin{pmatrix}
1&0&0&0\\-1&1&0&0\\0&0&1&1\\0&0&0&1 \end{pmatrix}$};
\draw[postaction={decorate}]      (0,0)--node[near end,
right]{$\Gamma_3$}(-1.5,2.5) node[above]{$\begin{pmatrix}
1&1&0&0\\0&1&0&0\\0&0&1&0\\0&0&-1&1 \end{pmatrix}$};
\draw[postaction={decorate}]      (0,0)--node[near end,
above]{$\Gamma_4$}(-3,1) node[above left]{$\begin{pmatrix}
1&0&0&0\\0&1&0&0\\0&0&1&0\\0&-1&0&1 \end{pmatrix}$} ;
\draw[postaction={decorate}]      (0,0)--node[near end,
above]{$\Gamma_5$}(-4,0) node[left]{$\begin{pmatrix}
1&0&0&0\\0&0&0&-1\\0&0&1&0\\0&1&0&0 \end{pmatrix}$};
\draw[postaction={decorate}]      (0,0)--node[near end,
above]{$\Gamma_6$}(-3,-1) node[below left]{$\begin{pmatrix}
1&0&0&0\\0&1&0&0\\0&0&1&0\\0&-1&0&1 \end{pmatrix}$} ;
\draw[postaction={decorate}]      (0,0)--node[near end,
right]{$\Gamma_7$}(-1.5,-2.5)node[below]{$\begin{pmatrix}
1&-1&0&0\\0&1&0&0\\0&0&1&0\\0&0&1&1 \end{pmatrix}$};
\draw[postaction={decorate}]      (0,0)--node[near end,
right]{$\Gamma_8$}(1.5,-2.5)node[below]{$\begin{pmatrix}
1&0&0&0\\1&1&0&0\\0&0&1&-1\\0&0&0&1 \end{pmatrix}$};
\draw[postaction={decorate}]      (0,0)--node[near end,
above]{$\Gamma_9$}(3,-1)node[below right]{$\begin{pmatrix}
1&0&0&0\\0&1&0&0\\1&0&1&0\\0&0&0&1 \end{pmatrix}$};
\end{scope}
\end{tikzpicture}
\caption{The jump contour $\Sigma_M$ in the complex $z$-plane and
the constant jump matrices $J_k$ on each of the rays $\Gamma_k$,
$k=0, \ldots,9$.} \label{fig: contour M}
\end{figure}
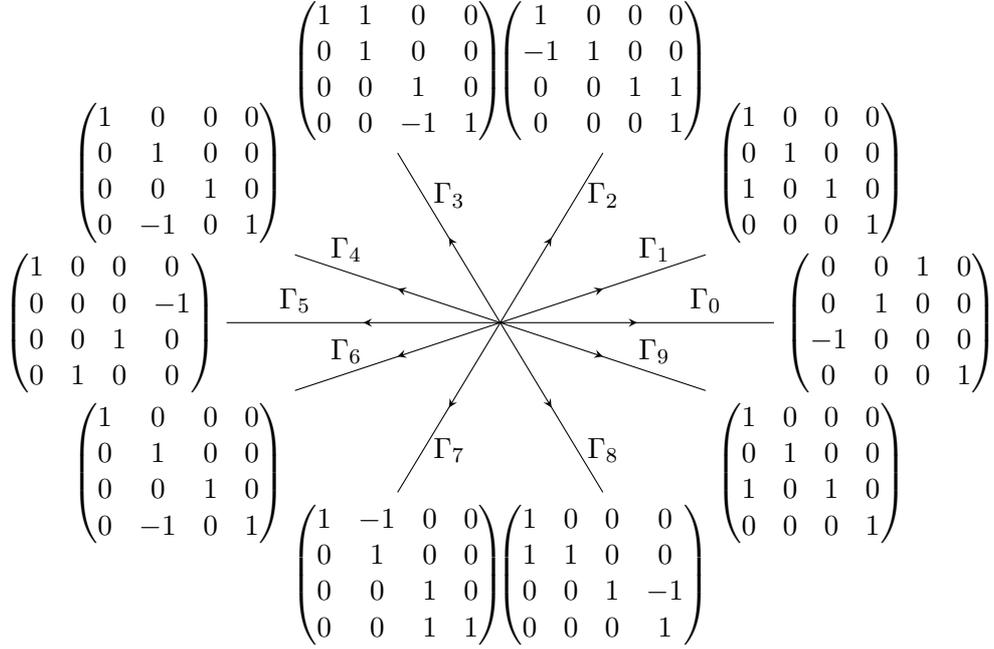

By \cite{DKZ,DG}, this RH problem has a unique  solution for
$r_1=r_2>0$, $s_1=s_2 \in \R$, and $t \in \R$. Moreover the tacnode
RH problem has a remarkable connection with the Hastings-McLeod
solution of the Painlev\'{e} II equation \cite{HML}.

In Sections \ref{sec:tacnode kernel} and \ref{sec:critical kernel} below, we will see that the tacnode RH problem appears in
expressions of the tacnode and critical kernel for the choices of parameters
\begin{align}\label{eq:specialparam}
\begin{cases}
r_1=r_2=1,\\
s_1=s_2=s\in \R,\\
t\in \R.
\end{cases}
\end{align}
We denote the associated solution as $M(z; s, t)$.


\subsection{Outline of the paper}
The rest of this paper is organized as follows. Our results for the
non-intersecting Brownian motions model and the Hermitian two-matrix
model are stated in Sections \ref{sec:tac to Pearcey} and
\ref{sec:critical to Pearcey}, respectively. In both sections, we
first discuss the model from the angle of the phase diagram, which
is particularly helpful in understanding the critical phenomena and
phase transitions that occur. Next we give explicit expressions of
the tacnode and critical kernel in terms of the unique solution to
RH problem \ref{rhp: tacnode rhp} with parameters
\eqref{eq:specialparam}. Our main results are Theorems \ref{th:
tacnode2Pearcey} and \ref{th: Pearcey}, which state that the tacnode
and critical kernel converge to the Pearcey kernel in a certain
double scaling limit. The proofs are given in Section \ref{sec:proof
of thm}. They are based on the Deift-Zhou steepest-descent analysis
\cite{Dei,DKMVZ992,DKMVZ1} of RH problem \ref{rhp: tacnode rhp},
which will be performed in Section \ref{sec: steepest descent}. A
special feature in the analysis is the introduction of an explicit
meromorphic $\lambda$-function on a Riemann surface with specified
sheet structure, which will be the topic of Section \ref{sec:
auxiliary function }. We emphasize that our approach is based on the
steepest-descent analysis of a larger size RH problem as in
\cite{Des} and is different from \cite{ACV,BC1,BC2}.

\section{From the tacnode kernel to the Pearcey kernel}
\label{sec:tac to Pearcey}

\subsection{The phase diagram}
We consider a symmetric version of a model of $n$ 1-dimensional non-intersecting Brownian motions as in \cite{DKV,DelKui1}. The Brownian particles start at points $\pm \alpha$, $\alpha>0$, at time $\ti=0$ and end at points $\pm \beta$, $\beta>0$, at time $\ti=1$. More precisely
we assume that $n$ is even and that $n/2$ of the particles move from the topmost starting point $\alpha$ to the topmost ending
point $\beta$, while the other half of the particles connect the point $-\alpha$ to the point $-\beta$.

For convenience we fix the starting and ending points such that
\begin{equation}\label{eq:fix alpha beta}
2\alpha\beta=1.
\end{equation}
According to \cite{DelKui1}, the three cases of large, small, and
critical separation of the starting and ending points then
correspond to $T<1$, $T>1$, and $T=1$, respectively; see Figure
\ref{fig:3cases} again. Moreover, in case of critical separation,
the coordinate of the tacnode is given by $(\ti_{\crit},0)$, where
\begin{equation}\label{tcrit}
    \ti_{\crit} = \frac{\alpha}{\alpha+\beta}
\end{equation}
is the critical time.

\begin{figure}[t]
\centering
\begin{tikzpicture}
\draw (0,0) node {\includegraphics[scale=0.4]{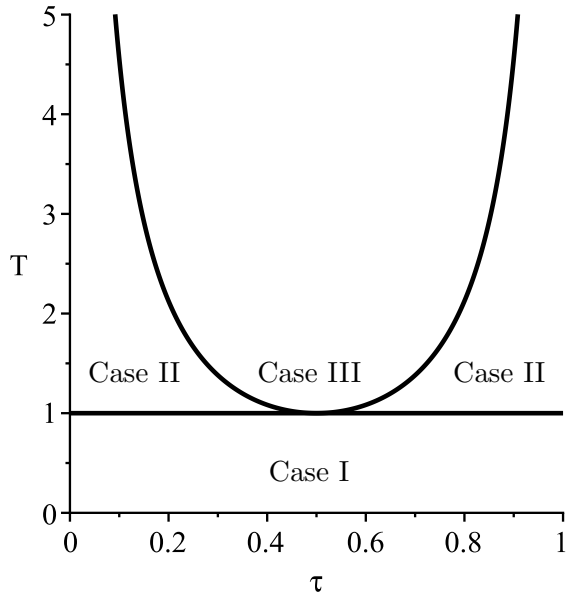}};
\draw (0.3,-2.3) node{Case I}; \draw (0.3,-1) node{Case III}; \draw
(2.8,-1) node{Case II}; \draw (-2,-1) node{Case II};
\end{tikzpicture}
\caption{Phase diagram for $n$ non-intersecting Brownian motions
in the symmetric setting. $n/2$ particles connect $\alpha$ and
$\beta$ while the other particles move from $-\alpha$ to $-\beta$. In
this picture $\alpha=\beta=1/\sqrt 2$. The boldface curves indicate
phase transitions and have equations $T=1$ and
$T=(\ti^2-\ti+1/2)/(\ti(1-\ti))$.} \label{fig: phase diagram B}
\end{figure}

The assumption \eqref{eq:fix alpha beta} allows us to summarize the situation
in a clear phase diagram, shown in Figure \ref{fig: phase diagram B} in the case $\alpha=\beta=1/\sqrt 2$. There are two
curves in the $T \tau$-plane. One is the straight line $T=1$. The
other one is given by the algebraic equation
\begin{equation}\label{eq:critical curve}
T(\ti)=\alpha^2 \frac{1-\ti}{\ti}+\beta^2 \frac{\ti}{1-\ti}.
\end{equation}
These two curves divide the phase plane into three regions, which
we denote by Case I, Case II, and Case III. These three
cases correspond to three generic situations and the two curves
indicate phase transitions. Case I corresponds to temperature $T<1$,
i.e., the case of large separation. In this case, the Brownian paths
remain in two separate groups, and the limiting hull in the
$\tau x$-plane consists of two disjoint ellipses, see Figure
\ref{fig:3cases}(a). The particles are asymptotically distributed on
two intervals according to semicircle laws. In Cases II and III we
have $T>1$ with the paths being distributed on a single interval in
Case III and on two intervals in Case II. In both cases the
densities of the limiting distribution for the positions of the
paths at time $\tau$ are different from semicircle laws. The curve
$T=1$, which corresponds to critical separation in Figure
\ref{fig:3cases}, separates Case I from Case II. The transition occurring at this curve
was studied in \cite{DelKui1} and is related to the homogeneous
Painlev\'e II equation. This phase transition is not visible in the
correlation kernel, but manifests itself in critical limits of the
recurrence coefficients of the associated multiple orthogonal
polynomials. The curve \eqref{eq:critical curve} separates Case II
from Case III. This transition corresponds to the cusps in the
limiting hull in Figure~\ref{fig:3cases}(b), which indicates local behavior described by Pearcey kernels.

The two critical curves touch at the multicritical point
$T=1$, $\ti=\ti_\crit$, which corresponds to the tacnode point in the
critical separation. At this point the limiting particle density
consists of two semicircles touching at the origin. In \cite{DKZ}
the local particle correlations around the origin were studied for fixed temperature $T=1$ and time $\ti=\ti_\crit$, i.e. exactly at the multicritical point $(1,\ti_\crit)$, and varying starting and ending points. Under these assumptions the local scaling limit is expressed in terms of RH problem \ref{rhp: tacnode rhp} with the parameter $t=0$.
In this paper, however, we find it more convenient to fix the starting and ending points $\alpha$
and $\beta$ according to \eqref{eq:fix alpha beta} and scale
$(T,\tau)$ around the multicritical point as follows:
\begin{equation}
\begin{aligned}
\label{eq:scale on t}
\ti&=\ti_{\crit}+Kn^{-1/3}=\frac{\alpha}{\alpha+\beta}+Kn^{-1/3},\\
T&=1+Ln^{-2/3},
\end{aligned}
\end{equation}
where $K$ and $L$ are arbitrary real constants; see also \cite{Del}
for a similar scaling. Then, as $n\to\infty$, the correlation kernel
of non-intersection Brownian motions $K_n(x,y)$ at the tacnode will
converge to the tacnode kernel $\Ktac$ introduced in the next
section, after proper scalings of the space variables.

\subsection{The tacnode kernel}
\label{sec:tacnode kernel}
\begin{definition}
For $u, v \in \mathbb R$, we define the tacnode kernel $\Ktac$ by
\begin{equation} \label{eq:tacnode kernel}
    \Ktac(u,v; s,t)
     = \frac{1}{2\pi i(u-v)} \begin{pmatrix} 0 & 0 & 1 & 1 \end{pmatrix}
    \widehat M^{-1}(v; s,t)
    \widehat M (u; s,t) \begin{pmatrix} 1 \\ 1 \\ 0 \\ 0
    \end{pmatrix},
\end{equation}
where $\widehat M(z;s,t)$ denotes the analytic continuation of the
restriction of $M(z;s,t)$ to the sector around the positive
imaginary axis, bounded by the rays $\Gamma_2$ and $\Gamma_3$.
Recall that $M(z;s,t)$ is the unique solution to RH problem
\ref{rhp: tacnode rhp} with parameters \eqref{eq:specialparam}.
\end{definition}

We can rewrite the kernel in terms of the limiting values
$M_+(u;s,t)$ and $M_+(v;s,t)$ of $M$ on the real line, by using the
jump relations in RH problem \ref{rhp: tacnode rhp}. For example, for $u, v >
0$, we have
\begin{equation}\label{tacnodekernel:pos}
    \Ktac(u,v; s, t)   = \frac{1}{2\pi i(u-v)} \begin{pmatrix} -1 & 0 & 1 & 0 \end{pmatrix}
     M_+^{-1}(v;s,t)  M_+(u;s,t) \begin{pmatrix} 1 \\ 0 \\ 1 \\ 0 \end{pmatrix},
\end{equation}
with a different expression in case $u$ and/or $v$ are negative.

We then have the following results concerning the correlation kernel
of non-intersecting Brownian motions at a tacnode.
\begin{theorem} \label{theorem:kernelpsi}
Consider $n$ nonintersecting Brownian motions on $\R$ with two
starting points $\pm\alpha$, $\alpha>0$ and two ending points $\pm
\beta$, $\beta>0$. The transition probability density is given by
\eqref{transitionprob}. Suppose half of the paths start in $\alpha$ at time $\ti=0$
and end in $\beta$ at time $\ti=1$, while the other half of the paths connect $-\alpha$ and
$-\beta$. If we fix $\alpha$ and $\beta$ such that $2\alpha\beta=1$
and make a double scaling limit of the temperature $T$ and the time $\tau$
according to \eqref{eq:scale on t}, the correlation kernel $K_n$ for
the positions of the particles near the critical time satisfies
\begin{multline} \label{kernel at tacnode}
    \lim_{n \to \infty} \frac{1}{c n^{2/3}}
    K_n \left( \frac{u}{cn^{2/3}},  \frac{v}{cn^{2/3}}\right) \\
    =
    \Ktac(u,v; 2^{-5/3}\left((\alpha + \beta)^4 K^2 -L\right),-2^{-1/3}(\alpha + \beta)^2 K),
\end{multline}
uniformly for $u,v$ in compact subsets of $\mathbb{R}$, where
$c=2^{1/3}(\alpha+\beta)$ and $\Ktac$ is given by \eqref{eq:tacnode
kernel}.
\end{theorem}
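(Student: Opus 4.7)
The plan is to derive \eqref{kernel at tacnode} via a Deift--Zhou steepest descent analysis of the $4\times 4$ Riemann--Hilbert problem that encodes the correlation kernel $K_n$ for the non-intersecting Brownian motions, following the large-size RH approach of \cite{DKZ,DKV,Des}. At a fixed time $\ti$, the positions of the paths form a multiple-orthogonal-polynomial ensemble and $K_n$ admits a Christoffel--Darboux-type representation
\begin{equation*}
K_n(x,y)=\frac{1}{2\pi i(x-y)}\begin{pmatrix}0 & 0 & 1 & 1\end{pmatrix}Y^{-1}(y)\,Y(x)\begin{pmatrix}1\\ 1\\ 0\\ 0\end{pmatrix},
\end{equation*}
where $Y$ solves a $4\times 4$ RH problem on $\R$ whose weights depend on $T,\ti,\alpha,\beta$. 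Inserting \eqref{eq:scale on t} perturbs these weights by factors of order $n^{-2/3}$ (from $L$) and $n^{-1/3}$ (from $K$) while keeping the endpoints fixed.

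Next I would run the standard chain of transformations $Y\mapsto T\mapsto S\mapsto R$. The first step introduces the $\lambda$-function, built from an explicit meromorphic function on a four-sheeted Riemann surface that degenerates at the tacnode, as announced in the introduction; the second step opens lenses around the cuts to exponentially damp the oscillatory jumps; and the third compares the result with a global parametrix on the support of the equilibrium measure, together with local parametrices at the soft edges (modelled by Airy functions) and at the origin.

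The key step is the construction of the local parametrix at the origin using RH problem \ref{rhp: tacnode rhp}. In the stretched coordinate $\zeta=cn^{2/3}z$ with $c=2^{1/3}(\alpha+\beta)$, one matches the local jumps of $S$ with those of $M$ for $r_1=r_2=1$, and the parameters $s,t$ are read off by expanding the $\lambda$-function and the external fields around $z=0$. The coefficient multiplying $\zeta$ after rescaling yields $t=-2^{-1/3}(\alpha+\beta)^2 K$, while the coefficient of $\zeta^{1/2}$ combines the contributions from both shifts in $T$ and $\ti$ to give $s=2^{-5/3}((\alpha+\beta)^4 K^2-L)$. Inverting the transformations in the Christoffel--Darboux formula and using the jump relations of RH problem \ref{rhp: tacnode rhp} then converts the surviving prefactor $\begin{pmatrix}0 & 0 & 1 & 1\end{pmatrix}\widehat M^{-1}\widehat M\begin{pmatrix}1\\ 1\\ 0\\ 0\end{pmatrix}$ into the tacnode kernel \eqref{eq:tacnode kernel}.

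The main technical obstacle is the local analysis at the tacnode in the presence of the simultaneous perturbations of order $Kn^{-1/3}$ and $Ln^{-2/3}$. The quadratic-in-$K$ contribution to $s$ in particular reflects a nontrivial interaction between the two scalings: shifting $\ti$ at fixed $T$ produces both a linear-in-$K$ shift of $t$ and a quadratic-in-$K$ shift of $s$, and these have to be disentangled cleanly in the Taylor expansion of $\lambda$ at $z=0$. Once this expansion is controlled, matching with the global parametrix on the boundary of a shrinking disk around the origin and the standard small-norm estimate for $R$ close the argument.
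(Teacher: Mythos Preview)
Your proposal is correct and follows essentially the same approach as the paper, which simply defers to the steepest descent analysis of \cite{DKZ} with the single modification that the local parametrix at the origin is now built from the solution $M(z;s,t)$ of RH problem~\ref{rhp: tacnode rhp} for nonzero $t$ rather than for $t=0$. One clarification: the $\lambda$-functions on the four-sheeted Riemann surface that you invoke from the introduction and Section~\ref{sec: auxiliary function } are used in this paper for the $a\to-\infty$ analysis of $M$ (Theorems~\ref{th: tacnode2Pearcey} and~\ref{th: Pearcey}), not for the $n\to\infty$ analysis of $Y$ needed here; for the latter one uses the $g$-functions of \cite{DKZ,DelKui1}, built from the semicircle equilibrium measures of the two groups of Brownian paths.
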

\begin{proof}
A version of this theorem with different scaling assumptions was
proved in \cite{DKZ}. The proof of this theorem follows the same
lines. The main difference is in the construction of the local
parametrix around the origin, which is now built out of the solution
to RH problem \ref{rhp: tacnode rhp} for nonzero $t$, whereas in
\cite{DKZ} the version for $t=0$ was used. See also \cite{Del} for a
similar construction in a more involved model.
\end{proof}

\subsection{Double scaling limit: from the tacnode kernel to
the Pearcey kernel}
\label{sec: double scaling limit 1}

The Pearcey kernel $K^\Pe$ is defined as (see \cite{BH,BH1})
\begin{equation}\label{eq: pearcey kernel}
K^\Pe(x,y;\rho)=\frac{p(x)q''(y)-p'(x)q'(y)+p''(x)q(y)-\rho
p(x)q(y)}{x-y},
\end{equation}
with
\begin{equation}\label{eq:pearcey integral}
p(x)=\frac{1}{2\pi}\int_{-\infty}^\infty
e^{-\frac14s^4-\frac{\rho}{2}s^2+isx} \ud s \qquad \text{and} \qquad
q(y)=\frac{1}{2\pi} \int_\Sigma e^{\frac14
t^4+\frac{\rho}{2}t^2+ity} \ud t,
\end{equation}
where $\rho\in\mathbb{R}$. The contour $\Sigma$ consists of the four
rays $\arg t=\pi/4,3\pi/4,5\pi/4,7\pi/4$, where the first and the
third ray are oriented from infinity to zero while the second and
the last ray are oriented outwards. The functions
\eqref{eq:pearcey integral} are called Pearcey integrals
\cite{Pear} and are solutions of the third order differential
equations $p'''(x)=xp(x)+\rho p(x)$ and $q'''(y)=-yq(y)+\rho q(y)$,
respectively.

The Pearcey kernel appears in several random models including the
closing of gap in the Gaussian random matrix model with external
source \cite{BH,BK3,TW}, non-intersecting Brownian motions at
cusps \cite{AM} (see also interpretations in this paper), a
combinatorial model on random partitions \cite{OR}, etc. We mention
that, besides the expression \eqref{eq: pearcey kernel}, the Pearcey
kernel admits a double integral representation as well as a RH
characterization. For the latter one, see Section \ref{sec: RH of
Pearcey kernel}.

This is our first main result.
\begin{theorem} \label{th: tacnode2Pearcey}
Let $\Ktac$ be as defined in \eqref{eq:tacnode kernel} and $K^{\Pe}$
as in \eqref{eq: pearcey kernel}. Then
\begin{equation}
\lim_{a\to -\infty} \frac{1}{\sqrt 2 |a|^{1/4}}\Ktac
\left(\frac{x}{\sqrt 2 |a|^{1/4}}, \frac{y}{\sqrt 2
|a|^{1/4}};-\frac{a^2}{2},|a|\left(1+\frac{\sigma}{2|a|^{3/2}}
\right) \right)=K^{\Pe}(y,x;\sigma), \label{eq: tacnode th 2}
\end{equation}
and
\begin{equation}
\lim_{a\to \infty} \frac{1}{\sqrt 2 a^{1/4}}\Ktac
\left(\frac{x}{\sqrt 2 a^{1/4}}, \frac{y}{\sqrt 2
a^{1/4}};-\frac{a^2}{2},-a\left(1+\frac{\sigma}{2a^{3/2}} \right)
\right)=K^{\Pe}(x,y;\sigma),  \label{eq: tacnode th 1}
\end{equation}
uniformly for $x,y$ in compact sets, where $\sigma\in\mathbb{R}$ is
a parameter.
\end{theorem}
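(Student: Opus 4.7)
The proof will follow a Deift--Zhou steepest-descent analysis of RH problem~\ref{rhp: tacnode rhp} specialized to $r_1=r_2=1$, $s_1=s_2=-a^2/2$, and $t=\mp a(1+\sigma/(2|a|^{3/2}))$, followed by termwise substitution of the resulting asymptotics into~\eqref{eq:tacnode kernel} (and into~\eqref{tacnodekernel:pos} or its sign-variants). Since the spatial variables in~\eqref{eq: tacnode th 1}--\eqref{eq: tacnode th 2} are rescaled by $|a|^{1/4}$, I would first change variables $z=\zeta|a|^{-1/2}$ to turn the large-$|a|$ limit into a large-variable asymptotic problem, placing the relevant saddle-point structure in a bounded region of the $\zeta$-plane.

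At the heart of the argument lies the construction of a meromorphic $\lambda$-function on a four-sheeted Riemann surface whose large-$z$ behaviour on the four sheets matches the diagonal exponents $-\psi_2+tz,\,-\psi_1-tz,\,\psi_2+tz,\,\psi_1-tz$ of condition~(3), and whose branch points organize the non-trivial local behaviour near $z=0$. Under the chosen scaling two branch points coalesce at the origin in a cubic (Pearcey-type) pattern, with the residual separation governed by $\sigma$. A first transformation $M\mapsto\Mdec$, which multiplies $M$ sheet-wise by the exponential of $-\lambda$, normalizes the behaviour at infinity to the identity and turns the jump matrices on $\Sigma_M$ into oscillatory matrices; a subsequent opening of lenses $\Mdec\mapsto\Mmod$ along the branch cuts produces jump matrices that are exponentially close to the identity off any fixed neighbourhood of the branch points.

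I would then construct a global parametrix $\Mhom$ solving only the jumps on the branch cuts, together with Airy parametrices at the soft-edge branch points away from the origin and, crucially, a Pearcey parametrix $P^{\Pe}$ in a shrinking disk around $z=0$ built from the RH characterization of the Pearcey kernel from Section~\ref{sec: RH of Pearcey kernel}. The matching of $P^{\Pe}$ with $\Mhom$ on the boundary of the disk both identifies the internal Pearcey parameter with the given $\sigma$ and produces an error of order $|a|^{-\delta}$ for some $\delta>0$; a standard small-norm argument yields $R(z)=I+O(|a|^{-\delta})$ uniformly on compacts. Unfolding the chain of transformations in~\eqref{eq:tacnode kernel} then reduces the left-hand side of~\eqref{eq: tacnode th 2} and~\eqref{eq: tacnode th 1} to the Pearcey kernel, the swap $K^{\Pe}(x,y)\leftrightarrow K^{\Pe}(y,x)$ between the two formulas reflecting the opposite orientation of the Pearcey sheet structure forced by the sign of $t$.

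The principal obstacle is the construction of the $\lambda$-function itself: one must parametrize a four-sheeted Riemann surface whose behaviour at infinity matches the four prescribed exponents exactly, whose branch points admit a controlled cubic coalescence at $z=0$ as $|a|\to\infty$, and whose imaginary part on the cuts has the sign structure required for opening lenses. A closely related difficulty is the bookkeeping in the local Pearcey expansion, which must be performed precisely enough to verify that the Pearcey parameter is indeed $\sigma$ (and not some other explicit function of the scaling constants).
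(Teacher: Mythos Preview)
Your overall strategy coincides with the paper's: a Deift--Zhou steepest descent on RH problem~\ref{rhp: tacnode rhp}, normalization at infinity via a meromorphic $\lambda$-function on a four-sheeted Riemann surface, a global parametrix, Airy local parametrices at the outer branch points, a Pearcey local parametrix at the origin, a small-norm $R$-problem, and finally unfolding the transformations inside~\eqref{eq:tacnode kernel}. The paper also derives~\eqref{eq: tacnode th 1} from~\eqref{eq: tacnode th 2} by the symmetry $\Ktac(u,v;s,-t)=\Ktac(v,u;s,t)$, which is cleaner than repeating the analysis for the opposite sign of $t$.

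Two concrete points in your sketch would not survive a detailed write-up. First, the rescaling $z=\zeta|a|^{-1/2}$ is wrong: with $s=-a^2/2$ and $t\sim|a|$ the three contributions $\tfrac23 z^{3/2}$, $2sz^{1/2}$, $tz$ in the asymptotic exponent balance only under $z\mapsto a^2 z$, producing the large parameter $|a|^3$ (this is the paper's transformation $M\mapsto M^{(1)}$). Your scaling leaves the terms at different orders in $|a|$ and no steepest-descent picture emerges. Second, the Pearcey behaviour at the origin does \emph{not} arise from two branch points coalescing. After the rescaling the Riemann surface has fixed branch points at $0$ and $\pm ic$ with $c\to c^*>0$; what degenerates is the coefficient $G(0)$ in the local expansion $\lambda_2(z)=G(z)z^{2/3}+H(z)z^{4/3}+K(z)z^2$, which is $O(|a|^{-3/2})$ and encodes $\sigma$ via $\rho(z)=2|a|^3 G(z)z^{2/3}/f(z)^{2/3}\to\sigma$. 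So the hard step you correctly flag---constructing the $\lambda$-function---is solved not by tuning branch-point locations but by choosing a one-parameter family (the parameter $\gamma=\gamma(a,\sigma)$ in Section~\ref{sec:lamda function}) whose local expansion at $0$ has this vanishing leading coefficient. Correspondingly, the Pearcey parametrix is built in a \emph{fixed} disk $D(0,\delta)$ via a conformal map $f(z)\sim|a|^{9/4}z$, rather than in a shrinking disk.
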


The good scaling of the parameters $s,t$ in the tacnode kernel
$\Ktac$ can be derived from the phase diagram in Figure \ref{fig:
phase diagram B}. Indeed, the phase diagram suggests that it is
possible to retrieve the Pearcey kernel from the tacnode kernel by
letting $\ti$ and $T$ move away from the multicritical point
$(\ti_{\crit},1)$ along the curve \eqref{eq:critical curve} that
separates Case II from Case III. Recall the double scaling
\eqref{eq:scale on t} of $\ti$ and $T$ involving $K$ and $L$. If we
leave the multicritical point to the left/right along the curve, by
\eqref{eq:scale on t}, this amounts to setting
$L=2K^2(\alpha+\beta)^4$ and $\mp K>0$. This implies that the
tacnode kernel appearing in Theorem \ref{theorem:kernelpsi} should
take parameters $s,t$ as
\begin{equation}
s=-a^2/2 \qquad \textrm{and} \qquad t=-a,
\end{equation}
with $\mp a=\mp 2^{-1/3}(\alpha + \beta)^2 K>0$. A further consideration
noting that the Pearcey kernel is a one parameter family invokes us
to take
\begin{equation}
t=-a\left(1+\frac{\sigma}{2|a|^{3/2}} \right),\qquad
\sigma\in\mathbb{R},
\end{equation}
in our setting.

Theorem \ref{th: tacnode2Pearcey} will be proved in Section
\ref{sec:proof of tacnode}. In Sections \ref{sec: auxiliary function }--\ref{sec:proof of thm} we will mainly focus on proving \eqref{eq: tacnode th 2}. In the end we will obtain \eqref{eq: tacnode th 1} from \eqref{eq: tacnode th 2} using a symmetry argument. We next state the results for the
transitions in the Hermitian two-matrix model.

\section{From the critical kernel to the Pearcey kernel}
\label{sec:critical to Pearcey}

\subsection{The phase diagram}
\label{sec:phase diagram in matrix model} As aforementioned, the
eigenvalues of the matrix $M_1$ in the Hermitian two-matrix model
\eqref{twomatrix:DKM} form a determinantal process with correlation
kernel $K_{11}^{(n)}(x,y)$. In this section we will again use a
phase diagram to get more insight into the nature of the critical
phenomena and phase transitions involved.

By the results in \cite{DKM}, it follows that the mean eigenvalue
density for $M_1$ has a limit as $n \to \infty$, i.e., there exists
an absolutely continuous measure $\mu_1$ on $\R$ such that
\begin{align}\label{eq:averagedensitylimit}
\lim_{n\to \infty} \frac{1}{n} K_{11}^{(n)}(x,x)=\frac{\mathrm{d}
\mu_1}{\mathrm{d} x}.
\end{align}
For technical reasons this result was only proved under the
assumption that $n\equiv 0 \mod 3$. A crucial point in the analysis
of \cite{DKM} is that the limiting measure $\mu_1$ can be
characterized by a vector equilibrium problem for vectors of three
measures $(\mu_1,\mu_2,\mu_3)$, see also \cite{HK}. This equilibrium
problem involves a constraint $\sigma_2$ acting on the second
measure $\mu_2$, where $\sigma_2$ is a certain measure on the
imaginary axis. Moreover, the supports of the measures $\mu_1$,
$\sigma_2-\mu_2$, and $\mu_3$ have the following forms
\begin{align*}
    \supp (\mu_1) & =[-\beta_0,-\beta_1] \cup [\beta_1, \beta_0], \\
    \supp(\sigma_2-\mu_2) & =i\R \setminus (-i\beta_2,i\beta_2), \\
    \supp(\mu_3) &= \R \setminus (-\beta_3,\beta_3),
\end{align*}
for some $\beta_0 > \beta_1 \geq 0$, $\beta_2,\beta_3 \geq 0$ that
all depend on the parameters $\alpha \in \mathbb R$ and $\tau > 0$.
Generically, at least one of these numbers is zero, and no two
consecutive ones are zero. This leads to the following
classification of cases as proved in \cite{DGK,DKM}.
\begin{description}
\item[Case I:] $\beta_1=0$, $\beta_2>0$, and $\beta_3=0$.
In this case there are no gaps in the supports of the measures
$\mu_1$ and $\mu_3$ on the real line. The constraint $\sigma_2$ is
active along an interval $[-i\beta_2, i\beta_2]$ on the imaginary
axis.
\item[Case II:] $\beta_1>0$, $\beta_2 > 0$, and $\beta_3=0$.
In Case II there is a gap in the support of $\mu_1$,  but there is no gap in the support of $\mu_3$,
which is again the full real line. The constraint is active
along an interval along the imaginary axis.
\item[Case III:] $\beta_1>0$, $\beta_2 = 0$, and $ \beta_3>0$.
In Case III there is a gap in the supports of $\mu_1$ and $\mu_3$, but the constraint
on the imaginary axis is not active.
\item[Case IV:] $\beta_1=0$, $\beta_2>0$, and $\beta_3>0$.
In this case the measure $\mu_1$ is still supported on one interval. However there
is a gap $(-\beta_3, \beta_3)$ in the support of $\mu_3$. As in Case I, the constraint
$\sigma_2$ is  active along an interval $[-i\beta_2, i\beta_2]$ on the imaginary axis.
\end{description}

\begin{figure}[t]
\begin{center}
\begin{tikzpicture}[scale=1.1]
\draw[->](0,0)--(0,4.25) node[above]{$\tau$};
\draw[->](-4,0)--(4.25,0) node[right]{$\alpha$}; \draw[help lines]
(-1,0)--(-1,1)--(0,1); \draw[very thick,rotate around={-90:(-2,0)}]
(-2,0) parabola (-4.5,6.25)   node[above]{$\tau=\sqrt{\alpha+2}$};
\draw[very thick,dashed,rotate around={-90:(-2,0)},color=white]
(-2,0) parabola (-3,1) ; \draw[very thick,dashed] (-1,1)..controls
(0,1.5) and (-0.2,3).. (-0.1,4); \draw[very thick]
(-1,1)..controls (-2,0.5) and (-3,0.2).. (-4,0.1)
node[above]{$\tau=\sqrt{-\frac{1}{\alpha}}$}; \filldraw  (-1,1)
circle (1pt); \draw[very thick,->] (-1,1)--(-0.2,1.4)
node[above]{$a$}; \draw[very thick,->] (-1,1)--(-1.4,1.8)
node[right]{$b$}; \draw (0.1,1)
node[font=\footnotesize,right]{$1$}--(-0.1,1); \draw
(-1,0.1)--(-1,-0.1) node[font=\footnotesize,below]{$-1$}; \draw
(-2,0.1)--(-2,-0.1) node[font=\footnotesize,below]{$-2$}; \draw
(0.1,1.43) node [font=\footnotesize,right]{$\sqrt 2$}--(-0.1,1.43);
\draw[very thick] (2,0.8) node[fill=white]{Case I}
                  (-2.5,0.2) node{Case IV}
                  (-2,2) node[fill=white]{Case III}
                  (1,3) node[fill=white]{Case II};
\end{tikzpicture}
\end{center}
\caption{The phase diagram in the $\alpha\tau$-plane: the
critical curves $\tau=\sqrt{\alpha+2}$ and
$\tau=\sqrt{-\frac{1}{\alpha}}$ separate the four cases. The cases
are distinguished by the fact whether $0$ is in the support of the
measures $\mu_1$, $\sigma_2-\mu_2$, and $\mu_3$, or not.}
\label{fig: phase diagram}
\end{figure}
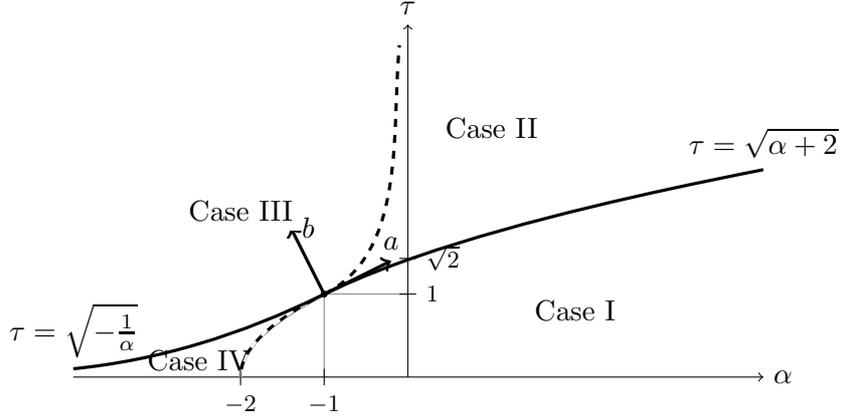

The correspondences between the values of $(\alpha,\tau)$ and these
four cases are illustrated in the phase diagram shown in Figure \ref{fig: phase diagram}. The different cases are separated by two curves with
equations
\begin{equation}
    \tau = \sqrt{\alpha + 2}, \qquad -2 \leq \alpha < \infty,
\end{equation}
and
\begin{equation}\label{eq:curve II}
    \tau = \sqrt{- \frac{1}{\alpha}}, \qquad -\infty < \alpha < 0,
\end{equation}
respectively. On these curves two of the numbers $\beta_1, \beta_2,$
and $\beta_3$ are equal to zero. For example, on the curve
\eqref{eq:curve II} that separates Case III from Case IV, we have
$\beta_1=\beta_2=0$, while $\beta_3>0$. Finally, note the
multi-critical point $(\alpha,\tau)=(-1,1)$ in the phase diagram,
where $\beta_1=\beta_2=\beta_3 = 0$. All four cases come together at
this point in the $\alpha\tau$-plane.

As long as we consider points $(\alpha,\tau)$ that are not on the
curves, the local eigenvalue correlations are governed by the sine
kernel in the bulk of the spectrum and the Airy kernel at the edge
of the spectrum. Critical phenomena occur at the curves that
separate the different cases. When we cross the line
$\tau=\sqrt{\alpha+2}$ for $\alpha>-1$, the support of $\mu_1$ turns
from two intervals into one interval. On this critical curve, the
intervals meet at the origin and  there the density  vanishes
quadratically. This indicates that in a double scaling limit the
correlation kernel converges to the Painlev\'e II kernel that arises
at the closing of a gap in unitary ensembles \cite{BI,CK,CKV}. When
crossing the line $\tau=\sqrt{-1/\alpha}$ for $\alpha<-1$, we again
have a transition of two intervals merging at the origin. However,
due to the fact that $\supp(\sigma_2-\mu_2)$ also closes
simultaneously, the vanishing at the origin occurs with an exponent
${1/3}$. This indicates that the local correlations are governed by
the Pearcey kernel. The other transitions, represented by the dashed
lines in Figure \ref{fig: phase diagram}, do not concern $\mu_1$.
They take place on the non-physical sheets of the spectral curve
and, therefore, they do not influence the local correlations of the
eigenvalues of $M_1$, which are again described by the sine and Airy
kernels.

The limiting process near the multicritical point
$(\alpha,\tau)=(-1,1)$, where there is a simultaneous transition in
the supports of all three measures $\mu_1$, $\sigma-\mu_2$ and
$\mu_3$ is the subject of the recent paper \cite{DG}. Here the
limiting eigenvalue density vanishes like a square root at the
interior of the support, which is similar to the situation in
non-intersecting Brownian motions at a tacnode. By taking triple
scaling limits such that $\alpha$ and $\tau$ are made dependent on
$n$ as
\begin{equation} \label{eq: scaling t tau}
\begin{pmatrix} \alpha \\ \tau \end{pmatrix} = \begin{pmatrix} -1 \\ 1 \end{pmatrix}
+ a n^{-1/3}\begin{pmatrix} 2 \\ 1 \end{pmatrix} + b n^{-2/3}\begin{pmatrix} -1 \\ 2 \end{pmatrix},
\end{equation}
for  $a,b\in \R$, the large $n$ limit of the correlation kernel
$K_{11}^{(n)}(x,y)$ is given by the critical kernel $\Kcr$ that we
will introduce in the next section. Note that in \eqref{eq: scaling
t tau} the vectors $\begin{pmatrix} 2 & 1
\end{pmatrix}^T$ and $\begin{pmatrix} -1 & 2 \end{pmatrix}^T$ are
respectively tangent and normal to both critical curves in the point
$(\alpha,\tau)=(-1,1)$.

\subsection{The critical kernel}
\label{sec:critical kernel}

\begin{definition}
For $u,v\in\mathbb{R}$, we define the critical kernel $\Kcr$ by
\begin{align}\label{eq:defKcr}
\Kcr(u,v;s,t)= \frac{1}{2 \pi i (u-v)} \begin{pmatrix} -1 & 1 & 0 &
0
\end{pmatrix}   M(iu;s,t)^{-1}    M(iv;s,t)
\begin{pmatrix} 1\\1\\0\\0 \end{pmatrix}.
\end{align}
\end{definition}

The following theorem is the main result in \cite{DG}.
\begin{theorem}\label{thm:critical kernel}\cite[Theorem 2.3]{DG}
Let $K_{11}^{(n)}$ be the kernel describing the eigenvalues of $M_1$
when averaged over $M_2$ in the Hermitian two-matrix model
\eqref{twomatrix:DKM} with $V$ and $W$ as in \eqref{eq:def of VW}.
Assume $\alpha$ and $\tau$ depend on $n$ as in \eqref{eq: scaling t
tau}. Then for $n\to \infty$ and $n\equiv 0 \mod 6$, we have
\begin{equation*}
\lim_{n \to \infty}
\frac{1}{n^{2/3}}K_{11}^{(n)}\left(\frac{u}{n^{2/3}},\frac{v}{n^{2/3}}\right)
=\Kcr\left(u,v;\tfrac14 (a^2-5b),-a\right),
\end{equation*}
uniformly for $u,v$ in compact subsets of $\R$, where $\Kcr$ is
defined in \eqref{eq:defKcr}.
\end{theorem}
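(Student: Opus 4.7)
The plan is a Deift--Zhou steepest descent analysis of the $4\times 4$ Riemann--Hilbert problem governing the mixed biorthogonal polynomials of the two-matrix model \eqref{twomatrix:DKM}. The starting point is the representation of $K_{11}^{(n)}(x,y)$ as a finite bilinear expression in the solution $Y(z)$ of such a RH problem, whose jumps on $\R$ encode the mixed orthogonality relations and whose behavior at $\infty$ reflects the Cauchy transforms of $e^{-nV}$ and $e^{-nW}$. I would first normalize $Y$ at infinity using the $\lambda$-functions (and associated $g$-functions) attached to the three-sheeted spectral curve of the vector equilibrium problem for $(\mu_1,\mu_2,\mu_3)$. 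At the multicritical point $(\alpha,\tau)=(-1,1)$ all three supports pinch at the origin, so the local Taylor expansions of the $\lambda$-functions degenerate there simultaneously, which is what ultimately produces the tacnode RH problem in the local analysis.

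After the usual opening of lenses I would obtain a RH problem $S$ whose jumps decay to the identity away from the supports and from a small disk around the origin. A global outer parametrix $P^{(\infty)}$ is built from the spectral curve, and Airy parametrices take care of any remaining soft edges. The crucial ingredient is the construction of a local parametrix $P$ in a disk $D(0,\delta)$. I would seek $P$ in the form
\[
P(z) \;=\; E(z)\, M\!\left(n^{2/3}f(z);\,s_n,t_n\right),
\]
where $M$ is the unique solution of RH problem \ref{rhp: tacnode rhp} with parameters \eqref{eq:specialparam}, $f$ is a conformal map with $f(0)=0$ chosen so that the rays $\Gamma_k$ of $\Sigma_M$ pull back to the lensed contours of $S$, and $E(z)$ is an analytic prefactor that matches $P$ to $P^{(\infty)}$ on $\partial D(0,\delta)$. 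Enforcing this matching is the main obstacle: it amounts to a careful local expansion of the differences of the $\lambda$-functions at $z=0$ along the two scaling directions in \eqref{eq: scaling t tau} and to the identification of the required tacnode parameters as
\[
s_n \longrightarrow \tfrac14(a^2-5b), \qquad t_n \longrightarrow -a.
\]
The use of the vectors $(2,1)^T$ and $(-1,2)^T$ in \eqref{eq: scaling t tau}, tangent and normal respectively to the two critical curves at $(-1,1)$, is precisely what decouples the two free scaling parameters $a,b$ into the two independent parameters $s,t$ of the tacnode RH problem; this explains the specific combination $\tfrac14(a^2-5b)$ appearing in the answer.

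With both parametrices in place, a standard small-norm argument applied to $R(z):=S(z)P^{(\infty)}(z)^{-1}$ (respectively $S(z)P(z)^{-1}$ inside the disk) yields $R(z)=I+O(n^{-1/3})$ uniformly in $z$. Unravelling the chain $Y\mapsto T\mapsto S\mapsto R$ at the points $z=iu/n^{2/3}$ and $z=iv/n^{2/3}$ converts the rescaled kernel $n^{-2/3}K_{11}^{(n)}(u/n^{2/3},v/n^{2/3})$ into the bilinear form in $M(iu;s,t)$ and $M(iv;s,t)$ displayed in \eqref{eq:defKcr}. The imaginary arguments $iu,iv$ arise because the local coordinate $n^{2/3}f(z)$ rotates the real $x$-axis toward the axis on which the second measure $\sigma_2-\mu_2$ lives. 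The divisibility assumption $n\equiv 0\bmod 6$ is imposed so that the fourth-root phase factors appearing in the normalization of $Y$ and in $P^{(\infty)}$ remain single-valued, as is typical in such vector equilibrium settings.
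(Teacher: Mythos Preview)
This theorem is not proved in the paper you are reading: it is simply quoted as \cite[Theorem 2.3]{DG}, the main result of Duits--Geudens, and the present paper uses it as input rather than re-deriving it. So there is no ``paper's own proof'' to compare your proposal against.

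That said, your outline is a broadly accurate sketch of the strategy carried out in \cite{DG}, with one correction and one caveat. The correction: the spectral curve and the associated Riemann surface are \emph{four}-sheeted, not three-sheeted; the vector equilibrium problem involves three measures $(\mu_1,\mu_2,\mu_3)$, but the RH problem for the biorthogonal polynomials of the quartic/quadratic two-matrix model is genuinely $4\times 4$ and the $\lambda$-functions live on a four-sheeted cover, as in \cite{DKM,DG} (and as reflected in Section~\ref{sec: auxiliary function } of the present paper). The caveat: your local parametrix ansatz $P(z)=E(z)\,M(n^{2/3}f(z);s_n,t_n)$ is missing the diagonal exponential factor on the right that carries the $\lambda$-functions; without it $P$ cannot have the correct jumps inside $D(0,\delta)$, and the matching on $\partial D(0,\delta)$ cannot be achieved. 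Your heuristic for the appearance of $iu,iv$ and for the identification $s=\tfrac14(a^2-5b)$, $t=-a$ is in the right spirit, though in \cite{DG} these come out of explicit local expansions of the $\lambda$-functions rather than a geometric rotation argument. The divisibility condition $n\equiv 0\bmod 6$ is inherited from the $n\equiv 0\bmod 3$ assumption of \cite{DKM} combined with an additional parity requirement; your ``fourth-root phase factors'' explanation is not the actual mechanism.
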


\subsection{Double scaling limit: from the critical kernel to the Pearcey kernel}
\label{sec: double scaling limit 2}

Our second main result is the following.
\begin{theorem} \label{th: Pearcey}
Let $\Kcr$ be as defined in \eqref{eq:defKcr} and $K^\Pe$ as in
\eqref{eq: pearcey kernel}. Then
\[
\lim_{a\to -\infty} \frac{1}{\sqrt 2 |a|^{1/4}}\Kcr
\left(\frac{x}{\sqrt 2 |a|^{1/4}}, \frac{y}{\sqrt 2
|a|^{1/4}};-\tfrac12
a^2,|a|\left(1-\frac{\sigma}{2|a|^{3/2}}\right)\right)=K^{\Pe}(x,y;\sigma),
\]
uniformly for $x,y$ in compact sets, where $\sigma\in\mathbb{R}$ is
a parameter.
\end{theorem}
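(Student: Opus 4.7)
The plan is to reuse the Deift--Zhou steepest descent analysis of RH problem \ref{rhp: tacnode rhp} that is developed in Sections \ref{sec: auxiliary function }--\ref{sec:proof of thm} for the proof of Theorem \ref{th: tacnode2Pearcey}. The key observation is that both $\Ktac$ and $\Kcr$ are built from the same matrix $M(z;s,t)$, and the scaling of parameters in Theorem \ref{th: Pearcey}, namely $s=-a^2/2$ and $t=|a|(1-\sigma/(2|a|^{3/2}))$ with $a\to-\infty$, differs from the one used in equation \eqref{eq: tacnode th 2} only by the sign of the $\sigma$-correction in $t$. Consequently the entire chain of transformations $M\to\Mdec\to\Mmod\to\Mhom\to\Mhomdef\to\Mhomtris$, together with the local Pearcey parametrix constructed at the origin, applies to the present situation essentially without modification.

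What does change is the way the kernel is extracted at the end. The critical kernel evaluates $M$ at the purely imaginary points $iu$ and $iv$ with row vector $\begin{pmatrix}-1&1&0&0\end{pmatrix}$ and column vector $\begin{pmatrix}1&1&0&0\end{pmatrix}^{T}$, whereas $\Ktac$ uses $\what M$ at real points with row vector $\begin{pmatrix}0&0&1&1\end{pmatrix}$ and the same column vector. The first step is therefore to substitute the rescaled points $iu/(\sqrt 2|a|^{1/4})$ and $iv/(\sqrt 2|a|^{1/4})$ into the final transformed matrix $\Mhomtris$ and then unwind the transformations $\Mhomtris\to\Mhomdef\to\Mhom\to\Mmod\to\Mdec\to M$. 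Since the upper half of the imaginary axis lies in the sector bounded by $\Gamma_2$ and $\Gamma_3$, no jumps intervene there and the uniform estimates established in Section \ref{sec:proof of thm} apply directly; on the lower half the constant jumps across $\Gamma_7$ and $\Gamma_8$ must be tracked explicitly but cause no essential difficulty.

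Once the local Pearcey parametrix has been installed, $M$ evaluated at $iu/(\sqrt 2|a|^{1/4})$ is expressed in terms of that parametrix multiplied by smooth left and right prefactors that are uniformly bounded as $a\to-\infty$. Plugging this expansion into \eqref{eq:defKcr} and invoking the RH characterization of the Pearcey kernel recalled in Section \ref{sec: RH of Pearcey kernel}, one should read off the limit $K^{\Pe}(x,y;\sigma)$, provided that the combined action of the constant matrix $A$ from \eqref{eq: A} and of the sectoral multipliers on the vectors $\begin{pmatrix}-1&1&0&0\end{pmatrix}$ and $\begin{pmatrix}1&1&0&0\end{pmatrix}^{T}$ produces exactly the pair of vectors that features in that characterization.

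The main obstacle is precisely this last piece of bookkeeping. The vectors that $\Kcr$ pairs with are different from those entering $\Ktac$, and the combined effect of the sign flip in the $\sigma$-correction, the switch from real to imaginary evaluation, and the changed row vector must conspire to produce $K^{\Pe}(x,y;\sigma)$ rather than $K^{\Pe}(y,x;\sigma)$ as in \eqref{eq: tacnode th 2}. Careful tracking of how each transformation in the steepest descent chain acts on these vectors, in the spirit of the calculation already carried out for $\Ktac$, should yield the desired identification. Uniformity for $x,y$ in compact subsets is automatic from the uniformity of the steepest descent estimates.
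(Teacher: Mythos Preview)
Your outline correctly identifies the overall strategy: reuse the steepest descent chain and unfold the transformations with the argument replaced by $iu,iv$ and the bottom sign in \eqref{t:scaling}. You are also right that the crux is the final identification of the limiting expression with $K^{\Pe}(x,y;\sigma)$. However, there is a genuine gap in how you propose to close that identification.

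After unfolding the transformations you will land on an expression of the form
\[
\frac{1}{2\pi i(x-y)}\begin{pmatrix}0&1&1\end{pmatrix}\Phi^{\Pe}(iy;-\sigma)^{-1}\Phi^{\Pe}(ix;-\sigma)\begin{pmatrix}0\\-1\\1\end{pmatrix},
\]
because the conformal map $f$ sends $iu$ to (asymptotically) $ix$ and $\rho(iu)\to-\sigma$ with the bottom sign choice. This is \emph{not} one of the expressions in \eqref{eq: Pearcey kernel}; the Pearcey parametrix is evaluated at purely imaginary points and at the wrong parameter $-\sigma$. No amount of vector bookkeeping through the steepest descent transformations will fix this, and the matrix $A$ from \eqref{eq: A} plays no role here---it governs asymptotics at infinity, not the local parametrix. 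What is needed is a separate symmetry of the Pearcey parametrix itself, namely a relation of the type
\[
\diag(-i,1,i)\,\Phi^{\Pe}(iz;-\rho)\,B_j=\Phi^{\Pe}(z;\rho)
\]
for suitable constant permutation-type matrices $B_j$ depending on the quadrant of $z$ (this is Lemma \ref{lemma: symmetry Pearcey} in the paper). Only after invoking this identity can one rewrite the limiting expression in terms of $\Phi^{\Pe}(x;\sigma)$ and $\Phi^{\Pe}(y;\sigma)$ and then recognize it as $K^{\Pe}(x,y;\sigma)$ via \eqref{eq: Pearcey kernel}. Your plan should therefore include the statement and verification of this symmetry as a standalone lemma before the final step.
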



The scaling of the parameters $s,t$ in the critical kernel $\Kcr$
is again motivated by the phase diagram in Figure \ref{fig: phase
diagram}. This phase diagram suggests that it is
possible to retrieve the Pearcey kernel from the critical kernel by
letting $t$ and $\alpha$ move away from the multicritical point
along the curve $\tau=\sqrt{-1/\alpha}$, $\alpha<-1$. This amounts
to setting $b=3a^2/5$ and $a<0$. Hence the critical kernel in
Theorem \ref{thm:critical kernel} takes parameters $s=-a^2/2$ and
$t=|a|$. To obtain the limit in the most general form, we perform
the scaling for $t$ as
\begin{equation} \label{eq: def s t}
t=|a|\left(1-\frac{\sigma}{2|a|^{3/2}} \right), \qquad \sigma \in
\R.
\end{equation}
We also note that if one lets $t$ and $\alpha$ move away from the
multicritical point along the curve $\tau = \sqrt{\alpha + 2}$,
$\alpha>-1$, the critical kernel will tend to the Painlev\'e II
kernel; see \cite{DG} for more details.

Theorem \ref{th: Pearcey} will be proved in Section \ref{sec:proof
of critical}.


\section{Meromorphic $\lambda$-functions on a Riemann surface}
\label{sec: auxiliary function }

In this section, we introduce some auxiliary functions and study
their properties. The aim is to construct the so-called
$\lambda$-functions, of which the analytic continuation defines a
meromorphic function on a Riemann surface with specified sheet
structure. The $\lambda$-functions have desired behavior around each
branch point, and we shall make use of them in the normalization
step of our steepest descent analysis in Section \ref{sec: M3 to
M4}.

Throughout this section, unless specified differently, we shall take
the principal cut for all fractional powers.

\subsection{A four-sheeted Riemann surface and the $w$-functions}
\label{sec:w function}
We introduce a four-sheeted Riemann surface $\mathcal R$ with sheets
\begin{align*}
\mathcal R_1 &= \C \setminus [0,\infty), & \mathcal R_3 &= \C \setminus \left([-ic,ic] \cup [0,\infty)\right), \\
\mathcal R_2 &= \C \setminus (-\infty,0],& \mathcal R_4 &= \C \setminus ([-ic,ic] \cup (-\infty,0]) ,
\end{align*}
where
\begin{equation} \label{eq: c}
c=\frac{3 \sqrt 3}{16}\gamma^{-3/2},
\end{equation}
and $\gamma>0$ is a parameter that we will specify later.

We connect the sheets $\mathcal R_j$, $j=1,2,3,4$, to each other in
the usual crosswise manner along the cuts $[-ic,ic]$, $[0,\infty)$,
and $(-\infty,0]$. More precisely, $\mathcal R_1$ is connected to
$\mathcal R_3$ along the cut $[0,\infty)$, $\mathcal R_2$ is
connected to $\mathcal R_4$ along the cut $(-\infty,0]$, and
$\mathcal R_3$ is connected to $\mathcal R_4$ via $[-ic,ic]$.
Moreover, the Riemann surface is compactified by adding two points
at infinity. The first point $\infty_1$ is added to the first and
the third sheet, while the second point $\infty_2$ connects the
other two sheets. This Riemann surface $\mathcal R$ has genus zero
and is shown in Figure \ref{fig: Riemann surface}.

We intend to find functions $\lambda_j$ on these sheets, such that
each $\lambda_j$ is analytic on $\mathcal R_j$ and admits an
analytic continuation across the cuts. Later we will impose extra
conditions on the poles and zeros of this function. For the moment,
we content ourselves constructing an elementary function $w$ that is
meromorphic on $\mathcal R$.

\begin{figure}[t]
\centering
\begin{tikzpicture}[scale=0.7]
\draw (0,0)--(6,0)--(8,1)--(2,1)--cycle
      (0,-2)--(6,-2)--(8,-1)--(2,-1)--cycle
      (0,-4)--(6,-4)--(8,-3)--(2,-3)--cycle
      (0,-6)--(6,-6)--(8,-5)--(2,-5)--cycle;
\draw[very thick] (4,0.5)--(7,0.5)
                  (4,-3.5)--(7,-3.5)
                  (1,-1.5)--(4,-1.5)
                  (1,-5.5)--(4,-5.5)
                  (3.5,-3.75)--(4.5,-3.25)
                  (3.5,-5.75)--(4.5,-5.25);
\draw (8,0.5) node{$\mathcal R_1$}
      (8,-1.5) node{$\mathcal R_2$}
      (8,-3.5) node{$\mathcal R_3$}
      (8,-5.5) node{$\mathcal R_4$};
\end{tikzpicture}
\caption{Riemann surface $\mathcal R$.}
\label{fig: Riemann surface}
\end{figure}
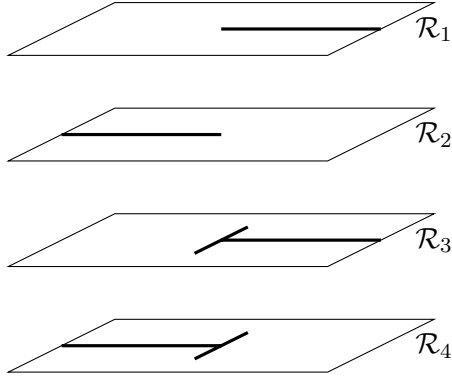

Consider the algebraic equation
\begin{equation} \label{eq: alg eq w}
z=\frac{w}{(\gamma^3 w^2-1)^2}.
\end{equation}
It turns out this equation defines a meromorphic function on the
Riemann surface $\mathcal R$. This is made precise in the following lemma.
\begin{lemma} \label{lemma: w}
There exist functions $w_j(z)$, $j=1,2,3,4,$ that solve the equation
\eqref{eq: alg eq w} and satisfy the following
conditions.
\begin{itemize}
\item[\rm (a)] $w_j(z)$ is analytic on $\mathcal R_j$, $j=1,2,3,4$, and
\begin{align*}
w_{1,\pm}(x) &= w_{3,\mp}(x), && x \in (0,\infty), \\
w_{2,\pm}(x) &= w_{4,\mp}(x), && x \in (-\infty,0), \\
w_{3,\pm}(z) &= w_{4,\mp}(z), && z \in (-ic,0) \cup (0,ic).
\end{align*}
Here, we orient $(-\infty,0)$ and $(0,\infty)$ from left to right,
and $(-ic,0) \cup (0,ic)$ from bottom to top. Hence, the function
$\bigcup_{j=1}^4 \mathcal R_j \to \C: \mathcal R_j \ni z \mapsto
w_j(z)$ has an analytic continuation to a meromorphic function $w:
\mathcal R \to \overline \C$. This function is a bijection.

\item[\rm (b)] $w$ satisfies the symmetry properties
\begin{align*}
w_j(\overline z) &= \overline{w_j(z)},  & z&\in\mathcal R_j,\qquad j=1,2,3,4, \\
w_1(-z)&=-w_2(z), & z&\in \C\setminus (-\infty,0],
\\
w_3(-z)&=-w_4(z), & z&\in \C\setminus \left((-\infty,0]\cup[-ic,
ic]\right).
\end{align*}

\item[\rm (c)] As $z \to \infty$ within $\C\setminus (-\infty,0]$ we have that
\begin{multline*}
w_2(z) =\gamma^{-3/2}+\tfrac12
\gamma^{-9/4}z^{-1/2}-\tfrac{1}{64}\gamma^{-15/4}z^{-3/2}
\\+\tfrac{1}{128}\gamma^{-9/2}z^{-2}-
\tfrac{9}{4096}\gamma^{-21/4}z^{-5/2} +\mathcal O (z^{-7/2}),
\end{multline*}
\begin{multline*}
w_4(z) =\gamma^{-3/2}-\tfrac12
\gamma^{-9/4}z^{-1/2}+\tfrac{1}{64}\gamma^{-15/4}z^{-3/2}
\\+\tfrac{1}{128}\gamma^{-9/2}z^{-2}+
\tfrac{9}{4096}\gamma^{-21/4}z^{-5/2} +\mathcal O (z^{-7/2}).
\end{multline*}

\item[\rm (d)] As $z \to 0$ we have that
\begin{align*}
w_2(z) &=\gamma^{-2}z^{-1/3}+\tfrac23 \gamma^{-1} z^{1/3}-\tfrac13 z
+ \tfrac{28}{81}\gamma z^{5/3}+\mathcal O (z^{7/3}),
\qquad \qquad \text{in } \C \setminus (-\infty,0], \\
w_4(z)& =\begin{cases}   z  -2\gamma^3z^3+9\gamma^6z^5+\mathcal O (z^{7}),  & \text{in }I \cup IV, \\
                         -\omega^2 \gamma^{-2}(-z)^{-1/3}-\tfrac23 \omega
                         \gamma^{-1} (-z)^{1/3}-\tfrac13 z - \tfrac{28}{81} \omega^2 \gamma (-z)^{5/3}+\mathcal O (z^{7/3}),  & \text{in }II, \\
                         \omega^2 \gamma^{-2}z^{-1/3}+\tfrac23 \omega
                         \gamma^{-1} z^{1/3}-\tfrac13 z + \tfrac{28}{81} \omega^2 \gamma z^{5/3}+\mathcal O (z^{7/3}),  & \text{in }III. \end{cases}
\end{align*}
Here, I, II, III and IV stand for the four open quadrants in the complex plane and $\omega=e^{2\pi i/3}$.

\item[\rm (e)] As $z \to ic$ we have that
\[
w_{3}(z)= \frac{i}{\sqrt 3}\gamma^{-3/2}+8\cdot3^{-7/4}e^{3\pi
i/4}\gamma^{-3/4}(z-ic)^{1/2}+\mathcal O \left( z-ic\right),
\]
where we take the branch cut of $(z-ic)^{1/2}$ along $(-i\infty ,
ic)$ and $-\pi/2< \arg (z-ic)<\frac{3 \pi}{2}$.

\item[\rm (f)] We have
$$\displaystyle w_{3,+}\left((0,ic)\right)=\left(0,\tfrac{i}{\sqrt 3}\gamma^{-3/2}\right),
\qquad  \qquad w_{3,-}\left((0,ic)\right)=\left(i
\infty,\tfrac{i}{\sqrt 3}\gamma^{-3/2}\right).$$
\end{itemize}
\end{lemma}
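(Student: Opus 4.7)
View \eqref{eq: alg eq w} as the rational map $\phi:\overline\C\to\overline\C$, $w\mapsto z=w/(\gamma^3 w^2-1)^2$, of degree four; the functions $w_j$ are its four local inverses. I start by computing the critical data. Differentiating gives $dz/dw=-(3\gamma^3 w^2+1)/(\gamma^3 w^2-1)^3$, so the finite critical points are $w=\pm i\gamma^{-3/2}/\sqrt 3$, mapping to the critical values $z=\pm ic$ with $c$ as in \eqref{eq: c} (ramification index $2$ at each). Reading off the leading behavior of $\phi$ near the remaining special points shows that $\phi$ has, in addition, a ramification point of index $3$ at $w=\infty$ (lying over $z=0$, together with the simple preimage $w=0$) and ramification points of index $2$ at $w=\pm\gamma^{-3/2}$ (both lying over $z=\infty$). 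Riemann--Hurwitz with total ramification $2+1+1+1+1=6$ yields genus $0$, consistent with $\mathcal R\cong\overline\C$.

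The second step is to match the abstract monodromy of $\phi^{-1}$ with the sheet structure claimed in the statement. The ramification computed above forces the local monodromies to be a $3$-cycle around $z=0$, a single transposition around each of $z=\pm ic$, and two disjoint transpositions around $z=\infty$. I would then verify, by tracking loops based at a generic point, that the cuts $[0,\infty)$, $(-\infty,0]$ and $[-ic,ic]$ simultaneously resolve all the monodromy and implement precisely the stated gluings---$1\leftrightarrow 3$ across $[0,\infty)$, $2\leftrightarrow 4$ across $(-\infty,0]$, and $3\leftrightarrow 4$ across $[-ic,ic]$. The four single-valued branches are then named unambiguously by their asymptotics: $w_2(z),w_4(z)\to\gamma^{-3/2}$ as $z\to\infty$ and $w_4(z)\to 0$ in quadrants I, IV as $z\to 0$, while $w_1,w_3\to-\gamma^{-3/2}$ at infinity. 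The symmetries in (b) are then immediate: the real symmetry comes from the real coefficients of \eqref{eq: alg eq w} combined with the fact that each $\mathcal R_j$ is Schwarz-symmetric, while $(w,z)\mapsto(-w,-z)$ is an involution of \eqref{eq: alg eq w} that exchanges the pairs $(1,2)$ and $(3,4)$.

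With the global structure settled, the remaining items are local expansions. For (c), writing $w=\gamma^{-3/2}+\delta$ in \eqref{eq: alg eq w} yields an explicit power series relation between $z$ and $\delta$, and Lagrange inversion produces $\delta$ as a series in $z^{-1/2}$ (the sign of the leading term selects $w_2$ versus $w_4$). For (d), the branch tending to $0$ at the origin comes from Lagrange inversion of $w=z(\gamma^3w^2-1)^2$, whereas the three branches tending to $\infty$ are handled via $u=1/w$: the equation becomes $z=u^3\gamma^{-6}(1-u^2/\gamma^3)^{-2}$, whose inverse gives $u$ as a series in $z^{1/3}$; the three cube-root choices correspond to $w_2$, $w_4|_{\mathrm{II}}$, and $w_4|_{\mathrm{III}}$ after matching with the sheet labels from step two. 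Part (e) is a standard square-root expansion at a simple branch point, obtained from $z-ic=\tfrac12 z''(w^*)(w-w^*)^2+\cdots$; direct computation of $z''(w^*)$ at $w^*=i\gamma^{-3/2}/\sqrt 3$ gives the stated prefactor. Part (f) follows by tracing $w_3$ along the two sides of $(0,ic)$: one side continues analytically to the bounded $w_4$-branch vanishing at the origin, the other to a $w_2$-type branch blowing up like a cube root. The main technical obstacle I foresee is step two---verifying combinatorially that the particular cuts of Figure \ref{fig: Riemann surface} implement the correct monodromy gluing. Once this bookkeeping is in place, the rest is routine.
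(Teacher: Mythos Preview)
Your proposal is correct in outline and would lead to a complete proof, but it takes a genuinely different route from the paper. The paper does not analyze the algebraic curve \eqref{eq: alg eq w} directly at all. Instead it observes that the substitution
\[
w(z)=i\bigl(\widehat w(i/z)\bigr)^{-1}
\]
carries solutions of \eqref{eq: alg eq w} to solutions of the auxiliary equation $(\widehat w^2+\gamma^3)^2=z\widehat w^3$, which had already been studied in detail in \cite[Section~3]{DG}. The Riemann surface for $\widehat w$ has sheets cut along $[-\hat c,\hat c]$, $\mathbb R$, and $i\mathbb R$ (with $\hat c=1/c$), and the paper simply writes down the explicit dictionary $w_j\leftrightarrow\widehat w_k$ quadrant by quadrant; all of (a)--(f) then follow by transporting \cite[Lemmas~3.1--3.3]{DG} through the change of variables $z\mapsto i/z$, $w\mapsto i/\widehat w$.

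The trade-off is clear. Your approach is self-contained and conceptually transparent---one sees directly why the ramification data force the sheet structure of Figure~\ref{fig: Riemann surface}---but the monodromy bookkeeping you flag in step two is real work, and the local expansions in (c)--(e) must each be computed from scratch. The paper's approach is much shorter because it outsources all of this to \cite{DG}, at the cost of depending on that reference and of obscuring why this particular Riemann surface appears. If you were writing this up independently of \cite{DG}, your route is the natural one; in the context of the present paper, the transformation trick is the more economical choice.
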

Note that the complete behavior of $w_j$, $j=1,2,3,4$, around the
origin, $\infty$, and $\pm ic$ can be obtained from the symmetry
conditions in (b).
\begin{proof}
We consider the algebraic equation defined by
\begin{equation}\label{eq: alg eq eta}
(\widehat w^2+\gamma^3)^2=z\widehat w^3,
\end{equation}
where again $\gamma$ is a positive parameter as in \eqref{eq: alg eq
w}. For any solution $\widehat w(z)$ of
\eqref{eq: alg eq eta}, the function
$$w(z)=i\left(\widehat w(i/z)\right)^{-1}$$ satisfies equation \eqref{eq:
alg eq w}. Our strategy is then to construct $w_j$ from the
solutions of \eqref{eq: alg eq eta} via the above transformation.
Fortunately, the equation \eqref{eq: alg eq eta} was well-studied in
\cite[Section 3]{DG}. For each $z\in \C$, it was shown that the
equation \eqref{eq: alg eq eta} admits four solutions, denoted by
$\widehat w_j(z)$, $j=1,2,3,4$, such that
\[
|\widehat w_1(z)|\geq|\widehat w_2(z)|\geq|\widehat w_3(z)|\geq|\widehat w_4(z)|.
\]
These functions can be interpreted as a meromorphic function on
another four-sheeted Riemann surface $\hat{\mathcal { R}}$ with
sheets
\begin{align*}
\hat{\mathcal {R}}_1 &= \C \setminus [-\hat{c},\hat{c}], &
\hat{\mathcal {R}}_3 &= \C \setminus (\mathbb R \cup i\mathbb R), \\
\hat{\mathcal {R}}_2 &= \C \setminus \left([-\hat{c},\hat{c}]\cup i\mathbb R\right), &
\hat{\mathcal {R}}_4 &= \C \setminus \mathbb R ,
\end{align*}
where
\[
\hat{c}=1/c=\frac{16}{3 \sqrt 3}\gamma^{3/2}.
\]
These four sheets are connected in the usual crosswise manner and the Riemann surface is compactified by
adding two points at infinity: one on the first sheet, and another to the three remaining sheets. Each $\widehat w_j(z)$ is
analytic on $\hat{\mathcal {R}}_j$ and the function $\bigcup_{j=1}^4
\hat{\mathcal {R}}_j \to \C: \hat{\mathcal {R}}_j \ni z \mapsto
\widehat w_j(z)$ has an analytic continuation to a meromorphic function
$\widehat w: \hat{\mathcal {R}} \to \overline \C$. This fact invokes us to
define the following functions $w_j(z)$:
\begin{align*}
w_1(z) &= \begin{cases} i \left( \widehat w_3(i/z)\right)^{-1}, & \qquad\text{in } I \cup IV, \\
                          i \left( \widehat w_4(i/z)\right)^{-1}, & \qquad \text{in } II \cup III, \end{cases}&
w_2(z) &= \begin{cases} i \left( \widehat w_4(i/z)\right)^{-1}, & \qquad\text{in } I \cup IV, \\
                          i \left( \widehat w_3(i/z)\right)^{-1}, & \qquad\text{in } II \cup III, \end{cases}\\
w_3(z) &= \begin{cases} i \left( \widehat w_2(i/z)\right)^{-1}, & \qquad\text{in } I \cup IV, \\
                          i \left( \widehat w_1(i/z)\right)^{-1}, & \qquad\text{in } II \cup III, \end{cases} &
w_4(z) &= \begin{cases} i \left( \widehat w_1(i/z)\right)^{-1}, & \qquad\text{in } I \cup IV, \\
                          i \left( \widehat w_2(i/z)\right)^{-1}, & \qquad\text{in } II \cup III. \end{cases}
\end{align*}
Then, the lemma follows from \cite[Lemmas 3.1--3.3]{DG} and straightforward computations.
\end{proof}


\subsection{The $\lambda$-functions}
\label{sec:lamda function}

With the functions $w_j(z)$ in Lemma \ref{lemma: w}, we define the
$\lambda$-functions as
\begin{equation}\label{def: lambda function}
\lambda_j(z)=z^2 \left( C_1 w_j(z)^4+C_2 w_j(z)^2+C_3+C_4 w_j(z)^{-2}\right),
\qquad z \in \mathcal R_j, \quad j=1,2,3,4,
\end{equation}
where
\begin{align*}
C_1&=-\tfrac32\gamma^{33/4}+\tfrac{3}{32}\gamma^{27/4}, & C_2&=\tfrac72\gamma^{21/4}+\tfrac{17}{32}\gamma^{15/4}, \\
C_3&=-\tfrac52\gamma^{9/4}-\tfrac{65}{96}\gamma^{3/4}, & C_4&=\tfrac12
\gamma^{-3/4}+\tfrac{5}{96}\gamma^{-9/4},
\end{align*}
are constants depending on $\gamma$. We now fix $\gamma$ as
\begin{equation}\label{eq:gamma explicit}
\gamma=4^{-4/3}\left(-2\mp\frac{\sigma}{|a|^{3/2}}
+\sqrt{5+4\left(1\pm\frac{\sigma}{2|a|^{3/2}}\right)^2}\right)^{4/3}>0,
\end{equation}
with $ a<0$ and $\sigma\in\mathbb{R}$. In \eqref{eq:gamma explicit}
and the rest of the paper we choose the top sign when we are in the
Brownian paths setting and the bottom sign when dealing with the
two-matrix model. It is readily seen that $\gamma$ satisfies
\begin{equation}\label{eq:ale equ for gamma}
\frac{5-16
\gamma^{3/2}}{16 \gamma^{3/4}}=1\pm\frac{\sigma}{2|a|^{3/2}}.
\end{equation}
and
\begin{equation}\label{eq:asy of gamma}
\gamma=4^{-4/3}\left(1\mp\frac{4\sigma}{9|a|^{3/2}}+\mathcal{O}(a^{-3})\right)
\end{equation}
as $a\to -\infty$ with $\sigma$ fixed.

Since $\gamma$ depends on the parameters $a$ and $\sigma$, the
functions and constants $w_j,\lambda_j,c,C_j$ as well as some
functions and constants to be introduced later, depend on $a$ and
$\sigma$ through $\gamma$, although this is not indicated
explicitly. Recall that we are interested in the behavior of these
notions for large negative $a$. We will add $^*$ to the notation to
denote these functions or constants in the limit $a\to -\infty$ with
$\sigma$ fixed.

The properties of the $\lambda$-functions are listed in the following lemma.
\begin{lemma}\label{thm: prop of lambda}
The functions $\lambda_j(z)$, $j=1,2,3,4,$ defined by \eqref{def: lambda function} have the
following properties.
\begin{itemize}
\item[\rm (a)] $\lambda_j(z)$ is analytic on $\mathcal R_j$, $j=1,2,3,4,$ and
\begin{align}
\lambda_{1,\pm}(x) &= \lambda_{3,\mp}(x), && x \in (0,\infty), \label{eq:lamda 1 lamda 3}\\
\lambda_{2,\pm}(x) &= \lambda_{4,\mp}(x), && x \in (-\infty,0), \\
\lambda_{3,\pm}(z) &= \lambda_{4,\mp}(z), && z \in (-ic,0) \cup
(0,ic). \label{eq:lamda3 4 on ic}
\end{align}
Hence the function $\bigcup_{j=1}^4 \mathcal R_j \to \C:  \mathcal
R_j \ni z \mapsto \lambda_j(z)$ has an analytic continuation to a
meromorphic function on the Riemann surface $\mathcal R $.

\item[\rm (b)] We have the following symmetry properties
\begin{align*}
\lambda_j(\overline z)&= \overline{\lambda_j(z)} & z &\in \mathcal R_j, \quad j=1,2,3,4, \\
\lambda_1(-z)&=\lambda_2(z), & z&\in \C\setminus (-\infty,0],  \\
\lambda_3(-z)&=\lambda_4(z), & z&\in \C\setminus
\left((-\infty,0]\cup[-ic, ic]\right).
\end{align*}

\item[\rm (c)]
As $z \to \infty$ within $\C\setminus [0,\infty)$, we have
\begin{align*}
\lambda_2(z) &=\tfrac23z^{3/2}+\left(1\pm\frac{\sigma}{2|a|^{3/2}}\right)z-
z^{1/2}+\ell-2Dz^{-1/2}+\mathcal O (z^{-1}), \\
\lambda_4(z) &=-\tfrac23z^{3/2}+\left(1\pm\frac{\sigma}{2|a|^{3/2}}\right)z
+z^{1/2}+\ell+2Dz^{-1/2}+\mathcal O (z^{-1}),
\end{align*}
where $\ell$ and $D$ are complex constants depending on $\gamma$.

\item[\rm (d)] For $z$ in a neighborhood of the origin we have
\begin{align*}
\lambda_2(z)& = G(z)z^{2/3}+  H(z) z^{4/3}+K(z)z^2,
\qquad\qquad~~~ \text{in } \C \setminus (-\infty,0],\\
\lambda_4(z)& =\begin{cases}
L(z),  & \qquad\text{in }I \cup IV, \\
G(z) \omega z^{2/3} +  H(z) \omega^2 z^{4/3}+K(z)z^2,  &\qquad \text{in }II, \\
G(z) \omega^2 z^{2/3}  +  H(z) \omega z^{4/3}+K(z)z^2, &\qquad
\text{in }III,
\end{cases}
\end{align*}
where $G(z)$, $H(z)$, $K(z)$, and $L(z)$ are even analytic functions in a neighborhood of the origin with
\begin{align}
G(0)&=\frac{3}{32\gamma^{5/4}}(1-16\gamma^{3/2}),     &\quad G^*(0)&=0,               \label{eq:def G0}\\
H(0)&=\frac{1}{32 \gamma^{1/4}}(25-16\gamma^{3/2}),   &\quad H^*(0)&=3\cdot 4^{-2/3}, \label{eq:def H0}\\
K(0)&=\frac{1}{96}\gamma^{3/4}(15+16\gamma^{3/2}),    &\quad K^*(0)&=\frac{1}{24},    \label{eq:def K0}\\
L(0)&=\frac{1}{96\gamma^{9/4}}(5+48\gamma^{3/2}),     &\quad L^*(0)&=\frac{16}{3}.    \label{eq:def L0}
\end{align}

\item[\rm (e)] As $z \to ic$ we have that
\begin{equation}
\lambda_{3}(z)=\wtil G(z)+\wtil H(z) (z-ic)^{3/2},\label{eq:lambda3 at ic}
\end{equation}
where we take the branch cut of $(z-ic)^{1/2}$ along $(-i\infty ,ic)$ and $-\pi/2< \arg (z-ic)<\frac{3 \pi}{2}$.
Here $\wtil G$ and $\wtil H$ are analytic functions in a neighborhood of $ic$ satisfying
\begin{align}
\wtil G(ic)&=\frac{9}{256\gamma^{9/4}}(3+16\gamma^{3/2}),       &\qquad \wtil G^*(ic^*)&=9,                              \label{eq:tileG ic} \\
\wtil H(ic)&=-\frac{3^{1/4}}{27}(5+16\gamma^{3/2})e^{3\pi i/4}, &\qquad \wtil H^*(ic^*)&=-\frac{2}{3^{7/4}}e^{3\pi i/4}. \label{eq:tileH ic}
\end{align}
\end{itemize}
\end{lemma}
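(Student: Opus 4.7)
The plan is to reduce every claim to the corresponding property of the $w$-functions in Lemma \ref{lemma: w}, exploiting the decisive structural observation that $\lambda_j(z) = z^2\, Q(w_j(z))$, where
\[
Q(w) := C_1 w^4 + C_2 w^2 + C_3 + C_4 w^{-2}
\]
is a Laurent polynomial containing only \emph{even} powers of $w$. Parts (a) and (b) then follow almost immediately from this structure. Analyticity of $\lambda_j$ on $\mathcal R_j$ is inherited from $w_j$, and the boundary identifications \eqref{eq:lamda 1 lamda 3}--\eqref{eq:lamda3 4 on ic} follow by applying $Q$ to the analogous identifications in Lemma \ref{lemma: w}(a). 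For (b), the conjugation symmetry is direct, while the sign-reversal identities $\lambda_1(-z) = \lambda_2(z)$ and $\lambda_3(-z) = \lambda_4(z)$ hold because $Q$ is even in $w$ and the prefactor satisfies $(-z)^2 = z^2$.

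Parts (c), (d), and (e) are then treated by substituting the local expansions of $w_j$ from Lemma \ref{lemma: w}(c)--(e) into $\lambda_j(z) = z^2 Q(w_j(z))$ and collecting powers. For (c) I plug the $z^{-1/2}$-series of $w_2$ at infinity and read off the coefficients of $z^{2}, z^{3/2}, z, z^{1/2}, z^0, z^{-1/2}$, with $\lambda_4$ then obtained from symmetry (b) or by a parallel substitution with $w_4$. For (d), because $w_2(z)$ starts at $\gamma^{-2} z^{-1/3}$, the even and negative even powers of $w_2$ produce series in $z^{2/3}$, and multiplication by $z^2$ yields a series starting at $z^{2/3}$; grouping terms by residue modulo $2$ gives the decomposition $G(z) z^{2/3} + H(z) z^{4/3} + K(z) z^2$ with $G, H, K$ even and analytic, and evaluation at $z = 0$ gives the listed values, the starred versions arising from $\gamma \to 4^{-4/3}$ via \eqref{eq:asy of gamma}. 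In quadrants II and III the cube roots of unity in the expansion of $w_4$ propagate to the $\omega$-factors, while in quadrants I and IV the fact that $w_4(z) = z - 2\gamma^3 z^3 + \cdots$ is analytic at $0$ collapses $\lambda_4$ to the convergent power series $L(z)$. For (e), expanding $Q(w_3(z))$ around $w_3(ic) = i\gamma^{-3/2}/\sqrt{3}$ using $w_3(z) - w_3(ic) = 8\cdot 3^{-7/4} e^{3\pi i/4}\gamma^{-3/4}(z-ic)^{1/2} + \mathcal O(z-ic)$ produces a formal series in $(z-ic)^{1/2}$, of which the shape \eqref{eq:lambda3 at ic} extracts the analytic part and the $(z-ic)^{3/2}$-coefficient; the explicit values of $\wtil G(ic)$ and $\wtil H(ic)$ follow by direct evaluation.

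The main obstacle — and indeed the very purpose of the specific constants $C_1, \ldots, C_4$ — is to verify the cancellations implicit in (c) and (e). In (c), the coefficient of $z^2$ in $\lambda_2(z)$ at infinity must vanish, giving one linear relation among the $C_j$; the coefficients of $z^{3/2}, z,$ and $z^{1/2}$ must equal $\tfrac{2}{3}$, $1 \pm \sigma/(2|a|^{3/2})$, and $-1$, respectively, yielding three further relations, of which the middle one is precisely identity \eqref{eq:ale equ for gamma} and forces the specific value of $\gamma$ in \eqref{eq:gamma explicit}. In (e), the absence of a $(z-ic)^{1/2}$-term is equivalent to the critical-point condition $Q'(w_3(ic)) = 0$, which must hold with the same $C_j$; any integer-power contributions are automatically absorbed into the analytic piece $\wtil G$. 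Confirming that the four prescribed values of $C_1, \ldots, C_4$ simultaneously realize all these algebraic constraints — including consistency between the normalization at infinity and the critical-point condition at $ic$ — is the core algebraic check; once it is in hand, the rest of Lemma \ref{thm: prop of lambda} becomes a routine expansion.
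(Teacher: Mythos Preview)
Your proposal is correct and follows the same approach as the paper, whose proof is the single sentence ``straightforward using \eqref{def: lambda function} and Lemma \ref{lemma: w}.'' Your elaboration correctly isolates the key structural fact (that $Q$ is even in $w$) and the algebraic cancellations underpinning (c) and (e) --- namely $Q(\gamma^{-3/2})=0$ for the absent $z^2$-term at infinity and $Q'\bigl(i\gamma^{-3/2}/\sqrt{3}\bigr)=0$ for the absent $(z-ic)^{1/2}$-term at the branch point --- which the paper leaves entirely implicit.
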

Note that the complete behavior of $\lambda_j$, $j=1,2,3,4$, around the origin,
$\infty$, and $\pm ic$ can be obtained from the symmetry conditions in (b).

\begin{proof}
The proof is straightforward using \eqref{def: lambda function} and Lemma \ref{lemma: w}.
\end{proof}

As $a\to-\infty$, the functions $\lambda_j$ converge to
$\lambda_j^*$, i.e., the $\lambda$-functions associated with the
parameter $\gamma^*=4^{-4/3}$. The convergence will be uniform in
compact subsets of $\C$. For later use, we need the following
estimate of the convergence rate.

\begin{lemma}\label{lem:est of convergence}
There exists a constant $\varrho>0$ such that, for every large $a$,
we have
\[
\left| \lambda_j(z) - \lambda_j^*(z) \right| \leq \varrho
|a|^{-3/2}\max(1,|z|^{3/2}), \qquad z \in \C \setminus D(0,\delta) ,
\]
for $j=1,2,3,4$, where $D(0,\delta)$ is an open disk centered at the
origin with small radius $\delta>0$.
\end{lemma}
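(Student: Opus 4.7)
The plan is to observe that every $a$-dependent quantity in the construction of $\lambda_j$ enters only through the single scalar parameter $\gamma=\gamma(a,\sigma)$. Combined with \eqref{eq:asy of gamma}, which gives $|\gamma-\gamma^*|=\mathcal O(|a|^{-3/2})$ as $a\to-\infty$, it suffices to prove the smooth-dependence estimate
\begin{equation*}
|\lambda_j(z;\gamma)-\lambda_j(z;\gamma^*)|\leq C|\gamma-\gamma^*|\max(1,|z|^{3/2}),\qquad z\in\C\setminus D(0,\delta),
\end{equation*}
uniformly for $\gamma$ in a small neighborhood $U$ of $\gamma^*$. It is enough to treat $j=2,4$, since the cases $j=1,3$ follow from the reflection symmetries in Lemma \ref{thm: prop of lambda}(b).

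The first substantive step is to establish joint analyticity of $w_j(z;\gamma)$, and hence of $\lambda_j(z;\gamma)$, away from the branch points $\{0,\pm ic(\gamma),\infty\}$. This follows from the implicit function theorem applied to the algebraic equation \eqref{eq: alg eq w}, whose discriminant vanishes only at those finitely many points; the constants $C_1,\ldots,C_4$ are themselves polynomial in $\gamma^{\pm 3/4}$, so the analyticity transfers to $\lambda_j$. By shrinking $U$ I may assume the moving cut $[-ic(\gamma),ic(\gamma)]$ stays within a small fixed neighborhood of $[-ic^*,ic^*]$. Then on the compact set $\{\delta\leq|z|\leq R\}$ (minus the cuts), joint analyticity and compactness give $|\partial_\gamma\lambda_j|\leq M_R$, and the mean value inequality yields the required bound on the annulus, with continuity up to the cuts supplied by \eqref{eq:lamda 1 lamda 3}--\eqref{eq:lamda3 4 on ic}.

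For the exterior $|z|\geq R$ I would invoke the asymptotic expansion
\begin{equation*}
\lambda_j(z;\gamma)=\pm\tfrac23 z^{3/2}+C(\gamma)z\mp z^{1/2}+\ell(\gamma)-2D(\gamma)z^{-1/2}+\mathcal O(z^{-1})
\end{equation*}
from Lemma \ref{thm: prop of lambda}(c), where $C(\gamma)=(5-16\gamma^{3/2})/(16\gamma^{3/4})$ by \eqref{eq:ale equ for gamma}. The crucial observation is that the leading $z^{3/2}$ and $z^{1/2}$ coefficients are $\gamma$-independent, so they cancel in the difference $\lambda_j(z;\gamma)-\lambda_j(z;\gamma^*)$. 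Smooth $\gamma$-dependence of $C$, $\ell$, $D$ (all polynomial in $\gamma^{\pm 1/4}$) then gives
\begin{equation*}
|\lambda_j(z;\gamma)-\lambda_j(z;\gamma^*)|\leq C'|\gamma-\gamma^*|\bigl(|z|+1+|z|^{-1/2}\bigr)+\bigl|R_j(z;\gamma)-R_j(z;\gamma^*)\bigr|,
\end{equation*}
and since $|z|+1\leq 2|z|^{3/2}$ for $|z|\geq R\geq 1$ this is the desired bound. Gluing the exterior and annular estimates yields the lemma.

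The delicate point, and the main obstacle, is the uniformity in $\gamma\in U$ of the $\mathcal O(z^{-1})$ remainder $R_j(z;\gamma)$ at infinity. I would handle it by solving \eqref{eq: alg eq w} as a formal Puiseux series in $z^{-1/2}$ about $\infty$ and substituting into \eqref{def: lambda function}; all coefficients are then polynomial in $\gamma^{\pm 1/4}$, so for $\gamma\in U$ and $|z|\geq R$ with $R$ chosen large enough (independent of $\gamma$) the tails sum geometrically, producing a remainder bound with a constant that can be taken uniform in $\gamma\in U$. This uniform control is what converts the $\gamma$-difference of the remainder into an $\mathcal O(|\gamma-\gamma^*||z|^{-1})$ contribution, which is absorbed into the final estimate.
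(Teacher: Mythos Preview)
Your proof is correct and follows the same strategy as the paper's: use $|\gamma-\gamma^*|=\mathcal O(|a|^{-3/2})$ from \eqref{eq:asy of gamma}, smooth dependence on $\gamma$ on compacta, and the asymptotics of Lemma~\ref{thm: prop of lambda}(c) at infinity. The paper's own proof is a terse three-line sketch of exactly this argument; you have simply filled in the details (implicit function theorem for joint analyticity, the $\gamma$-independence of the $z^{3/2}$ and $z^{1/2}$ coefficients, uniform Puiseux remainder control) that the paper leaves implicit.
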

\begin{proof}
Due to \eqref{eq:asy of gamma}, we have
\[
\gamma-\gamma^*=\mathcal O(|a|^{-3/2}),\qquad \text{as } a\to-\infty.
\]
Together with \eqref{eq: alg eq w} and \eqref{def: lambda function}, this implies
\begin{equation}
|\lambda_j(z)-\lambda_j^*(z)|=\mathcal O(|a|^{-3/2}),\qquad \text{as }a\to-\infty,
\end{equation}
where the constant is uniform for $z$ in compact subsets of $\C$.
Combining this with the behavior of the $\lambda$-functions at
infinity, as given in Lemma \ref{thm: prop of lambda} (c), we obtain
the lemma.
\end{proof}

\section{Steepest descent analysis for $M$} \label{sec: steepest descent}

In this section we perform the steepest descent analysis of the
tacnode RH problem~\ref{rhp: tacnode rhp}. As discussed in
Sections~\ref{sec: double scaling limit 1} and \ref{sec: double
scaling limit 2}, we want to study the solution $M$ of this RH
problem for the following values of parameters
$$
r_1=r_2=1,
$$
and
\begin{equation}\label{t:scaling}
s_1=s_2=-\frac{a^2}{2},\qquad t=|a|\left(1 \pm
\frac{\sigma}{2|a|^{3/2}} \right),
\end{equation} as $a$ tends to $-\infty$. For
the Brownian paths setting it is more convenient to choose the plus
sign in the formula for $t$, whereas the minus sign is more
appropriate in the two-matrix model. The analysis consists of a
series of explicit and invertible transformations
\[ M \mapsto M^{(1)} \mapsto  M^{(2)} \mapsto  M^{(3)} \mapsto  M^{(4)} \mapsto  M^{(5)} \]
of the RH problem. In Section~\ref{sec:proof of thm} we will use
these transformations to prove Theorems~\ref{th: tacnode2Pearcey}
and \ref{th: Pearcey}. Since we are interested in the case $a \to
-\infty$, we may assume that $a<0$.

\subsection{First transformation: $M\mapsto M^{(1)}$}
This transformation is a rescaling of the RH problem for $M$. Define
\begin{equation}\label{M to A}
M^{(1)}(z;a)=\diag(|a|^{1/2},|a|^{1/2},|a|^{-1/2},|a|^{-1/2})
M\left(a^2z;-\frac{a^2}{2},|a|\left(1\pm\frac{\sigma}{2|a|^{3/2}}\right) \right).
\end{equation}
Then $M^{(1)}$ satisfies the following RH problem.
\begin{lemma} The function $M^{(1)}$ defined in \eqref{M to A} has the following properties
\begin{itemize}
\item[\rm (1)] $M^{(1)}(z)$ is analytic for $z\in\cee\setminus\Sigma_M$,
where $\Sigma_M$ is shown in Figure~\ref{fig: contour M}.
\item[\rm (2)] $M^{(1)}$ has the same jump matrices on $\Sigma_M$ as
$M$; see Figure~\ref{fig: contour M}.
\item[\rm (3)] As $z\to\infty$ with $z\in\cee\setminus\Sigma_M$, we have
\begin{multline}\label{eq:asy of M1}
M^{(1)}(z) = \left(I+\mathcal{O}(z^{-1})\right) B(z)A \\
\times \diag\left(e^{|a|^3\left(-\wtil\psi(-z)+\wtil t z\right)},
e^{|a|^3\left(-\wtil\psi(z)-\wtil tz\right)},e^{|a|^3\left(\wtil\psi(-z)+\wtil t z\right)},e^{|a|^3\left(\wtil\psi(z)-\wtil t z\right)}\right),
\end{multline}
where $A$ and $B(z)$ are given in \eqref{eq: A}--\eqref{eq: B}, and
\begin{equation}\label{def:wtilpsi}
\wtil\psi(z)=\frac{2}{3}z^{3/2}-z^{1/2}, \qquad \wtil
t=1 \pm \frac{\sigma}{2|a|^{3/2}}.
\end{equation}
\item[\rm (4)] $M^{(1)}$ is bounded near the origin.
\end{itemize}
\end{lemma}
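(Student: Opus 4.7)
The plan is to verify the four properties directly from the definition \eqref{M to A} and the corresponding properties of $M$ guaranteed by RH problem~\ref{rhp: tacnode rhp} with parameters \eqref{t:scaling}. Properties (1), (2), and (4) will require only minor observations; the content is really in property (3), which is a bookkeeping computation.

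For (1) and (2), the key point is that $a^2>0$ is a positive real scalar, so the dilation $z\mapsto a^2 z$ maps each ray $\Gamma_k$ to itself (as an oriented set) and therefore preserves the contour $\Sigma_M$. Hence $z\mapsto M(a^2 z;\,\cdot\,,\,\cdot\,)$ is analytic off $\Sigma_M$, and left multiplication by the constant matrix $D:=\diag(|a|^{1/2},|a|^{1/2},|a|^{-1/2},|a|^{-1/2})$ does not affect analyticity. For the jumps, $J_k$ is constant in $z$, so $M_+(a^2z)=M_-(a^2z)J_k$ on $\Gamma_k$, and again $D$ on the left cancels between $M^{(1)}_+$ and $M^{(1)}_-$. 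Property (4) is immediate since boundedness of $M$ near the origin is preserved by the dilation and by left multiplication by a constant matrix.

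For property (3), substitute the parameters \eqref{t:scaling} into $\psi_1,\psi_2$ and rescale. With $r_1=r_2=1$, $s_1=s_2=-a^2/2$ one computes
\[
\psi_1(a^2z)=\tfrac{2}{3}|a|^{3}z^{3/2}-|a|^{3}z^{1/2}=|a|^{3}\wtil\psi(z),
\qquad
\psi_2(a^2z)=|a|^{3}\wtil\psi(-z),
\]
and $t\cdot(a^2z)=|a|^{3}\wtil t\,z$ with $\wtil t=1\pm\sigma/(2|a|^{3/2})$ as in \eqref{def:wtilpsi}. Next, from \eqref{eq: B} one checks
\[
B(a^2z)=\diag\!\left((-a^{2}z)^{-1/4},(a^{2}z)^{-1/4},(-a^{2}z)^{1/4},(a^{2}z)^{1/4}\right)=D^{-1}B(z),
\]
using $|a^2|^{1/4}=|a|^{1/2}$ and the principal branch conventions. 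Inserting these into the asymptotic condition for $M$ yields
\[
M^{(1)}(z)=D\bigl(I+\mathcal O((a^{2}z)^{-1})\bigr)B(a^2z)A\,\diag(\cdots)=\bigl(I+D\,\mathcal O((a^{2}z)^{-1})D^{-1}\bigr)B(z)A\,\diag(\cdots),
\]
where the diagonal factor is exactly the one appearing in \eqref{eq:asy of M1}. Since $D$ is fixed once $a$ is fixed, $D\,\mathcal O(z^{-1})D^{-1}=\mathcal O(z^{-1})$ as $z\to\infty$, proving the required expansion.

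The only place requiring any care is the bookkeeping of the factor $D$: one must check both that $DB(a^2z)=B(z)$ and that the corresponding scaling cancels the $|a|^3$ prefactor in the exponentials precisely as displayed, which is the motivation for the specific choice of $D$ in \eqref{M to A}. There is no serious obstacle; the step is purely a verification, included here to fix notation and exponents that will be used repeatedly in the subsequent transformations $M^{(1)}\mapsto M^{(2)}\mapsto\cdots$.
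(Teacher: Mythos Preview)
Your proposal is correct and takes essentially the same approach as the paper, which dismisses the lemma in one line as ``immediate from RH problem~\ref{rhp: tacnode rhp} and \eqref{M to A}.'' You have simply written out the bookkeeping that the paper leaves implicit; the key observations---that $a^2>0$ preserves each ray $\Gamma_k$, that $DB(a^2z)=B(z)$, and that $\psi_j(a^2z)=|a|^3\wtil\psi(\pm z)$---are exactly what underlies the paper's one-line proof.
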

\begin{proof}
This is immediate from RH problem \ref{rhp: tacnode rhp} and \eqref{M to A}.
\end{proof}

\subsection{Second and third transformations: $M^{(1)}\mapsto M^{(2)}\mapsto M^{(3)}$}

It is the aim of the second and the third transformation to
eliminate the jump matrices on $\Gamma_j$, $j=2,3,7,8$, see Figure
\ref{fig: contour M}. For convenience, we fix the angles
$\varphi_1=\pi/4$ and $\varphi_2=\pi/3$. In a first transformation
$M^{(1)} \mapsto M^{(2)}$ we erase the lower right block of the jump
matrices on these rays by moving them to the shifted rays emanating
from $ic$ (for $\Gamma_2$ and $\Gamma_3$) and $-ic$ (for $\Gamma_7$
and $\Gamma_8$). Thus, we introduce the new rays
\begin{equation*}
\widetilde \Gamma_{2,3}=\Gamma_{2,3}+ic, \qquad \widetilde
\Gamma_{7,8}=\Gamma_{7,8}-ic,
\end{equation*}
and define $M^{(2)}$ as follows. For $k = 2,3,7,8$ it is given by
\begin{align}\label{eq:defM(2)}
M^{(2)}(z)=
\begin{cases} M^{(1)}(z) (I-E_{3,4}), &\qquad  k=2,8, \\
              M^{(1)}(z) (I-E_{4,3}), &\qquad  k=3,7,
\end{cases}
\end{align}
in the region bounded by $\Gamma_k$, $\wtil \Gamma_k$, and $i\R$. Here $E_{i,j}$ denotes the $4 \times 4$ elementary matrix with entry $1$
at the $(i,j)$th position and all other entries equal to zero. In the remaining regions we simply set $M^{(2)} = M^{(1)}$.

In the next transformation $M^{(2)}\mapsto M^{(3)}$ we eliminate the
remaining upper left block of the jumps on $\Gamma_j$, $j=2,3,7,8$,
by, respectively, moving it to the rays $\Gamma_j$, $j=1,4,6,9$.
This completely eliminates the jump matrices on $\Gamma_j$,
$j=2,3,7,8$. Thus, for $k=1,3,6,8$, we define
\begin{align}\label{eq:defM(3)}
M^{(3)}(z)=
\begin{cases} M^{(2)}(z) (I-E_{2,1}),  &\qquad  k=1,8, \\
              M^{(2)}(z) (I-E_{1,2}),  &\qquad  k=3,6,
\end{cases}
\end{align}
for $z$ in the sector bounded by $\Gamma_k$ and $\Gamma_{k+1}$. In the other regions we put $M^{(3)} = M^{(2)}$.

A straightforward check then yields that $M^{(3)}$ has jumps on the
contour
\begin{align}\label{def:sigma 3}
\Sigma_{M^{(3)}}:= \R \cup e^{\pi i/4}\R \cup e^{-i\pi/4}\R \cup
\widetilde \Gamma_2\cup \widetilde \Gamma_3\cup \widetilde
\Gamma_7\cup \widetilde \Gamma_8 \cup [-ic,ic],
\end{align}
as illustrated in Figure \ref{fig: contour M3}, and satisfies the
following RH problem.

\begin{lemma} \label{rhp for M3}
The function  $M^{(3)}$ as defined in \eqref{eq:defM(3)} has the properties
\begin{itemize}
\item[\rm (1)] $M^{(3)}(z)$ is analytic for $z \in \C \setminus
\Sigma_{M^{(3)}}$.
\item[\rm (2)] $M^{(3)}_+(z)=M^{(3)}_-(z)J_{M^{(3)}}(z)$, for $z
\in\Sigma_{M^{(3)}}$, where the jump matrices $J_{M^{(3)}}(z)$
are shown in Figure \ref{fig: contour M3}. Note that we reversed the
orientation of some rays.
\item[\rm (3)] As $z\to \infty$, $M^{(3)}$ has the same asymptotics as $M^{(1)}$;
see \eqref{eq:asy of M1}.
\item[\rm (4)] $M^{(3)}$ is bounded near the origin and near $\pm ic$.
\end{itemize}
\end{lemma}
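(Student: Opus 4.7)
The proof plan is a direct verification of conditions (1)-(4) by unwinding the definitions \eqref{eq:defM(2)} and \eqref{eq:defM(3)} and tracking jumps ray by ray.

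\textbf{Analyticity and jumps.} On each ray $\Gamma_k$ I would compute the new jump of $M^{(3)}$ by composing the constant right-multiplicative triangular factors from \eqref{eq:defM(2)} and \eqref{eq:defM(3)} with the original jump $J_k$ of $M^{(1)}$. The original jumps on $\Gamma_k$ for $k=2,3,7,8$ (read off Figure \ref{fig: contour M}) factor as a product of an upper-left $2\times 2$ block piece of the form $I\pm E_{1,2}$ or $I\pm E_{2,1}$ and a lower-right $2\times 2$ block piece of the form $I\pm E_{3,4}$ or $I\pm E_{4,3}$. The definitions \eqref{eq:defM(2)} and \eqref{eq:defM(3)} are arranged so that these two block factors are precisely the ones inserted on the two sides of $\Gamma_k$, and therefore the jump across $\Gamma_{2,3,7,8}$ in $M^{(3)}$ is trivial. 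The new jumps on $\widetilde\Gamma_{2,3}$ and $\widetilde\Gamma_{7,8}$ are by construction $I\pm E_{3,4}$ and $I\pm E_{4,3}$, while on $\Gamma_{1,4,6,9}$ the original jumps are conjugated by one of the triangular factors in \eqref{eq:defM(3)}, producing the matrices of Figure \ref{fig: contour M3}. On $[-ic,ic]\subset i\R$ a jump is generated by the difference of the constant factors used in the definition of $M^{(2)}$ in the two regions that share this segment.

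\textbf{Asymptotics at infinity.} Each transformation modifies $M^{(1)}$ only by right-multiplication by $I-E_{i,j}$ in specific unbounded sectors. Substituting \eqref{eq:asy of M1} and using that
\[
\diag\!\bigl(e^{|a|^3\phi_k(z)}\bigr)(I-E_{i,j}) = \bigl(I-e^{|a|^3(\phi_i(z)-\phi_j(z))}E_{i,j}\bigr)\diag\!\bigl(e^{|a|^3\phi_k(z)}\bigr),
\]
it suffices to check that in every sector where such a factor is active, $\Re(\phi_i(z)-\phi_j(z))\to-\infty$ as $z\to\infty$. This sign analysis is a direct consequence of the explicit form $\wtil\psi(z)=\tfrac23 z^{3/2}-z^{1/2}$ together with the geometric description of the sectors bounded by $\Gamma_k$, $\wtil\Gamma_k$, and $i\R$. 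Consequently the perturbation is exponentially subdominant and can be absorbed into the prefactor $I+\mathcal O(z^{-1})$, so the asymptotic expansion \eqref{eq:asy of M1} is preserved for $M^{(3)}$.

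\textbf{Boundedness near $0$ and $\pm ic$.} Boundedness at the origin is inherited from condition (4) of RH problem \ref{rhp: tacnode rhp} applied to $M$, since on any sector containing a neighborhood of $0$ the transformation is right-multiplication by a constant matrix. At the points $\pm ic$, where $\wtil\Gamma_{2,3}$ (resp. $\wtil\Gamma_{7,8}$) meet $i\R$, one checks that the cyclic product of the $M^{(3)}$ jump matrices encountered while encircling the intersection point equals the identity; together with the local boundedness of $M^{(1)}$ at these points (which are interior to the analyticity domain of $M$), this forces any candidate singularity of $M^{(3)}$ to be removable.

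\textbf{Main obstacle.} The only delicate step is the bookkeeping in the analyticity and jumps paragraph: one must verify that the particular pairing of triangular factors in \eqref{eq:defM(2)}-\eqref{eq:defM(3)} simultaneously annihilates the jumps on $\Gamma_{2,3,7,8}$, creates only factors of the prescribed simple type on the new rays $\wtil\Gamma_k$ and on $[-ic,ic]$, and reproduces the matrices in Figure \ref{fig: contour M3} on the remaining rays with their (sometimes reversed) orientations. The asymptotic and boundedness parts are straightforward once the sign structure of $\Re\wtil\psi$ and the cyclic consistency at $\pm ic$ have been checked.
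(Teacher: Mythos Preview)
Your proposal is correct and follows the same approach as the paper, which in fact omits the proof entirely and simply states the lemma after describing the transformations \eqref{eq:defM(2)}--\eqref{eq:defM(3)} as ``a straightforward check''. Your outline fills in precisely the routine bookkeeping the paper takes for granted; the only minor simplification available is in item (4), where boundedness at $\pm ic$ follows immediately from the fact that $M^{(1)}$ is analytic there and the transformations are right-multiplications by constant invertible matrices, so the cyclic-consistency check is not strictly needed.
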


\begin{figure}[t]
\centering
\begin{tikzpicture}[scale=0.7]
\begin{scope}[decoration={markings,mark= at position 0.5 with {\arrow{stealth}}}]
\draw[postaction={decorate}]      (0,0)--(4,0) node[right]{$\left(\begin{smallmatrix} 0&0&1&0\\0&1&0&0\\-1&0&0&0\\0&0&0&1 \end{smallmatrix}\right)$};
\draw[postaction={decorate}]      (0,0)--(3,3) node[right]{$\left(\begin{smallmatrix} 1&0&0&0\\-1&1&0&0\\1&0&1&0\\0&0&0&1 \end{smallmatrix}\right)$};
\draw[postaction={decorate}]      (0,0)--node[very near end,right]{$\left(\begin{smallmatrix} 1&0&0&0\\ 0&1&0&0\\0&0&0&1\\0&0&-1&1 \end{smallmatrix}\right)$}(0,4) ;
\draw[postaction={decorate}]      (0,4)--(2,7.5) node[below right]{$\left(\begin{smallmatrix} 1&0&0&0\\ 0&1&0&0\\0&0&1&1\\0&0&0&1 \end{smallmatrix}\right)$};
\draw[postaction={decorate}]      (0,4)--(-2,7.5) node[below left]{$\left(\begin{smallmatrix} 1&0&0&0\\0&1&0&0\\0&0&1&0\\0&0&-1&1 \end{smallmatrix}\right)$};
\draw[postaction={decorate}]      (0,0)--(3,-3)node[right]{$\left(\begin{smallmatrix} 1&0&0&0\\1&1&0&0\\1&0&1&0\\0&0&0&1 \end{smallmatrix}\right)$};
\end{scope}
\begin{scope}[decoration={markings,mark= at position 0.5 with {\arrowreversed{stealth}}}]
\draw[postaction={decorate}]      (0,0)--(-4,0) node[left]{$\left(\begin{smallmatrix} 1&0&0&0\\0&0&0&1\\0&0&1&0\\0&-1&0&0 \end{smallmatrix}\right)$};
\draw[postaction={decorate}]      (0,0)--(-3,3) node[left]{$\left(\begin{smallmatrix} 1&-1&0&0\\0&1&0&0\\0&0&1&0\\0&1&0&1 \end{smallmatrix}\right)$};

\draw[postaction={decorate}]      (0,0)--node[very near end,right]{$\left(\begin{smallmatrix} 1&0&0&0\\ 0&1&0&0\\0&0&0&1\\0&0&-1&1 \end{smallmatrix}\right)$}(0,-4);
\draw[postaction={decorate}]      (0,-4)--(2,-7.5) node[above right]{$\left(\begin{smallmatrix} 1&0&0&0\\ 0&1&0&0\\0&0&1&1\\0&0&0&1 \end{smallmatrix}\right)$};
\draw[postaction={decorate}]      (0,-4)--(-2,-7.5) node[above left]{$\left(\begin{smallmatrix} 1&0&0&0\\0&1&0&0\\0&0&1&0\\0&0&-1&1 \end{smallmatrix}\right)$};
\draw[postaction={decorate}]      (0,0)--(-3,-3)node[left]{$\left(\begin{smallmatrix} 1&1&0&0\\0&1&0&0\\0&0&1&0\\0&1&0&1 \end{smallmatrix}\right)$};
\end{scope}
\filldraw (0,4) circle (1pt) node[left]{$ic$} (0,-4) circle (1pt) node[left]{$-ic$};
\filldraw    (0,0) circle (2pt);
\draw  (0,0) node[below right]{$0$};
\end{tikzpicture}
\caption{The jump contour $\Sigma_{M^{(3)}}$ and jump matrices for
$M^{(3)}(z)$.} \label{fig: contour M3}
\end{figure}
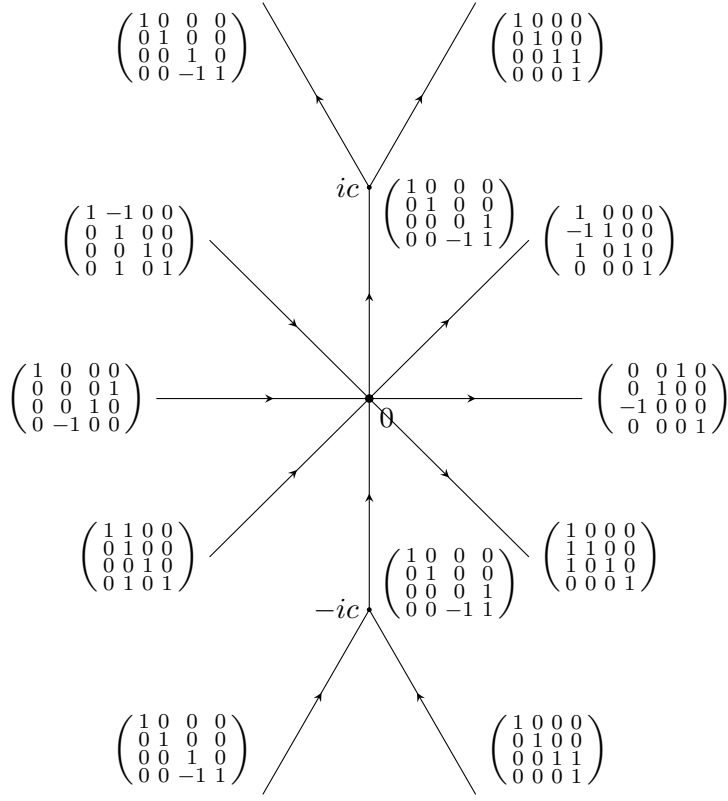

\subsection{Fourth transformation: $M^{(3)}\mapsto M^{(4)}$}
\label{sec: M3 to M4}
In the fourth transformation we normalize the
behavior around infinity using the $\lambda$-functions introduced in
Section \ref{sec:lamda function}. We define
\begin{multline}\label{eq:defM(4)}
M^{(4)}(z)=e^{-|a|^3 \ell}
\left(I-2iD|a|^3E_{3,1}+2iD|a|^3E_{4,2}\right)
M^{(3)}(z)
\\
\times \diag \left( e^{|a|^3 \lambda_1(z)},e^{|a|^3
\lambda_2(z)},e^{|a|^3 \lambda_3(z)},e^{|a|^3
\lambda_4(z)}\right),
\end{multline}
where $\ell$ and $D$ are the constants in property (c) of Lemma \ref{thm: prop of lambda}.

Then $M^{(4)}$ satisfies the following RH problem with $\Sigma_{M^{(4)}}:=\Sigma_{M^{(3)}}$.

\begin{lemma}\label{lem:rhp for M4}
The function  $M^{(4)}$, as defined in \eqref{eq:defM(4)}, has the
following properties
\begin{itemize}
\item[\rm (1)] $M^{(4)}(z)$ is analytic for  $z\in \C \setminus
\Sigma_{M^{(4)}}$.
\item[\rm (2)] For $z \in \Sigma_{M^{(4)}}$, we have
$M^{(4)}_+(z)=M^{(4)}_-(z)J_{M^{(4)}}(z)$, where
\begin{multline}\label{def:JM4}
J_{M^{(4)}}(z)= \diag \left( e^{-|a|^3
\lambda_{1,-}(z)},e^{-|a|^3 \lambda_{2,-}(z)},
e^{-|a|^3 \lambda_{3,-}(z)},e^{-|a|^3 \lambda_{4,-}(z)}\right)\\
\times J_{M^{(3)}}(z)\diag \left( e^{|a|^3
\lambda_{1,+}(z)},e^{|a|^3 \lambda_{2,+}(z)},e^{|a|^3
\lambda_{3,+}(z)},e^{|a|^3 \lambda_{4,+}(z)}\right).
\end{multline}
\item[\rm (3)] As $z \to \infty$ we have
\begin{align}\label{eq:asy of M4}
M^{(4)}(z)=\left(I+\mathcal O \left(z^{-1}\right) \right) B(z)A.
\end{align}
\item[\rm (4)]$M^{(4)}$ is bounded near the origin and near $\pm ic$.
\end{itemize}
The jump matrices $J_{M^{(4)}}(z)$ are explicitly given by
\begin{align*}
J_{M^{(4)}}(z)=\begin{pmatrix}
1 & 0 & 0 & 0\\
0 & 0 & 0 & 1\\
0 & 0 & 1 & 0\\
0 & -1 & 0 & 0
\end{pmatrix}, & \quad z \in (-\infty,0), &
J_{M^{(4)}}(z)=\begin{pmatrix}
0 & 0 & 1 & 0\\
0 & 1 & 0 & 0\\
-1 & 0 & 0 & 0\\
0 & 0 & 0 & 1
\end{pmatrix},  \quad z\in (0,\infty),
\end{align*}
\begin{align*}
J_{M^{(4)}}(z)&=
\begin{pmatrix}
1 & 0 & 0 & 0\\
0 & 1 & 0 & 0\\
0 & 0 & 0 & 1\\
0 & 0 & -1 & e^{|a|^3(\lambda_{4,+}(z)-\lambda_{4,-}(z))}
\end{pmatrix},
&  \quad z &\in (-ic,0)\cup (0,ic),
\end{align*}
and
\begin{align*}
J_{M^{(4)}}(z)&=I-e^{|a|^3(\lambda_1(z)-\lambda_2(z))}E_{2,1}
+e^{|a|^3(\lambda_1(z)-\lambda_3(z))}E_{3,1},
 &\quad z&\in\Gamma_1,
\\
J_{M^{(4)}}(z)&=I-e^{|a|^3(\lambda_2(z)-\lambda_1(z))}E_{1,2}
+e^{|a|^3(\lambda_2(z)-\lambda_4(z))}E_{4,2},
&\quad  z&\in\Gamma_4,
\\
J_{M^{(4)}}(z)&=I+e^{|a|^3(\lambda_2(z)-\lambda_1(z))}E_{1,2}
+e^{|a|^3(\lambda_2(z)-\lambda_4(z))}E_{4,2},
&\quad  z&\in\Gamma_6,
\\
J_{M^{(4)}}(z)&=I+e^{|a|^3(\lambda_1(z)-\lambda_2(z))}E_{2,1}
+e^{|a|^3(\lambda_1(z)-\lambda_3(z))}E_{3,1},
&\quad z&\in\Gamma_9,
\\
J_{M^{(4)}}(z)&=I+e^{|a|^3(\lambda_4(z)-\lambda_3(z))}E_{3,4},
&  \quad z&\in \wtil \Gamma_2\cup\wtil \Gamma_8,
\\
J_{M^{(4)}}(z)&=I-e^{|a|^3(\lambda_3(z)-\lambda_4(z))}E_{4,3}, &
\quad z&\in \wtil \Gamma_3\cup\wtil \Gamma_7.
\end{align*}
\end{lemma}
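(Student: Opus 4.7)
The plan is to verify the four RH properties directly from the defining relation \eqref{eq:defM(4)}, treating $M^{(4)}$ as the product of three pieces: the constant (in $z$) prefactor $N := I - 2iD|a|^3 E_{3,1} + 2iD|a|^3 E_{4,2}$, the previously analyzed function $M^{(3)}$, and the diagonal dressing $\Lambda(z) := \diag(e^{|a|^3\lambda_j(z)})$. Analyticity away from $\Sigma_{M^{(4)}}$ is immediate: $N$ is constant, $M^{(3)}$ is analytic off $\Sigma_{M^{(3)}}$ by Lemma \ref{rhp for M3}(1), and Lemma \ref{thm: prop of lambda}(a) guarantees that each $\lambda_j$ is analytic on $\mathcal R_j$, so that the only possible jumps of $\Lambda$ lie in $\R \cup [-ic,ic]$, which is already contained in $\Sigma_{M^{(3)}}$. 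Hence $\Sigma_{M^{(4)}}=\Sigma_{M^{(3)}}$ and boundedness at the origin and at $\pm ic$ follows at once from Lemma \ref{rhp for M3}(4) together with the boundedness of each $\lambda_j$ at these points, provided by Lemma \ref{thm: prop of lambda}(d)--(e).

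For the jump matrices I would compute $J_{M^{(4)}} = \Lambda_-^{-1}J_{M^{(3)}}\Lambda_+$, noting that $N$ cancels on both sides. On any piece of $\Sigma_{M^{(4)}}$ where $\Lambda$ is continuous (the slanted rays $\Gamma_1,\Gamma_4,\Gamma_6,\Gamma_9$ and the shifted rays $\wtil\Gamma_{2,3,7,8}$), conjugation simply sends the unit entry $E_{i,j}$ in $J_{M^{(3)}}$ to $e^{|a|^3(\lambda_j-\lambda_i)}E_{i,j}$, reproducing precisely the formulas stated on those contours. On the real line I would exploit the boundary relations $\lambda_{1,\pm}=\lambda_{3,\mp}$ on $(0,\infty)$ and $\lambda_{2,\pm}=\lambda_{4,\mp}$ on $(-\infty,0)$ of Lemma \ref{thm: prop of lambda}(a): these guarantee that each off-diagonal $\pm 1$ entry in $J_{M^{(3)}}$ is multiplied by $e^{0}=1$ after conjugation, yielding the two constant permutation-type matrices displayed. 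On $[-ic,0)\cup(0,ic)$, the identity $\lambda_{3,\pm}=\lambda_{4,\mp}$ makes the $(3,4)$ and $(4,3)$ entries remain $\pm 1$, while the diagonal $(4,4)$ entry picks up the factor $e^{|a|^3(\lambda_{4,+}-\lambda_{4,-})}$, exactly as in the statement.

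The asymptotic normalization at infinity is the step that requires genuine work. Plugging the expansions of Lemma \ref{thm: prop of lambda}(c) into $\Lambda(z)$ and combining with the asymptotics \eqref{eq:asy of M1} of $M^{(3)}$, the terms in $\wtil\psi(\pm z)$ and in $\wtil t z$ cancel exactly, the common constant term $\ell$ is absorbed by the scalar prefactor $e^{-|a|^3\ell}$, and one is left with
\begin{equation*}
e^{-|a|^3\ell}M^{(3)}(z)\Lambda(z) = (I+\mathcal O(z^{-1}))\,B(z)A\,\bigl(I+D^{(1)}(z)+\mathcal O(z^{-1})\bigr),
\end{equation*}
where $D^{(1)}(z)=2D|a|^3\diag(-(-z)^{-1/2},-z^{-1/2},(-z)^{-1/2},z^{-1/2})$. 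A direct $2\times 2$ block computation using the explicit form of $A$ and $B$ shows that $B(z)A\,D^{(1)}(z)(B(z)A)^{-1} = 2iD|a|^3 E_{3,1} - 2iD|a|^3 E_{4,2} + \mathcal O(z^{-1})$; the $O(z^{-1/2})$ entries of $D^{(1)}$ become $O(1)$ after conjugation by the unbounded factors $(\pm z)^{\pm 1/4}$ in $B(z)$. The constant prefactor $N$ was designed precisely so that $N\bigl(I + 2iD|a|^3E_{3,1} - 2iD|a|^3E_{4,2}\bigr)=I$, using the trivial nilpotence $E_{3,1}^2=E_{4,2}^2=E_{3,1}E_{4,2}=E_{4,2}E_{3,1}=0$. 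Combining these yields $M^{(4)}(z)=(I+\mathcal O(z^{-1}))B(z)A$, establishing (3) and completing the proof.

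The main obstacle, then, is the third step: identifying exactly which dangerous $O(z^{-1/2})$ part of $\Lambda$ survives the conjugation by $B(z)A$ and producing a constant matrix that cancels it. This is where the particular shape of $N$ arises, and it is the reason for singling out the coefficients $\ell$ and $D$ in the expansions of Lemma \ref{thm: prop of lambda}(c). All remaining verifications are algebraic bookkeeping that follow automatically from the careful construction of the $\lambda$-functions in Section \ref{sec: auxiliary function }.
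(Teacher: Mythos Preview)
Your proposal is correct and follows essentially the same approach as the paper: you verify (1), (2), (4) by combining the analyticity and jump relations of the $\lambda_j$ from Lemma~\ref{thm: prop of lambda}(a) with the RH data for $M^{(3)}$, and you obtain (3) by expanding $\Lambda(z)$ via Lemma~\ref{thm: prop of lambda}(c), identifying the surviving $z^{-1/2}$ contribution after conjugation by $B(z)A$, and cancelling it with the prefactor $N$. The paper's proof is organized identically, with the same block computation and the same use of the constants $\ell$ and $D$.
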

\begin{proof}
The explicit formulas of $J_{M^{(4)}}$ follow from \eqref{def:JM4} and
condition (a) in Lemma \ref{thm: prop of lambda}.

To establish the large $z$ behavior of $M^{(4)}$ shown in item
(3), we first observe from condition (c) in Lemma \ref{thm: prop of lambda} and \eqref{def:wtilpsi} that
\begin{align*}
e^{|a|^3\left(\lambda_1(z)-\wtil\psi(-z)+\tilde tz-\ell\right)}&
=1-2D|a|^3(-z)^{-1/2}+\mathcal O (z^{-1}),
\\
e^{|a|^3\left(\lambda_2(z)-\wtil\psi(z)-\tilde tz-\ell\right)}&
=1-2D|a|^3z^{-1/2}+\mathcal O (z^{-1}),
\\
e^{|a|^3\left(\lambda_3(z)+\wtil\psi(-z)+\tilde tz-\ell\right)}&
=1+2D|a|^3(-z)^{-1/2}+\mathcal O (z^{-1}),
\\
e^{|a|^3\left(\lambda_4(z)+\wtil\psi(z)-\tilde tz-\ell\right)}&
=1+2D|a|^3z^{-1/2}+\mathcal O (z^{-1}),
\end{align*}
as $z \to \infty$. Hence, from the asymptotic behavior of $M^{(3)}$
stated in item $(3)$ of Lemma~\ref{rhp for M3}, it is readily seen
that
\begin{multline} \label{M3 diag lamda }
   e^{-|a|^3 \ell}M^{(4)}(z)\diag \left( e^{|a|^3\lambda_1(z)},e^{|a|^3 \lambda_2(z)},e^{|a|^3\lambda_3(z)},e^{|a|^3 \lambda_4(z)}\right)
     =\left(I+\mathcal O (z^{-1})\right)B(z)A  \\
       \times   \left( I +   2D|a|^3\diag\left(-(-z)^{-1/2},-z^{-1/2},(-z)^{-1/2},   z^{-1/2} \right) +\diag(\mathcal O(z^{-1})) \right),
\end{multline}
as $z\to\infty$, where we recall $A$ and $B(z)$ are defined in \eqref{eq: A}--\eqref{eq: B}. Note that
\begin{multline*}
   B(z)A \left( I + 2D|a|^3\diag\left(-(-z)^{-1/2},-z^{-1/2},(-z)^{-1/2},   z^{-1/2} \right) +\diag(\mathcal O(z^{-1})) \right) \\
       =\left(I+2iD|a|^3E_{3,1}-2iD|a|^3E_{4,2} + \mathcal O (z^{-1})\right) B(z)A.
\end{multline*}
This, together with \eqref{eq:defM(4)} and \eqref{M3 diag lamda },
implies \eqref{eq:asy of M4}.
\end{proof}

\subsection{Estimate of $J_{M^{(4)}}$ on $\Sigma_{M^{(4)}}$}
Let us have a closer look at the jump matrices $J_{M^{(4)}}$ defined
in \eqref{def:JM4}. The jump matrix is constant on $(-\infty,0)$ and
on $(0,\infty)$. On $(-ic,0)\cup(0,ic)$ there is a nonconstant
$(4,4)$-entry. However, this entry is exponentially small as $a \to
-\infty$. Moreover, the decay is uniform for $z$ bounded away from
the branch points $\pm ic^*$, where $c^*=\lim_{a \to
-\infty}c=3\sqrt{3}$; see \eqref{eq: c} and \eqref{eq:asy of gamma}.
Also on the other parts of $\Sigma_{M^{(4)}}$ the nonzero
off-diagonal entries of the jump matrices turn out to be
exponentially small for large $|a|$. Again the decay is uniform if
we exclude small disks around $0$ and $\pm ic^*$.

The preceding statements can be seen as a corollary of the following
estimates. We denote with $D(z_0,\delta)$ the fixed open disk
centered at $z_0$ with small radius $\delta>0$ and let $\partial
D(z_0,\delta)$ stand for its boundary. Recall also that we fixed the
angles $\varphi_1=\pi/4$ and $\varphi_2=\pi/3$.

\begin{lemma}[Estimates for $\lambda_j$ on $\Sigma_{M^{(4)}}$]
\label{lem: estimates on Sigma 4} \textrm{}
\begin{enumerate}
\item[\rm (a)] There exist constants $c_1,c_2>0$ such that
\begin{align}
    \Re \left(\lambda_1(z)-\lambda_2(z) \right) &\leq - c_1 |z|^{2/3},&& \text{for } z\in(\Gamma_1 \cup \Gamma_9 )\setminus D(0,\delta),\\
    \Re \left(\lambda_1(z)-\lambda_3(z) \right) &\leq - c_2 |z|^{2/3},&& \text{for } z\in(\Gamma_1 \cup \Gamma_9 )\setminus D(0,\delta),\\
    \Re \left(\lambda_2(z)-\lambda_1(z) \right) &\leq - c_1 |z|^{2/3},&& \text{for } z\in(\Gamma_4 \cup \Gamma_6 )\setminus D(0,\delta),\\
    \Re \left(\lambda_2(z)-\lambda_4(z) \right) &\leq - c_2 |z|^{2/3},&& \text{for } z\in(\Gamma_4 \cup \Gamma_6 )\setminus D(0,\delta),
\end{align}
for $|a|$ large enough.
\item[\rm (b)]
There exists a constant $c_3>0$ such that
\begin{align*}
    \Re \left(\lambda_4(z)-\lambda_3(z) \right) &\leq - c_3 |z|^{3/2},&& \text{for } z\in\wtil \Gamma_2 \cup \wtil \Gamma_8 \setminus D(ic^*,\delta)\\
    \Re \left(\lambda_3(z)-\lambda_4(z) \right) &\leq - c_3 |z|^{3/2},&& \text{for } z\in\wtil \Gamma_3 \cup \wtil \Gamma_7 \setminus D(ic^*,\delta),
 \end{align*}
if $|a|$ is large enough.
\item[\rm (c)]There exists a constant $c_4>0$ such that
\[   \Re \left(\lambda_{4,+}(z)-\lambda_{4,-}(z) \right) \leq - c_4,  \]
for $z \in [i(-c^*+\delta),0)\cup(0,i(c^*-\delta)]$ if $|a|$ is
sufficiently large.
\end{enumerate}
\end{lemma}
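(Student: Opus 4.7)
The plan is to establish all three estimates first for the limiting functions $\lambda_j^*$ with $\gamma^*=4^{-4/3}$, and then transfer them to $\lambda_j$ for $|a|$ sufficiently large via Lemma \ref{lem:est of convergence}. Since $|\lambda_j(z)-\lambda_j^*(z)|\leq \varrho|a|^{-3/2}\max(1,|z|^{3/2})$, the error is dominated by the leading behavior of the main terms on each contour segment once $|a|$ is large, so one can take the constants $c_i$ in the lemma to be, say, half of the corresponding constants obtained for $\lambda_j^*$. Throughout I will exploit the symmetries $\lambda_j(\overline z)=\overline{\lambda_j(z)}$, $\lambda_1(-z)=\lambda_2(z)$, and $\lambda_3(-z)=\lambda_4(z)$ from Lemma \ref{thm: prop of lambda}(b) to reduce the number of rays and differences that have to be treated explicitly.

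For part (a), I focus on $\Gamma_1$ parametrized by $z=re^{i\pi/4}$ and obtain the other three rays by symmetry. Using Lemma \ref{thm: prop of lambda}(c) together with $\lambda_1^*(z)=\lambda_2^*(-z)$, the leading behavior at infinity is $\Re(\lambda_1^*-\lambda_2^*)\sim-\tfrac{2}{3}r^{3/2}\bigl(\cos(3\pi/8)+\cos(\pi/8)\bigr)<0$, which dominates $r^{2/3}$. Near the origin, Lemma \ref{thm: prop of lambda}(d) combined with $G^*(0)=0$ (a consequence of $(\gamma^*)^{3/2}=1/16$) gives $\Re(\lambda_1^*-\lambda_2^*)\sim -\tfrac{3}{2}H^*(0)\,r^{4/3}$, and for $r\geq\delta$ this is bounded above by $-\tfrac{3}{2}H^*(0)\,\delta^{2/3}r^{2/3}$. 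A parallel computation using $\lambda_3^*(z)=\lambda_4^*(-z)$ handles $\Re(\lambda_1^*-\lambda_3^*)$ on $\Gamma_1$. A compactness argument on $\Gamma_1\cap\{\delta\leq|z|\leq R\}$ then completes the estimate for $\lambda_j^*$.

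For parts (b) and (c), I focus on $\wtil\Gamma_2=\{ic+re^{i\pi/3}:r\geq 0\}$ and on the upper cut $(0,ic)$, the remaining pieces following by symmetry. The branch structure of $(z-ic)^{3/2}$ across $[-ic,ic]$ together with Lemma \ref{thm: prop of lambda}(e) gives $\lambda_4-\lambda_3=-2\wtil H(z)(z-ic)^{3/2}+O((z-ic)^{5/2})$ in a half-neighborhood of $ic$. On $\wtil\Gamma_2$, substituting $z-ic^*=re^{i\pi/3}$ and using $\wtil H^*(ic^*)=-\tfrac{2}{3^{7/4}}e^{i3\pi/4}$ yields $\Re(\lambda_4^*-\lambda_3^*)\sim -\tfrac{2\sqrt 2}{3^{7/4}}r^{3/2}$; at infinity the expansion from Lemma \ref{thm: prop of lambda}(c) gives $\Re(\lambda_4^*-\lambda_3^*)\sim -c\,|z|^{3/2}$ for some $c>0$. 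For part (c), the jump $\lambda_{4,+}-\lambda_{4,-}=-2\wtil H(iy)(c-y)^{3/2}e^{i\pi/4}$ on $(0,ic)$ has real part $-\tfrac{4}{3^{7/4}}(c^*-y)^{3/2}$ in the limit, which is $\leq -\tfrac{4}{3^{7/4}}\delta^{3/2}$ at $y=c^*-\delta$; near $y=0^+$, Lemma \ref{thm: prop of lambda}(d) applied on the two sides of the cut (quadrants II and I) gives $\lambda_{4,+}^*-\lambda_{4,-}^*\to-L^*(0)=-\tfrac{16}{3}$.

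The main obstacle is to exclude zeros of the relevant real part in the interior of each contour segment, since the asymptotic analysis only controls the endpoints. I expect to handle this using the explicit $w$-parametrization from \eqref{eq: alg eq w}, which expresses each $\lambda_j^*$ as a rational function of $w_j$ and allows the sign of $\Re(\lambda_j^*-\lambda_k^*)$ along each contour segment to be analyzed algebraically; combined with the endpoint signs above this forces uniform negativity on the whole segment. Once the estimates for $\lambda_j^*$ are secured, the transfer to $\lambda_j$ is routine: for $\delta\leq|z|\leq R$ the $O(|a|^{-3/2})$ error is below the main term's bound once $|a|$ is large, while for $|z|\to\infty$ the error $O(|a|^{-3/2}|z|^{3/2})$ is dominated by half of the leading $|z|^{3/2}$ contribution.
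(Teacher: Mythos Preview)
Your approach is essentially the same as the paper's: reduce to the limiting functions $\lambda_j^*$ via Lemma~\ref{lem:est of convergence}, exploit the symmetries of Lemma~\ref{thm: prop of lambda}(b) to cut down to one ray per case, control the large-$|z|$ and small-$|z|$ ends by the asymptotic expansions in Lemma~\ref{thm: prop of lambda}(c)--(e), and then transfer back to $\lambda_j$ for large $|a|$. Your endpoint computations are correct, including the sign in part~(c) (the $+$ side of $(0,ic)$ is the quadrant~II side, giving $\lambda_{4,+}^*-\lambda_{4,-}^*\to -L^*(0)$).

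The one place you and the paper diverge is the treatment of the interior of each contour segment, which you correctly identify as the crux. The paper does not carry out an algebraic argument here; it simply appeals to the numerical plots of $\Re\lambda_j^*$ along $\Gamma_1$, $\wtil\Gamma_2$, and $(-ic^*,ic^*)$ (Figures~\ref{fig: estimates 1} and~\ref{fig: estimates 2}) as evidence that the required inequalities hold on the bounded pieces. Your proposed route via the rational $w$-parametrization is in principle more rigorous, but be aware that the paper itself does not supply such an argument---so if you go that way you are doing more than the original, and if you are content to match the paper's level of rigor, citing the plots suffices.
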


\begin{proof}
Let us start with claim (a). It is sufficient to prove the first two
estimates on $\Gamma_1$, since the remaining estimates follow from
these ones by the symmetry conditions (b) in Lemma \ref{thm: prop of
lambda}. Moreover, due to Lemma \ref{lem:est of convergence} and the
triangle inequality it suffices to prove these estimates for the
critical $\lambda$-functions $\lambda_j^*$, $j=1,2,3,4$.

\begin{figure}[t]
\centering
\begin{overpic}[scale=0.3]{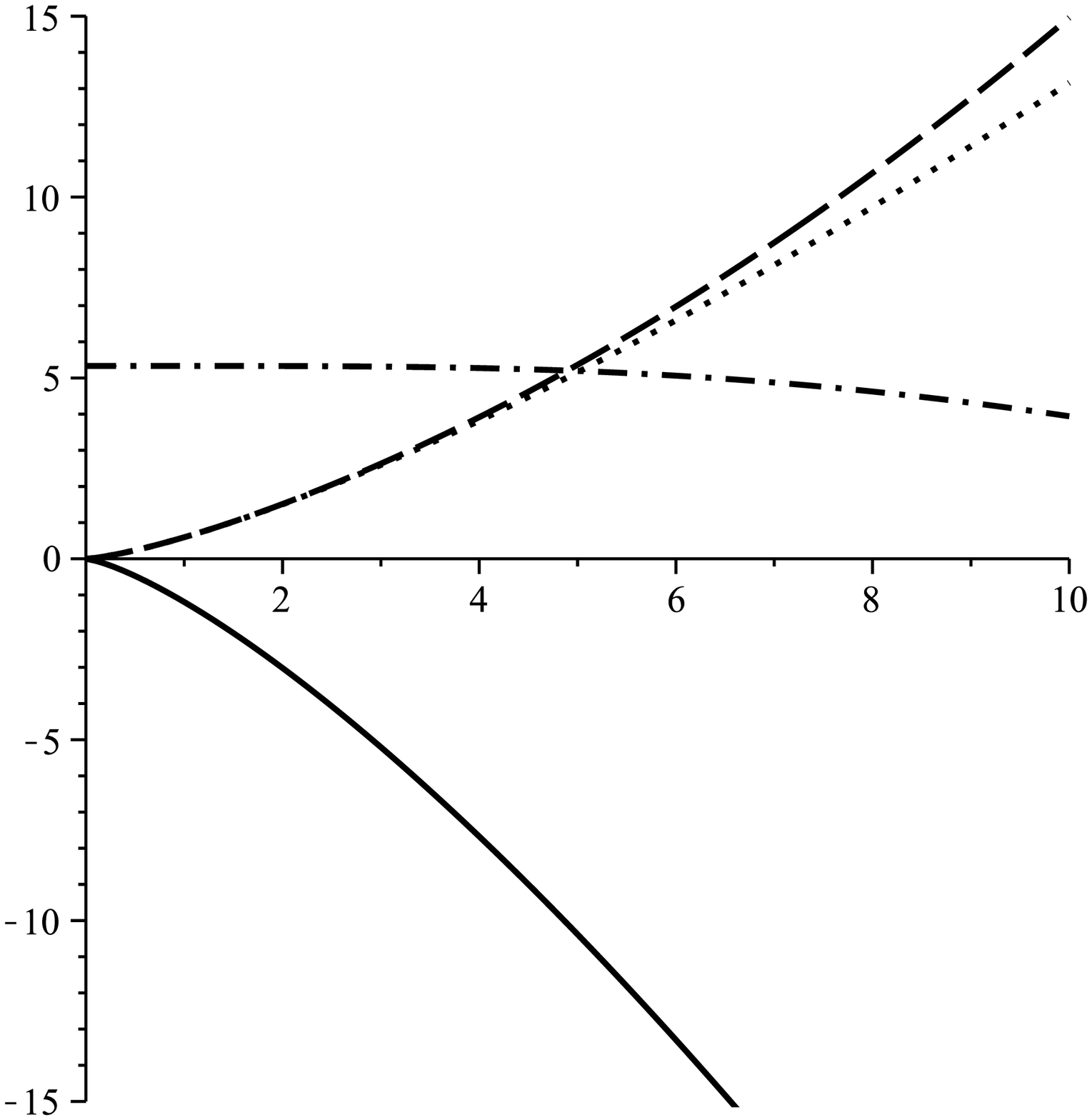}\small{\put(50,-6){(a)}}
\end{overpic}\hspace{8mm}
\begin{overpic}
[scale=0.3]{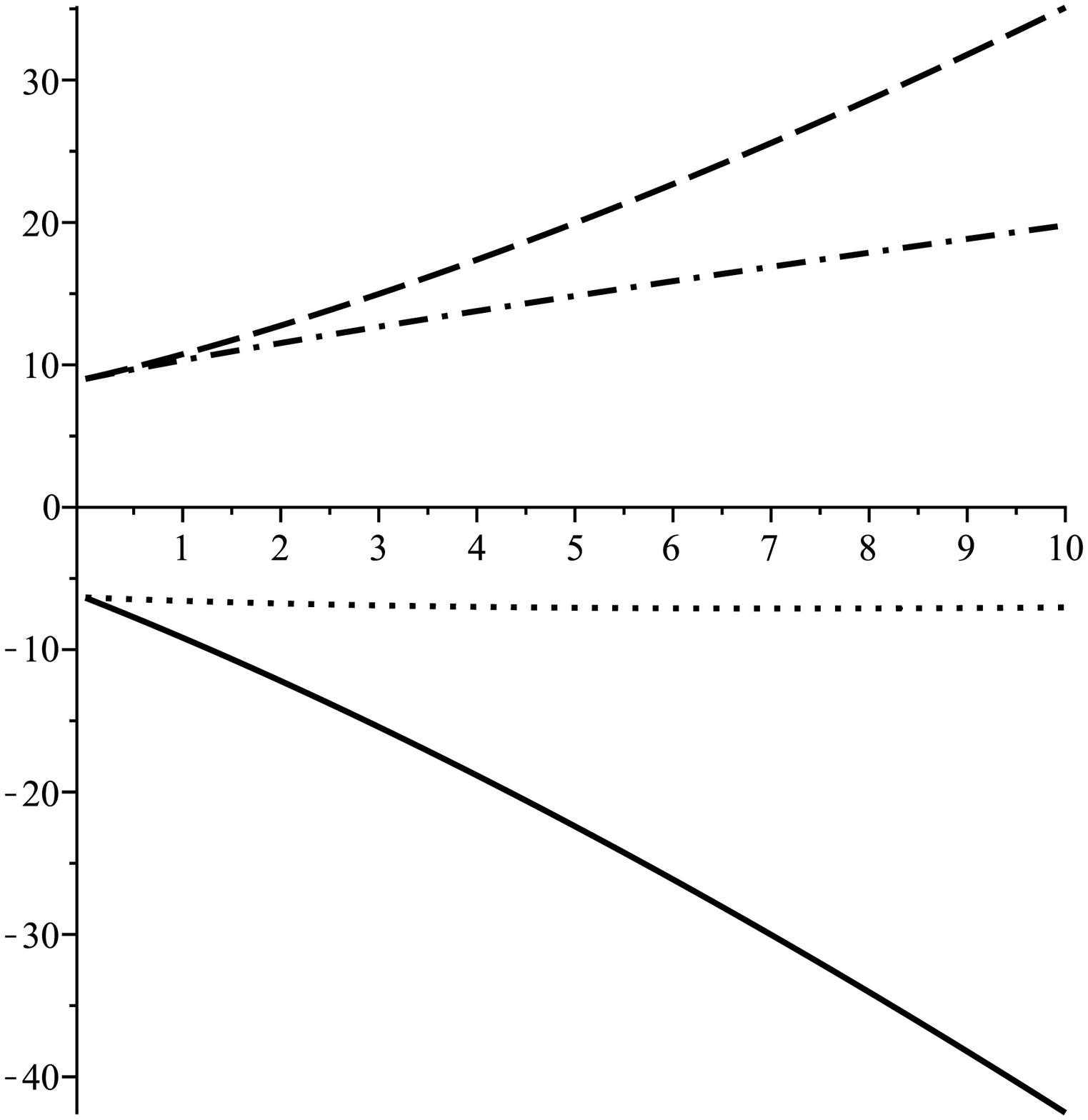}\small{\put(50,-6){(b)}}
\end{overpic}
\caption{Plots of $\Re \lambda_j^*(z)$ for $j=1$ (solid), $j=2$
(dotted), $j=3$ (dashed), and $j=4$ (dashdotted). In picture (a), $z
\in \Gamma_1$, i.e. $z=e^{\pi i/4}x$ with $x$ on the horizontal
axis. In picture (b), $z \in \wtil \Gamma_2$, i.e. $z=ic^*+e^{\pi
i/3}x$ with $x$ on the horizontal axis.} \label{fig: estimates 1}
\end{figure}

Note that $\Gamma_1=[0,\exp{i\pi/4}\infty)$. For large values of $z
\in \Gamma_1$, the estimates follow from the asymptotics of
$\lambda^*_j$, which can be obtained by taking $a\to -\infty$ in
Lemma \ref{thm: prop of lambda}(c). Using an extra argument one can
extend the estimates to $\Gamma_1 \setminus D(0,\delta)$. This is
illustrated in Figure \ref{fig: estimates 1}(a). This plots shows
the values of $\Re \lambda_j^*(z)$, $j=1,2,3,4$, for $z \in
\Gamma_1$. It is clearly seen that $\Re
(\lambda_1^*(z)-\lambda_j^*(z))<0$, $j=2,3$, for $z \in \Gamma_1
\setminus D(0,\delta)$. This completes the proof of (a).

The proof of (b) is analogous and is based on the plot in Figure
\ref{fig: estimates 1}(b). We recall that $\wtil
\Gamma_2=ic^*+[0,\exp{i\pi/3}\infty)$. Also the proof of (c) is
analogous and supported by Figure \ref{fig: estimates 2}.
\end{proof}

\begin{figure}[t]
\centering
\includegraphics[scale=0.3]{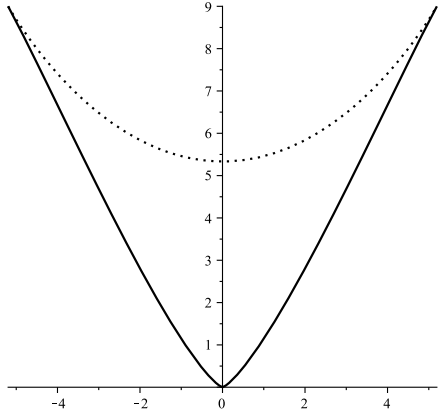}
\caption{Plots of $\lambda_{4,+}^*(z)$ (solid) and
$\lambda_{4,-}^*(z)$ (dashed) for $z \in (-ic^*,ic^*)$. Here $z=ix$
with $x$ on the horizontal axis.} \label{fig: estimates 2}
\end{figure}

Lemma \ref{lem: estimates on Sigma 4} has the following immediate corollary.

\begin{corollary} [Estimates of $J_{M^{(4)}}$ on $\Sigma_{M^{(4)}}$]\label{cor:JM4estimates}
\textrm{}
    \begin{itemize}
    \item[\rm (a)]
    There is a constant $c_1 > 0$ such that
    \[ J_{M^{(4)}}(z) = I + O\left(e^{-c_1 |a|^3 |z|^{2/3}}\right) \qquad \text{ as } a \to -\infty, \]
     uniformly for $z \in
\Sigma_{M^{(4)}}\setminus\left(\R \cup [-ic^*,ic^*]\cup D(0,\delta)
\cup D(\pm ic^*,\delta) \right)$.
    \item[\rm (b)]
    There is a constant $c_2 > 0$ such that
    \[ J_{M^{(4)}}(z) = I + O\left(e^{-c_2 |a|^3 }\right) \qquad \text{ as } a \to -\infty, \]
    uniformly for  $z \in [i(-c^*+\delta),i(c^*-\delta)]\setminus \{0\}$.
\end{itemize}
\end{corollary}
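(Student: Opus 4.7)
The corollary is an immediate consequence of the explicit form of $J_{M^{(4)}}$ displayed in Lemma~\ref{lem:rhp for M4} together with the real-part estimates already collected in Lemma~\ref{lem: estimates on Sigma 4}. The key observation is that every $z$-dependent entry of $J_{M^{(4)}}$ on the relevant portions of $\Sigma_{M^{(4)}}$ is a scalar exponential $\pm e^{|a|^3(\lambda_i(z)-\lambda_j(z))}$, whose modulus equals $\exp\bigl(|a|^3\Re(\lambda_i(z)-\lambda_j(z))\bigr)$, so upper bounds on the real parts translate directly into the claimed bounds on $J_{M^{(4)}}-I$.

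For part (a) I would walk through the rays one at a time. On $\Gamma_1$ the only nontrivial entries of $J_{M^{(4)}}-I$ are $-e^{|a|^3(\lambda_1-\lambda_2)}$ in position $(2,1)$ and $e^{|a|^3(\lambda_1-\lambda_3)}$ in position $(3,1)$; these are precisely the combinations whose real parts are bounded by $-c_1|z|^{2/3}$ in Lemma~\ref{lem: estimates on Sigma 4}(a), giving $|J_{M^{(4)}}(z)-I|\le C e^{-c_1|a|^3|z|^{2/3}}$. The rays $\Gamma_4,\Gamma_6,\Gamma_9$ are handled identically via the three remaining estimates in the same lemma. On the shifted rays $\wtil\Gamma_k$, $k\in\{2,3,7,8\}$, the single nontrivial entry is $\pm e^{|a|^3(\lambda_4-\lambda_3)}$ or $\pm e^{|a|^3(\lambda_3-\lambda_4)}$, and Lemma~\ref{lem: estimates on Sigma 4}(b) supplies the stronger bound $|J_{M^{(4)}}(z)-I|\le C e^{-c_3|a|^3|z|^{3/2}}$. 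Since $|z|$ is bounded away from zero on $\wtil\Gamma_k\setminus D(\pm ic^*,\delta)$, there the inequality $|z|^{3/2}\ge c'|z|^{2/3}$ holds, and after possibly shrinking the constant the two families of estimates combine into the single uniform statement claimed in (a).

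For part (b), on $(-ic,0)\cup(0,ic)$ the only $z$-dependent entry of $J_{M^{(4)}}$ is the $(4,4)$-entry $e^{|a|^3(\lambda_{4,+}(z)-\lambda_{4,-}(z))}$. Lemma~\ref{lem: estimates on Sigma 4}(c) bounds its real part by $-c_4$ uniformly on $[i(-c^*+\delta),i(c^*-\delta)]\setminus\{0\}$, hence its modulus by $e^{-c_4|a|^3}$, and the estimate follows with $c_2=c_4$. For $|a|$ large one has $c>c^*-\delta$, so the indicated segment lies inside $(-ic,ic)$, which justifies using the stated jump formula there.

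There is no substantive obstacle; the argument is a bookkeeping step. The only minor subtlety is matching the polynomial weights in the exponents: the estimate on $\wtil\Gamma_k$ is of order $|z|^{3/2}$ while the estimate on $\Gamma_k$ is of order $|z|^{2/3}$, and one must pass to a common form. This is harmless because $|z|$ is uniformly bounded below on the excluded regions, so one may replace $|z|^{3/2}$ by a constant multiple of $|z|^{2/3}$ without loss.
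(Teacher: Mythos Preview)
Your proposal is correct and matches the paper's approach: the paper simply declares the corollary an immediate consequence of Lemma~\ref{lem: estimates on Sigma 4}, and your argument is exactly the natural unpacking of that claim, handling each ray via the corresponding real-part estimate and absorbing the $|z|^{3/2}$ bound on the shifted rays $\wtil\Gamma_k$ into the stated $|z|^{2/3}$ form. Your observation in part~(b) that only the $(4,4)$-entry is $z$-dependent is precisely the content being asserted; the remaining constant $2\times2$ block is what the global parametrix jump \eqref{eq: jump global parametrix 3} later cancels.
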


\subsection{Construction of the global parametrix $M^{(\infty)}$}

If we suppress all entries of the jump matrices for $M^{(4)}$ that
exponentially decay as $a \to -\infty$, we are led to the following
RH problem for the global parametrix $M^{(\infty)}$.

\begin{rhp}[Global parametrix]\label{rhp:global para}
We look for a $4 \times 4$ matrix valued function $M^{(\infty)}$ that satisfies
\begin{itemize}
\item[\rm (1)]
$M^{(\infty)}$ is analytic in $\C \setminus \Sigma_{M^{(\infty)}}$,
where the contour $\Sigma_{M^{(\infty)}}$ consists of the real line
oriented from left to right and the purely imaginary interval $[-ic,ic]$ oriented from bottom to top.
\item[\rm (2)] For $z\in\Sigma_{M^{(\infty)}}$, we have
\begin{align}
M^{(\infty)}_+(x) &= M^{(\infty)}_-(x) \begin{pmatrix}
0 & 0 & 1 & 0\\
0 & 1 & 0 & 0\\
-1 & 0 & 0 & 0\\
0 & 0 & 0 & 1
\end{pmatrix}, && \text{for $x \in (0,\infty)$,}  \label{eq: jump global parametrix 1}\\
M^{(\infty)}_+(x) &= M^{(\infty)}_-(x)\begin{pmatrix}
1 & 0 & 0 & 0\\
0 & 0 & 0 & 1\\
0 & 0 & 1 & 0\\
0 & -1 & 0 & 0
\end{pmatrix}, && \text{for $x \in (-\infty,0)$,} \label{eq: jump global parametrix 2}\\
M^{(\infty)}_+(z) &= M^{(\infty)}_-(z)
\begin{pmatrix}
1 & 0 & 0 & 0\\
0 & 1 & 0 & 0\\
0 & 0 & 0 & 1\\
0 & 0 & -1 & 0
\end{pmatrix},
&& \text{for $z \in (-ic,0)\cup (0,ic)$.} \label{eq: jump global parametrix 3}
\end{align}
\item[\rm (3)] As $z \to \infty$, the following asymptotic formula holds
\begin{equation}\label{eq:asy of M infty}
M^{(\infty)}(z)=\left(I+\mathcal O \left(z^{-1}\right) \right) B(z)A.
\end{equation}
\item[\rm (4)] We have
\begin{equation}\label{eq:asy of Minfty 0}
M^{(\infty)}(z)=\mathcal {O} \left(z^{-1/3}\right), \qquad \text{as }z\to 0,
\end{equation}
and
\begin{equation}\label{eq:asy of Minfty ic}
M^{(\infty)}(z)=\mathcal {O} \left((z\mp ic)^{-1/4}\right), \qquad \text{as }z\to \pm ic.
\end{equation}
\end{itemize}
\end{rhp}

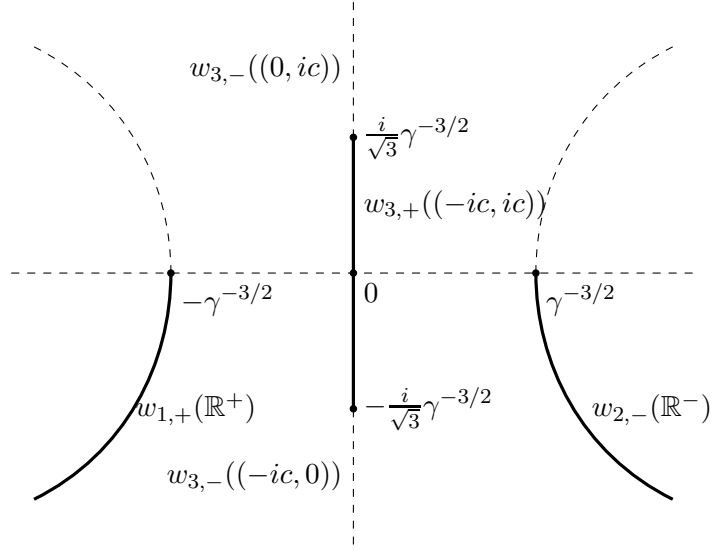
\begin{figure}[t]
\centering
\begin{tikzpicture}[scale=0.6]
\draw[dashed] (-7.5,0)--(7.5,0) ;
\draw[dashed] (-4,0).. controls (-4,2) and  (-5,4)..(-7,5)
      (4,0).. controls (4,2) and  (5,4)..(7,5);
\draw[very thick] (-4,0).. controls (-4,-2) and  (-5,-4)..(-7,-5)
           (4,0).. controls (4,-2) and  (5,-4)..(7,-5)
           (0,-3)--(0,3);
\draw[dashed]    (0,-6)--(0,-3)
                  (0,3)--(0,6);
\filldraw    (0,0) circle (2pt) (0,3) circle (2pt) (0,-3) circle (2pt) (-4,0) circle (2pt) (4,0) circle (2pt);
\draw  (0,0) node[below right]{$0$}
      (0,3) node[right]{$\frac{i}{\sqrt{3}}\gamma^{-3/2}$}
      (0,-3) node[right]{$-\frac{i}{\sqrt{3}}\gamma^{-3/2}$}
      (-4,0) node[below right]{$-\gamma^{-3/2}$}
      (4,0) node[below right]{$\gamma^{-3/2}$};
\draw (-5,-3)node[right]{$w_{1,+}(\mathbb{R}^+)$}
      (5,-3)node[right]{$w_{2,-}(\mathbb{R}^-)$}
      (0,1.5)node[right]{$w_{3,+}((-ic,ic))$}
      (0,-4.5)node[left]{$w_{3,-}((-ic,0))$}
      (0,4.5)node[left]{$w_{3,-}((0,ic))$};
\end{tikzpicture}
\caption{Image of the map $w:\mathcal R\mapsto
\overline{\mathbb{C}}$. The solid and dashed lines are the images of
the cuts in the Riemann surface $\mathcal {R}$ under this map. The
solid lines represent the branch cut of the function $\kappa(w)$
defined in \eqref{def:kappa w}.} \label{fig: image of w}
\end{figure}

An explicit solution of this RH problem can be built out of the
$w$-functions introduced in Lemma \ref{lemma: w}.

\begin{lemma}
With the functions $w_j$, $j=1,2,3,4$, defined in Lemma \ref{lemma: w},
the global parametrix $M^{(\infty)}$ is explicitly given by
\begin{equation}\label{eq: global para}
M^{(\infty)}(z) =\begin{pmatrix}
F_1(w_1(z)) & F_1(w_2(z)) & F_1(w_3(z)) & F_1(w_4(z))\\
F_2(w_1(z)) & F_2(w_2(z)) & F_2(w_3(z)) & F_2(w_4(z))\\
F_3(w_1(z)) & F_3(w_2(z)) & F_3(w_3(z)) & F_3(w_4(z))\\
F_4(w_1(z)) & F_4(w_2(z)) & F_4(w_3(z)) & F_4(w_4(z))
\end{pmatrix},
\end{equation}
where
\begin{align*}
F_1(w)&=-\frac{\gamma^{15/8}}{\sqrt{6}\kappa(w)}(w+\gamma^{-3/2})(w-\gamma^{-3/2})^2, & F_3(w)&=-\frac{i\gamma^{-3/8}}{2\sqrt{6}\kappa(w)}(w-\gamma^{-3/2})^2,\\
F_2(w)&=\frac{\gamma^{15/8}}{\sqrt{6}\kappa(w)}(w+\gamma^{-3/2})^2(w-\gamma^{-3/2}),  &
F_4(w)&=\frac{i\gamma^{-3/8}}{2\sqrt{6}\kappa(w)}(w+\gamma^{-3/2})^2,
\end{align*}
and
\begin{equation}\label{def:kappa w}
\kappa(w)=\left((w^2-\gamma^{-3})(w^2+\frac{1}{3}\gamma^{-3})\right)^{1/2}
\end{equation}
is defined in the $w$-plane with branch cut along
$w_{1,+}(\mathbb{R}^+)\cup w_{3,+}((-ic,ic))\cup w_{2,+}(\mathbb{R}^-)$ (see Figure \ref{fig: image of w}) such that
\[   \kappa(w)=w^2(1+\mathcal{O}(1/w)),    \]
as $w\to +\infty$.
\end{lemma}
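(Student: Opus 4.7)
The plan is to verify the four conditions of RH problem \ref{rhp:global para} directly for the ansatz \eqref{eq: global para}, using the fact that column $j$ of $M^{(\infty)}$ is built from the single branch $w_j$ so that the Riemann surface structure controls both analyticity and jumps. For analyticity, I would observe that the branch cut of $\kappa$ in the $w$-plane coincides with boundary arcs between the regions $w_j(\mathcal R_j)$ (see Figure \ref{fig: image of w}), so $\kappa\circ w_j$ is analytic on $\mathcal R_j$, and hence each $F_i(w_j(z))$ is meromorphic on $\mathcal R_j$ with possible singularities only at the preimages $z=\infty,0,\pm ic$ of the zeros of $\kappa$; these are handled by items (3) and (4).

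For the jumps I would use Lemma \ref{lemma: w}(a). On $(0,\infty)$ the identities $w_{1,+}=w_{3,-}$ and $w_{1,-}=w_{3,+}$ force columns $1$ and $3$ of $M^{(\infty)}$ to be exchanged, while columns $2$ and $4$ remain unchanged since $w_2,w_4$ are analytic across $\R^+$; the two shared curves $w_{1,\pm}(\R^+)$ play asymmetric roles because only $w_{1,+}(\R^+)$ is a branch cut of $\kappa$, producing the $-1$ at position $(3,1)$ and the $+1$ at position $(1,3)$ of \eqref{eq: jump global parametrix 1}. The jumps \eqref{eq: jump global parametrix 2} and \eqref{eq: jump global parametrix 3} are verified analogously using the pairings $w_{2,\pm}=w_{4,\mp}$ on $\R^-$ with branch cut $w_{2,+}(\R^-)$, and $w_{3,\pm}=w_{4,\mp}$ on $(-ic,0)\cup(0,ic)$ with branch cut $w_{3,+}((-ic,ic))$.

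For the large-$z$ behavior I would insert the expansions of $w_2,w_4$ near $\gamma^{-3/2}$ from Lemma \ref{lemma: w}(c) (and their reflections via (b) for $w_1,w_3$ near $-\gamma^{-3/2}$) into each $F_i$. Since $\kappa(w)$ has a square-root branch point at each of $\pm\gamma^{-3/2}$ while the polynomial numerators of the $F_i$ have a simple or double zero there, $F_i(w)$ behaves as a half-integer power of $w\mp\gamma^{-3/2}$; combined with $w_j(z)\mp\gamma^{-3/2}=O((\pm z)^{-1/2})$ this produces precisely the powers $(\pm z)^{\pm 1/4}$ appearing in $B(z)$, and matching leading constants against the entries of $A$ gives \eqref{eq:asy of M infty}. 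The bound \eqref{eq:asy of Minfty 0} at the origin follows from Lemma \ref{lemma: w}(d): on the sheets where $w_j(z)\sim z^{-1/3}$ the ratio of the cubic numerator and $\kappa(w)\sim w^2$ grows at worst like $w\sim z^{-1/3}$. Finally, \eqref{eq:asy of Minfty ic} follows from Lemma \ref{lemma: w}(e): $w_3(z)-\tfrac{i}{\sqrt 3}\gamma^{-3/2}=O((z-ic)^{1/2})$ together with the fact that $\tfrac{i}{\sqrt 3}\gamma^{-3/2}$ is a branch point of $\kappa$ yields $1/\kappa(w_3(z))=O((z-ic)^{-1/4})$.

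The main obstacle is the bookkeeping of branches and phases in the asymptotic analysis at infinity: one must follow how $w_j(z)$ approaches $\pm\gamma^{-3/2}$ along each direction of $z$, choose the branches of $(w\mp\gamma^{-3/2})^{1/2}$ and of $(\pm z)^{\pm 1/2}$ consistently with the specified branch cut of $\kappa$ and the normalization $\kappa(w)=w^2(1+O(1/w))$ at infinity, and verify that the signs and the factor $1/\sqrt 2$ combine to reproduce the rotation $A$ exactly. Once this is pinned down, the remaining work is routine algebraic substitution in the explicit formulas for $F_1,\ldots,F_4$.
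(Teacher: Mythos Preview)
Your proposal is correct and follows essentially the same approach as the paper, which merely records that the jump condition follows from Lemma~\ref{lemma: w}(a) and that the asymptotics \eqref{eq:asy of M infty}, \eqref{eq:asy of Minfty 0}, \eqref{eq:asy of Minfty ic} follow by straightforward calculation from Lemma~\ref{lemma: w}(c), (d), (e). Your write-up is considerably more detailed than the paper's two-sentence sketch, and your identification of the branch and phase bookkeeping at infinity as the main obstacle is exactly the content the paper sweeps into the phrase ``straightforward calculation.''
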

\begin{proof}
The jump condition follows from item (a) in Lemma \ref{lemma: w} and \eqref{eq: global para}. A straightforward calculation with the aid of item (c), (d), and (e) in Lemma \ref{lemma: w} leads to the asymptotic behavior of
$M^{(\infty)}(z)$ in \eqref{eq:asy of M infty}, \eqref{eq:asy of Minfty 0} and \eqref{eq:asy of Minfty ic}.
\end{proof}


The global parametrix $M^{(\infty)}$ is a good approximation for
$M^{(4)}$ for values of $z$ bounded away from $0$ and $\pm i c^*$.
We will construct local parametrices near these points in the next
two sections.

\subsection{Construction of the local parametrices around $\pm ic^*$}

In the disks $D(\pm ic^*,\delta)$ around $\pm ic^*$ we build local
parametrices $M^{(\pm ic)}$ respectively. The idea is to construct
$M^{(\pm ic)}$ such that it makes exactly the same jumps in the disk
$D(\pm ic^*,\delta)$ as $M^{(4)}$, and satisfies
\begin{equation} \label{eq: matching around ic}
M^{(\pm ic)}(z)=\left(I+\mathcal O(|a|^{-3})\right)M^{(\infty)}(z),
\quad \text{uniformly for $z\in\partial D(\mp ic^*,\delta)\setminus
\Sigma_{M^{(4)}}$,}
\end{equation}
as $a \to -\infty$. This construction can be done using Airy
functions and their derivatives. As this construction is standard
(cf. \cite{Dei,DKMVZ1}) and the explicit formulas are irrelevant to
the proofs of our main theorems, we will omit it here.

\subsection{Construction of the local parametrix around the origin}

In this section we construct a local parametrix $M^{(0)}$ in the
disk $D(0,\delta)$ based on the Pearcey parametrix; see RH problem
\ref{rhp: Pearcey} below. First, we observe that the $(4,4)$-entry
of the jump for $M^{(4)}$ on the imaginary axis is exponentially
small, uniformly in a neighborhood of the origin; see item (b) of
Corollary \ref{cor:JM4estimates}. Therefore, we may ignore this
entry in the construction of the local parametrix around the origin
which then has to satisfy the following conditions.

\begin{rhp}[Local parametrix around the origin]\label{rhp:local origin}
We look for a $4 \times 4$ matrix valued function $M^{(0)}:D(0,\delta) \setminus \Sigma_{M^{(4)}} \to \C$ satisfying
\begin{itemize}
\item[\rm (1)] $M^{(0)}$ is analytic on $D(0,\delta) \setminus \Sigma_{M^{(4)}}$, where $\Sigma_{M^{(4)}}=\Sigma_{M^{(3)}}$ is defined in \eqref{def:sigma 3}.
\item[\rm(2)] For $z \in D(0,\delta) \cap \Sigma_{M^{(4)}}$, we have
\[   M^{(0)}_+(z)=M^{(0)}_-(z)J^{(0)}(z),  \]
with $J^{(0)}$ as indicated in Figure \ref{fig: contour local
parametrix}.
\item[\rm(3)]
As $a \to -\infty$, the asymptotic formula
\begin{equation}\label{eq: matching condition origin}
M^{(0)}(z)=\left(I+\mathcal O(|a|^{-3/2})\right)M^{(\infty)}(z)
\end{equation}
uniformly holds for $z\in \partial D(0,\delta) \setminus \Sigma_{M^{(4)}}$.
\item[\rm(4)] $M^{(0)}$ is bounded near the origin.
\end{itemize}
\end{rhp}

\begin{figure}[t]
\centering
\begin{tikzpicture}[scale=0.6]
\begin{scope}[decoration={markings,mark= at position 0.5 with {\arrow{stealth}}}]
\draw[postaction={decorate}]
(0,0)--(3,0) node[right]{$\begin{pmatrix} 0&0&1&0 \\ 0&1&0&0 \\ -1&0&0&0 \\ 0&0&0&1 \end{pmatrix}$};
\draw[postaction={decorate}]
(0,0)--(2.121,2.121) node[above right]
{$\begin{pmatrix} 1&0&0&0 \\ -e^{|a|^3(\lambda_1-\lambda_2)}&1&0&0 \\ e^{|a|^3(\lambda_1-\lambda_3)}&0&1&0 \\ 0&0&0&1 \end{pmatrix}$};
\draw[postaction={decorate}]
(0,0)--(2.121,-2.121) node[below right]
{$\begin{pmatrix} 1&0&0&0 \\ e^{|a|^3(\lambda_1-\lambda_2)}&1&0&0 \\ e^{|a|^3(\lambda_1-\lambda_3)}&0&1&0 \\ 0&0&0&1 \end{pmatrix}$};
\draw[postaction={decorate}]
(-3,0)node[left]
{$\begin{pmatrix} 1&0&0&0 \\ 0&0&0&1\\ 0 &0&1&0 \\ 0&-1&0 &0 \end{pmatrix}$}--(0,0) ;
\draw[postaction={decorate}]
(-2.121,2.121)node[above left]
{$\begin{pmatrix} 1&-e^{|a|^3(\lambda_2-\lambda_1)}&0&0 \\ 0&1&0&0\\ 0 &0&1&0 \\ 0&e^{|a|^3(\lambda_2-\lambda_4)} &0&1 \end{pmatrix}$}--(0,0) ;
\draw[postaction={decorate}]
(-2.121,-2.121)node[below left]
{$\begin{pmatrix} 1&e^{|a|^3(\lambda_2-\lambda_1)}&0&0 \\ 0&1&0&0\\ 0 &0&1&0 \\ 0&e^{|a|^3(\lambda_2-\lambda_4)} &0&1 \end{pmatrix}$}--(0,0) ;
\draw[postaction={decorate}]
(0,0)--(0,3) node[above]
{$\begin{pmatrix} 1&0&0&0\\ 0&1&0&0\\0&0&0&1\\0&0&-1&0 \end{pmatrix}$}(0,3);
\draw (0,0) circle (3);
\end{scope}
\begin{scope}[decoration={markings,mark= at position 0.5 with {\arrowreversed{stealth}}}]
\draw[postaction={decorate}]      (0,0)--(0,-3)node[below]
{$\begin{pmatrix} 1&0&0&0\\ 0&1&0&0\\0&0&0&1\\0&0&-1&0 \end{pmatrix}$}(0,-3);
\end{scope}
\draw  (0,0) node[below right]{$0$};
\filldraw    (0,0) circle (2pt);
\end{tikzpicture}
\caption{The jump contour and jump matrices $J^{(0)}$ for the local parametrix $M^{(0)}$ around zero. On the circle the matching condition \eqref{eq: matching condition origin} is imposed.}
\label{fig: contour local parametrix}
\end{figure}

\begin{figure}[t]
\centering
\begin{tikzpicture}[scale=0.6]
\begin{scope}[decoration={markings,mark= at position 0.5 with {\arrow{stealth}}}]
\draw[postaction={decorate}]
(0,0)--(3,0) node[right]{$\begin{pmatrix} 0&0&1&0 \\ 0&1&0&0 \\ -1&0&0&0 \\ 0&0&0&1 \end{pmatrix}$};
\draw[postaction={decorate}]
(0,0)--(2.121,2.121) node[above right]
{$\begin{pmatrix} 1&0&0&0 \\ -1&1&0&0 \\ 1&0&1&0 \\ 0&0&0&1 \end{pmatrix}$};
\draw[postaction={decorate}]
(0,0)--(2.121,-2.121) node[below right]
{$\begin{pmatrix} 1&0&0&0 \\ 1&1&0&0 \\ 1&0&1&0 \\ 0&0&0&1 \end{pmatrix}$};
\draw[postaction={decorate}]
(-3,0)node[left]
{$\begin{pmatrix} 1&0&0&0 \\ 0&0&1&0\\ 0 &-1&0&0 \\ 0&0&0 &1 \end{pmatrix}$}--(0,0) ;
\draw[postaction={decorate}]
(-2.121,2.121)node[above left]
{$\begin{pmatrix} 1&-1&0&0 \\ 0&1&0&0\\ 0 &1&1&0 \\ 0&0 &0&1 \end{pmatrix}$}--(0,0) ;
\draw[postaction={decorate}]
(-2.121,-2.121)node[below left]
{$\begin{pmatrix} 1&1&0&0 \\ 0&1&0&0\\ 0 &1&1&0 \\ 0&0 &0&1 \end{pmatrix}$}--(0,0) ;
\end{scope}
\draw (0,0) circle (3);
\draw  (0,0) node[below ]{$0$};
\filldraw    (0,0) circle (2pt);
\end{tikzpicture}
\caption{The jump contour and jump matrices for $\wtil M^{(0)}$ around zero.}
\label{fig: contour local parametrix til}
\end{figure}
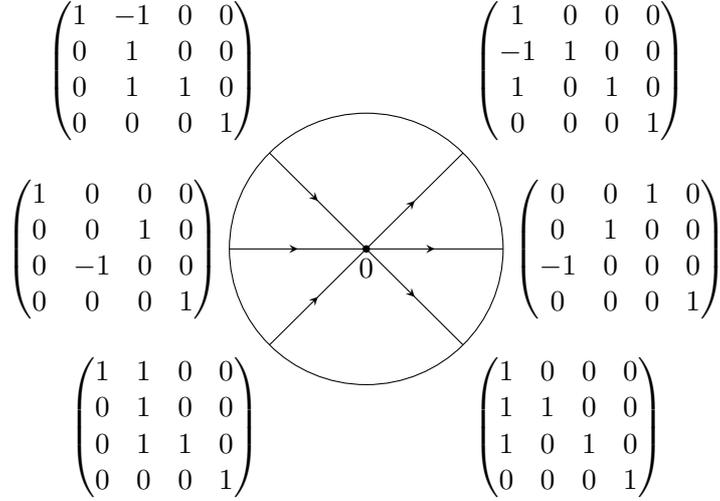

In a first step to solve the RH problem \ref{rhp:local origin}, we
make a transformation that reduces the jumps of $M^{(0)}$ to
constant matrices. Moreover we want the jumps on the imaginary axis
to disappear. To this end we define
\begin{equation}\label{def:wtil M0}
\wtil M^{(0)}(z)= M^{(0)}(z) \diag \left( e^{-|a|^3\lambda_1(z)},e^{-|a|^3\lambda_2(z)},e^{-|a|^3\lambda_3(z)},e^{-|a|^3\lambda_4(z)}\right) C_\pm^{-1},
\end{equation}
for $\pm \Re z >0$, where $C_{\pm}$ are two constant matrices given
by
\begin{equation} \label{eq: Cpm}
C_+=I_4, \qquad~~ C_-=\diag\left(1,1,\begin{pmatrix} 0 & 1 \\ -1 & 0
\end{pmatrix}\right).
\end{equation}
A straightforward check shows that the jumps for $\wtil M^{(0)}$, as
indicated in Figure \ref{fig: contour local parametrix til}, are
indeed constant. Furthermore, all jump matrices have the block form
$\begin{pmatrix}Q&0\\0&1\end{pmatrix}$, where $Q$ is a non-trivial
$3\times 3$ constant matrix. Combined with the asymptotic behavior
of the $\lambda$-functions around the origin (see items (b) and (d)
in Lemma \ref{thm: prop of lambda}), this leads us to propose the
following model RH problem.

\begin{rhp}[Model RH problem for the local parametrix around the origin] \label{rhp:modelorigin}
We look for a $3 \times 3$ matrix valued function $\Phi(\cdot;\rho)$ depending on a parameter $\rho\in\mathbb{R}$ that satisfies the following conditions.
\begin{itemize}
\item[\rm (1)] $\Phi$ is defined and analytic on $\C \setminus \Sigma_{\Phi}$,
where the contour $\Sigma_{\Phi}$ consists of $6$ semi-infinite rays
as shown in Figure \ref{fig: model rhp}.
\item[\rm (2)] For $z\in\Sigma_\Phi$, we have
\[
\Phi_+(z)= \Phi_-(z) J_{ \Phi}(z),
\]
where the jump matrix $J_{ \Phi}$ is constant on each ray and
specified in Figure \ref{fig: model rhp}.
\item[\rm (3)] As $z \to \infty$ and $\pm\Im z >0$, we have
\begin{equation}\label{eq:asy of mod RHP}
\Phi(z)=\sqrt{\frac{1}{6\pi}}i e^{-\rho^2/8} \diag
\left(z^{1/3},1,z^{-1/3} \right) L_{\pm} \left(I+\mathcal
O(z^{-2/3}) \right)e^{ \Theta(z;\rho)},
\end{equation}
where $L_{\pm}$ are constant matrices
\begin{align}\label{eq:def Lpm}
L_{+}=
\begin{pmatrix}
-\omega & -1 & -\omega^2 \\ -1&-1&-1 \\ -\omega^2 & -1& -\omega
\end{pmatrix}, \qquad
L_{-}=
\begin{pmatrix}
-\omega^2 & -1& \omega \\ -1&-1&1 \\ -\omega & -1 &  \omega^2
\end{pmatrix},
\end{align}
and $\Theta(z;\rho)$ is given by
\begin{equation*}
\Theta(z;\rho)= \begin{cases}
\diag (-\theta_2(z;\rho),-\theta_3(z;\rho),-\theta_1(z;\rho)), & \text{for $\Im z >0$,} \\
\diag (-\theta_1(z;\rho),-\theta_3(z;\rho),-\theta_2(z;\rho)), & \text{for $\Im z <0$,} \\
\end{cases}
\end{equation*}
with
\begin{equation}\label{def:theta k}
\theta_k(z;\rho)=\frac34 \omega^{2k}z^{4/3}+\frac{\rho}{2}\omega^kz^{2/3},\quad k=1,2,3.
\end{equation}
\item[\rm (4)] $\Phi(z)$ is bounded near the origin.
\end{itemize}
\end{rhp}
At this moment it is not clear where the exact formulation of the
asymptotics in item $(3)$ comes from. This follows a posteriori from
the solution of the RH problem given in Lemma \ref{lemma: solution
model rhp}.

\begin{figure}[t]
\centering
\begin{tikzpicture}[scale=0.6]
\begin{scope}[decoration={markings,mark= at position 0.5 with {\arrow{stealth}}}]
\draw[postaction={decorate}]
(0,0)--(4,0) node[right]{$\left(\begin{matrix} 0&0&1 \\ 0&1&0 \\ -1&0&0 \end{matrix}\right)$};
\draw[postaction={decorate}]
(0,0)--(3,3) node[right]{$\left(\begin{matrix} 1&0&0 \\ -1&1&0 \\ 1&0&1 \end{matrix}\right)$};
\draw[postaction={decorate}]
(0,0)--(3,-3) node[right]{$\left(\begin{matrix} 1&0&0 \\ 1&1&0 \\ 1&0&1 \end{matrix}\right)$};
\draw[postaction={decorate}]
(-4,0)node[left]{$\left(\begin{matrix} 1&0&0 \\ 0&0&1 \\ 0&-1&0 \end{matrix}\right)$}--(0,0) ;
\draw[postaction={decorate}]
(-3,3)node[left]{$\left(\begin{matrix} 1&-1&0 \\ 0&1&0 \\ 0&1&1 \end{matrix}\right)$}--(0,0) ;
\draw[postaction={decorate}]
(-3,-3)node[left]{$\left(\begin{matrix} 1&1&0 \\ 0&1&0 \\ 0&1&1 \end{matrix}\right)$}--(0,0) ;
\end{scope}
\draw  (0,0) node[below]{$0$};
\filldraw    (0,0) circle (2pt);
\draw (5mm,0mm) arc(0:45:5mm);
\draw (1.3,0) node[above]{$\pi/4$};
\end{tikzpicture}
\caption{The jump contour $\Sigma_{\Phi}$ and jump matrices for the model RH problem $ \Phi$.}
\label{fig: model rhp}
\end{figure}
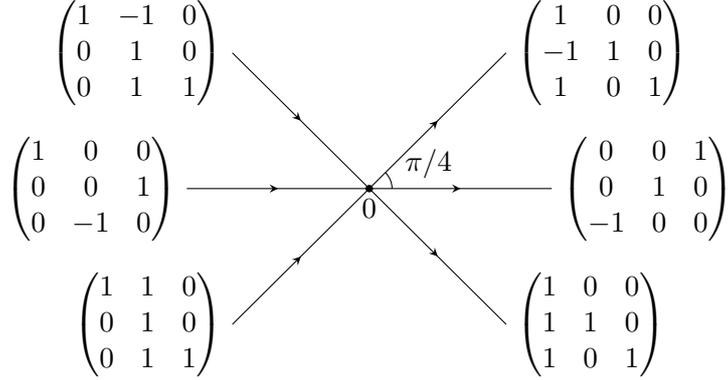

Note that $\Phi$ depends on the parameter $\rho$ through the
asymptotics \eqref{eq:asy of mod RHP}. It turns out that we can
solve this model RH problem using the Pearcey parametrix that was introduced by Bleher and Kuijlaars in the context of random matrices with external source \cite{BK3}. The Pearcey parametrix is the unique solution to the following RH problem.

\begin{rhp}[Pearcey parametrix] \label{rhp: Pearcey}
\textrm{}
\begin{itemize}
\item[\rm (1)] $\Phi^{\Pe}(\cdot;\rho)$ is a $3\times 3$ matrix valued function depending a parameter
$\rho\in\mathbb{R}$, defined and analytic on $\C \setminus
\Sigma_\Phi$.
\item[\rm (2)] For $z\in \Sigma_\Phi^{\Pe}$, the function $\Phi^{\Pe}$ has the jump
\[
\Phi^{\Pe}_+(z)=\Phi^{\Pe}_-(z) J_\Phi^{\Pe}(z),
\]
where the jump matrix $J_\Phi^{\Pe}$ is constant on each ray and specified in Figure \ref{fig: contour Pearcey}.
\item[\rm (3)]  As $z \to \infty$ and $\pm\Im z >0$, we have
\[
\Phi^{\Pe}(z)=
\sqrt{\frac{2\pi}{3}}i e^{\rho^2/8}\diag \left(z^{-1/3},1,z^{1/3} \right)L_{\pm}^{\Pe}
\left(I+\mathcal O(z^{-2/3}) \right)e^{\Theta(z;\rho)},
\]
where $L_{\pm}^{\Pe}$ are constant matrices
\begin{align}
L_{+}^{\Pe}=
\begin{pmatrix}
-\omega & \omega^2 & 1 \\ -1&1&1 \\ -\omega^2 & \omega & 1
\end{pmatrix},
\qquad
L_{-}^{\Pe}=
\begin{pmatrix}
\omega^2 & \omega & 1 \\ 1&1&1 \\ \omega & \omega^2 & 1
\end{pmatrix},
\end{align}
and $\Theta(z)$ given by
\begin{align}
\Theta(z;\rho)&= \begin{cases}
\diag (\theta_1(z;\rho),\theta_2(z;\rho),\theta_3(z;\rho)), & \text{for $\Im z >0$,} \\
\diag (\theta_2(z;\rho),\theta_1(z;\rho),\theta_3(z;\rho)), & \text{for $\Im z <0$,} \\
\end{cases}
\end{align}
and where $\theta_k(z;\rho)$ is defined in \eqref{def:theta k}.
\item[\rm (4)] $\Phi^{\Pe}(z)$ is bounded near the origin.
\end{itemize}
\end{rhp}

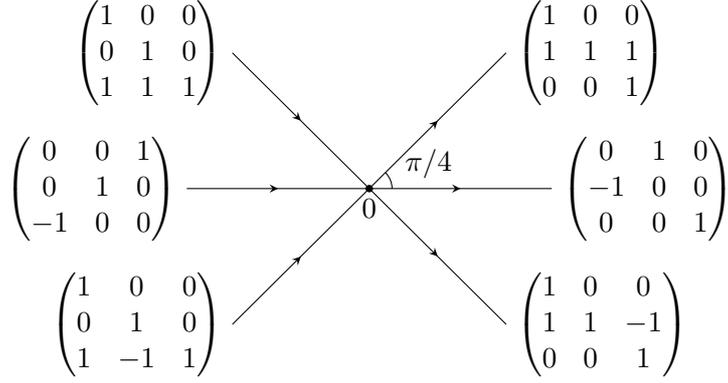
\begin{figure}[t]
\centering
\begin{tikzpicture}[scale=0.6]
\begin{scope}[decoration={markings,mark= at position 0.5 with {\arrow{stealth}}}]
\draw[postaction={decorate}]
(0,0)--(4,0) node[right]{$\left(\begin{matrix} 0&1&0 \\ -1&0&0 \\ 0&0&1 \end{matrix}\right)$};
\draw[postaction={decorate}]
(0,0)--(3,3) node[right]{$\left(\begin{matrix} 1&0&0 \\ 1&1&1 \\ 0&0&1 \end{matrix}\right)$};
\draw[postaction={decorate}]
(0,0)--(3,-3) node[right]{$\left(\begin{matrix} 1&0&0 \\ 1&1&-1 \\ 0&0&1 \end{matrix}\right)$};
\draw[postaction={decorate}]
(-4,0)node[left]{$\left(\begin{matrix} 0&0&1 \\ 0&1&0 \\ -1&0&0 \end{matrix}\right)$}--(0,0) ;
\draw[postaction={decorate}]
(-3,3)node[left]{$\left(\begin{matrix} 1&0&0 \\ 0&1&0 \\ 1&1&1 \end{matrix}\right)$}--(0,0) ;
\draw[postaction={decorate}]
(-3,-3)node[left]{$\left(\begin{matrix} 1&0&0 \\ 0&1&0 \\ 1&-1&1 \end{matrix}\right)$}--(0,0) ;
\end{scope}
\draw  (0,0) node[below]{$0$};
\filldraw    (0,0) circle (2pt);
\draw (5mm,0mm) arc(0:45:5mm);
\draw (1.3,0) node[above]{$\pi/4$};
\end{tikzpicture}
\caption{The jump contour and jump matrices for the Pearcey parametrix $\Phi^{\Pe}$.}
\label{fig: contour Pearcey}
\end{figure}
The above RH problem has a unique solution $\Phi^{\Pe}$, given in terms of solutions of the Pearcey differential equation
\begin{equation*}
y'''(z)-\rho p'(z)-zy(z)=0.
\end{equation*}
Since the exact formula of $\Phi^{\Pe}$ is not relevant for our purposes, we decide not to write it down and refer to \cite[Section
8.1]{BK3} for details.

The next lemma states that RH problem \ref{rhp:modelorigin} can be
solved in terms of the Pearcey parametrix.

\begin{lemma} \label{lemma: solution model rhp}
Let $\Phi^{\Pe}$ be the Pearcey parametrix as given in RH problem
\ref{rhp: Pearcey}, the solution of RH problem
\ref{rhp:modelorigin} is given by
\begin{equation}\label{eq:sol of modelRHP}
\Phi(z) = \Phi^{\Pe}(z)^{-T}\begin{pmatrix} 0&0&-1 \\ 1&0&0 \\ 0&1&0 \end{pmatrix},
\end{equation}
where the superscript $^{-T}$ stands for inverse transpose.
\end{lemma}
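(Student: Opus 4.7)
The plan is to define $\widetilde{\Phi}(z) := \Phi^{\Pe}(z)^{-T} C$ with $C = \begin{pmatrix} 0&0&-1 \\ 1&0&0 \\ 0&1&0 \end{pmatrix}$ and verify that $\widetilde{\Phi}$ satisfies each of the four items of RH problem~\ref{rhp:modelorigin}. Analyticity on $\C \setminus \Sigma_\Phi$ is immediate from analyticity of $\Phi^{\Pe}$ on $\C \setminus \Sigma_\Phi^{\Pe} = \C\setminus \Sigma_\Phi$, together with the fact that $\Phi^{\Pe}$ is invertible (its jumps have determinant $1$, so $\det \Phi^{\Pe}$ is entire and matches the nonvanishing determinant dictated by the asymptotics). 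Boundedness at the origin likewise follows from boundedness of $\Phi^{\Pe}$ and invertibility.

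For the jumps, I would use the general fact that if $M_+ = M_- J$, then $(M^{-T})_+ = (M^{-T})_- J^{-T}$, and that right-multiplication by a constant invertible $C$ has no effect on jumps, so
\begin{equation*}
\widetilde{\Phi}_+ = \widetilde{\Phi}_- \left(C^{-1} (J_\Phi^{\Pe})^{-T} C\right).
\end{equation*}
Thus the verification reduces to checking the identity $J_\Phi = C^{-1} (J_\Phi^{\Pe})^{-T} C$ on each of the six rays of $\Sigma_\Phi = \Sigma_\Phi^{\Pe}$. This is a routine computation; one multiplies out each of the six $3\times 3$ matrices in Figure~\ref{fig: contour Pearcey}, takes inverse transpose, conjugates by $C$, and matches against Figure~\ref{fig: model rhp}. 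The action of conjugation by $C$ is just the cyclic permutation on the standard basis $e_1 \mapsto e_2 \mapsto e_3 \mapsto -e_1$, so each check is mechanical.

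For the asymptotics at infinity, I would take inverse transpose of the expansion of $\Phi^{\Pe}$. Writing $C_1 = \sqrt{2\pi/3}\,i\,e^{\rho^2/8}$, one obtains
\begin{equation*}
\Phi^{\Pe}(z)^{-T} = C_1^{-1}\diag(z^{1/3},1,z^{-1/3})(L_\pm^{\Pe})^{-T}\bigl(I + \mathcal O(z^{-2/3})\bigr) e^{-\Theta^{\Pe}(z;\rho)}.
\end{equation*}
The crucial observation is the algebraic identity $e^{-\Theta^{\Pe}(z;\rho)}\,C = C\,e^{\Theta(z;\rho)}$, valid separately in the upper and lower half planes, which is a direct consequence of the way the cyclic permutation encoded in $C$ permutes the diagonal entries of $\Theta^{\Pe}$ into the diagonal entries of $\Theta$. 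Moving $C$ through the $I + \mathcal O(z^{-2/3})$ factor (which is allowed since $C$ is constant and the error term conjugated by $C$ is still $\mathcal O(z^{-2/3})$), the expansion of $\widetilde{\Phi}$ takes the form required by \eqref{eq:asy of mod RHP}, with the relations
\begin{equation*}
\tfrac{1}{\sqrt{6\pi}}\, i\, e^{-\rho^2/8}\,L_\pm \;=\; C_1^{-1}\,(L_\pm^{\Pe})^{-T}\,C,
\qquad\text{i.e.,}\qquad L_\pm = -3\,(L_\pm^{\Pe})^{-T}\,C,
\end{equation*}
the $-3$ arising from $C_1^{-1}/(\tfrac{1}{\sqrt{6\pi}}i\,e^{-\rho^2/8}) = -3$.

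The one point requiring genuine checking, rather than purely formal manipulation, is the last displayed identity: one must compute $(L_\pm^{\Pe})^{-1}$ explicitly and verify that $-3\,(L_\pm^{\Pe})^{-T}\,C$ reproduces the matrices $L_\pm$ in \eqref{eq:def Lpm}. I expect this to be the main obstacle, but not a conceptual one: the matrices $L_\pm^{\Pe}$ are essentially Vandermonde in the cube roots of unity $1,\omega,\omega^2$, so their inverses and transposes have a clean form in terms of $\omega$, and the conjugation by $C$ implements the cyclic reshuffle that turns the Pearcey ``ordering'' into the one used in RH problem~\ref{rhp:modelorigin}. Once this verification is done, the four items of RH problem~\ref{rhp:modelorigin} are all in place, establishing \eqref{eq:sol of modelRHP}.
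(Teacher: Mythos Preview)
Your proposal is correct and follows exactly the approach the paper takes: the paper's proof is simply ``It is straightforward to check that the right-hand side of \eqref{eq:sol of modelRHP} satisfies RH problem~\ref{rhp:modelorigin},'' and you have supplied the details of that straightforward check. The only minor wording slip is the phrase ``right-multiplication by a constant invertible $C$ has no effect on jumps'' --- of course it conjugates the jump, as your subsequent formula $\widetilde{\Phi}_+ = \widetilde{\Phi}_-\,C^{-1}(J_\Phi^{\Pe})^{-T}C$ correctly records --- but this is harmless since the computation you carry out is the right one.
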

\begin{proof}
It is straightforward to check that the right-hand side of
\eqref{eq:sol of modelRHP} satisfies RH problem
\ref{rhp:modelorigin}. Then \eqref{eq:sol of modelRHP}
follows from the uniqueness of the solution to this RH problem.
\end{proof}

Now, we are ready to construct the local parametrix $M^{(0)}$. In
view of \eqref{def:wtil M0}, we look for a parametrix in the form
\begin{multline} \label{eq: local parametrix}
M^{(0)}(z)=E_0(z)
\begin{pmatrix}  \Phi(f(z);\rho(z)) & 0 \\ 0 & \phi(z)
\end{pmatrix}C_\pm \\
\times
\diag \left( e^{|a|^3\lambda_1(z)},e^{|a|^3\lambda_2(z)},e^{|a|^3\lambda_3(z)},e^{|a|^3\lambda_4(z)}\right),
\qquad \pm \Re z>0,
\end{multline}
where $E_0(z)$ is an analytic prefactor, $f(z)$ is a conformal map
near the origin, $\phi(z)$ is an analytic function and $C_\pm$ are
the constant matrices shown in \eqref{eq: Cpm}. The parameter $\rho$
will also be chosen to be dependent on $z$.

Recall the functions $G(z)$, $H(z),$ and $K(z)$ introduced in
item $(d)$ of Lemma \ref{thm: prop of lambda}. We first define
\begin{equation*}
f(z)=f(z;a)=|a|^{9/4}\left( \frac43 H(z) \right)^{3/4}z, \qquad z \in D(0,\delta).
\end{equation*}
By \eqref{eq:def H0}, it is readily seen that
\begin{equation} \label{eq: conf f}
f(z)=|a|^{9/4}\left( \frac43 H(0) \right)^{3/4}z+\mathcal O(z^3),
\end{equation}
as $z \to 0$. Hence, $f$ is a conformal map in a neighborhood of the
origin. We then modify the contours $\left(\Sigma_{M^{(4)}}\setminus
i\mathbb{R}\right)\cap D(0,\delta)$ if necessary in such a way that
$f$ maps them into $\Sigma_{\Phi}$. Next we set
\begin{equation} \label{eq: def of rho}
\rho(z)=\rho(z;a)=2|a|^3\frac{G(z)z^{2/3}}{f(z)^{2/3}},\qquad z\in
D(0,\delta),
\end{equation} and
\[
\phi(z)=\phi(z;a)=\left\{
           \begin{array}{ll}
             e^{-|a|^3\lambda_{4}(z)}, & \qquad\hbox{$\Re z>0$,} \\
             e^{-|a|^3\lambda_{3}(z)}, & \qquad \hbox{$\Re z<0$.}
           \end{array}
         \right.
\]
Also the function $\phi(z)$ is analytic on $D(0,\delta)$, which
follows from \eqref{eq:lamda3 4 on ic}. With these definitions of
$f(z)$ and $\rho(z)$, it is easily seen from items (b) and (d) in
Lemma \ref{thm: prop of lambda} that
\begin{equation}\label{eq:theta 123}
\begin{aligned}
\theta_{1}(f(z);\rho(z))&=
      \begin{cases}
        |a|^3\left(\lambda_3(z)-K(z)z^2\right), &\quad \hbox{for $z\in D(0,\delta)\cap I$,} \\
        |a|^3\left(\lambda_4(z)-K(z)z^2\right), &\quad \hbox{for $z\in D(0,\delta)\cap II$,} \\
        |a|^3\left(\lambda_1(z)-K(z)z^2\right), &\quad \hbox{for $z\in D(0,\delta)\cap (III\cup IV)$,}
      \end{cases} \\
\theta_{2}(f(z);\rho(z))&=
      \begin{cases}
        |a|^3\left(\lambda_1(z)-K(z)z^2\right), &\quad \hbox{for $z\in D(0,\delta)\cap (I\cup II)$,} \\
        |a|^3\left(\lambda_4(z)-K(z)z^2\right), &\quad \hbox{for $z\in D(0,\delta)\cap III$,} \\
        |a|^3\left(\lambda_3(z)-K(z)z^2\right), &\quad \hbox{for $z\in D(0,\delta)\cap IV$,}
      \end{cases}  \\
\theta_{3}(f(z);\rho(z))&=
|a|^3\left(\lambda_2(z)-K(z)z^2\right), \qquad\quad \hbox{for $z\in D(0,\delta)\setminus (-\delta,0]$.}
\end{aligned}
\end{equation}
Finally, the matching condition \eqref{eq: matching condition
origin} leads us to the definition of the prefactor $E_0(z)$ as
\begin{multline}\label{eq: E0}
E_0(z)=M^{(\infty)}(z)(C_\pm)^{-1} \\
\times
\begin{pmatrix} -i\sqrt{6\pi}e^{\rho(z)^2/8-|a|^3K(z)z^2} (L_{\pm})^{-1}
\diag \left( f(z)^{-1/3},1,f(z)^{1/3}\right) & 0
\\ 0 & 1
\end{pmatrix}, \quad \pm \Re z>0,
\end{multline}
where $L_{\pm}$ is defined in \eqref{eq:def Lpm}. We then have the following lemma.
\begin{lemma}
$E_0(z)$ has an analytic continuation to $D(0,\delta)$.
\end{lemma}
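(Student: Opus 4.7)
The plan is a direct verification: the matrix $E_0(z)$ is defined piecewise on $\{\Re z>0\}$ and $\{\Re z<0\}$ and the individual ingredients ($M^{(\infty)}$, $L_\pm^{-1}$, $f(z)^{\pm 1/3}$, $C_\pm$, the scalar exponential prefactor) each have their own singularities or cuts inside $D(0,\delta)$. I would show (a) these cuts cancel across every component of $\Sigma_{M^{(\infty)}}\cap D(0,\delta)$ together with the two half-lines where $C_\pm$ and $L_\pm$ switch, and (b) the apparent power singularities at $z=0$ are removable.

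For (a), there are essentially three kinds of jumps to check. Across the two imaginary segments $i(-\delta,0)\cup i(0,\delta)$: $M^{(\infty)}$ has the jump \eqref{eq: jump global parametrix 3} acting on the third and fourth columns with a sign, while $C_\pm$ switches between $I_4$ and $I_2\oplus\bigl(\begin{smallmatrix}0&1\\-1&0\end{smallmatrix}\bigr)$, so the two right-multiplications by $C_\pm^{-1}$ in \eqref{eq: E0} precisely undo the column swap; the scalar factor $-i\sqrt{6\pi}\,e^{\rho^2/8-|a|^3Kz^2}$ and the upper-left $3\times 3$ block $L_\pm^{-1}\diag(f^{-1/3},1,f^{1/3})$ are analytic through these segments, so $E_{0,+}=E_{0,-}$. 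Across the real segments $(-\delta,0)\cup(0,\delta)$: here the jump of $M^{(\infty)}$ (given by \eqref{eq: jump global parametrix 1} and \eqref{eq: jump global parametrix 2}) permutes columns and must match the jumps that $L_\pm^{-1}\diag(f^{-1/3},1,f^{1/3})$ undergoes through (i) the branch cut of $f^{\pm1/3}$ (which lies on $(-\delta,0)$ with our choice of principal branch, since $f(z)\sim c_0z$ with $c_0>0$) and (ii) the switch $L_+\leftrightarrow L_-$ built into the asymptotics \eqref{eq:asy of mod RHP} of $\Phi$. I would carry out the algebraic identities $J_{M^{(\infty)}}\,L_\pm^{-1}\diag(\cdots)_-=L_\mp^{-1}\diag(\cdots)_+$ on each segment, using the explicit form \eqref{eq:def Lpm}, noting that $L_+$ and $L_-$ differ exactly by the permutation encoded in those jumps together with the $z^{1/3}\leftrightarrow z^{-1/3}$ exchange coming from $\omega$-factors; the $(4,4)$-block is trivially continuous because $M^{(\infty)}$'s jump on the real axis leaves the fourth column invariant.

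For (b), the $3\times 3$ block inside $E_0(z)$ contains a $\diag(f(z)^{-1/3},1,f(z)^{1/3})$, while $M^{(\infty)}(z)=\mathcal O(z^{-1/3})$ as $z\to 0$ by \eqref{eq:asy of Minfty 0}. A naive count would give $E_0(z)=\mathcal O(z^{-2/3})$. The cancellation follows because $M^{(\infty)}$ is built from the $F_j(w_k(z))$ with $w_k(z)$ having the explicit behavior in item (d) of Lemma~\ref{lemma: w}: the combinations of columns selected by the fixed rows of $L_\pm^{-1}$ match the dominant singular modes of $M^{(\infty)}$ column by column, producing $\mathcal O(z^{1/3})$ contributions which compensate the $f^{-1/3}$ blow-up and similarly leave the $f^{1/3}$ column bounded. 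Combined with (a), this makes $E_0$ bounded on a punctured neighborhood of $0$, and Riemann's removable singularity theorem gives analyticity on all of $D(0,\delta)$.

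The main obstacle is step (a) on the real segments: one must keep track simultaneously of the $L_+\to L_-$ switch, the $f^{\pm1/3}$ branch cut, and the $\omega$-factors to show that the resulting four matrix products agree. The cleanest way is to exploit the conjugation identity $L_+=L_-\Omega$ with $\Omega$ a permutation-times-cube-root matrix, and check that $\Omega\,\diag(f^{1/3},1,f^{-1/3})_-=\diag(f^{1/3},1,f^{-1/3})_+\cdot(J_{M^{(\infty)}})^{-1}$ on each half-line, reducing the verification to a finite check. All other items, including step (b), are then routine consequences of Lemmas~\ref{lemma: w} and \ref{thm: prop of lambda}.
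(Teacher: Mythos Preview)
Your proposal is correct and follows essentially the same structure as the paper's proof: verify that $E_0$ has no jump across $i\mathbb R\setminus\{0\}$ (using the $M^{(\infty)}$ jump \eqref{eq: jump global parametrix 3} against the $C_\pm$ switch), no jump across $(0,\delta)$ (using \eqref{eq: jump global parametrix 1} against the relation $L_+=L_-\left(\begin{smallmatrix}0&0&-1\\0&1&0\\1&0&0\end{smallmatrix}\right)$), and no jump across $(-\delta,0)$ (with the additional $f^{\pm1/3}$ branch cut that you correctly locate there), and then argue that the isolated singularity at $0$ is removable. The only noteworthy difference is in your step (b): you propose to exhibit an explicit column-by-column cancellation between the singular modes of $M^{(\infty)}$ and those of $\diag(f^{-1/3},1,f^{1/3})$ to obtain boundedness of $E_0$ near $0$, whereas the paper dispenses with this and simply observes that once $E_0$ is known to be analytic in the punctured disk, the crude bound $E_0(z)=\mathcal O(z^{-2/3})$ coming from \eqref{eq:asy of Minfty 0} already forces the singularity to be removable (it is weaker than a pole). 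Your route works but is more labor-intensive than necessary.
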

\begin{proof}
$E_0(z)$ is clearly analytic in $D(0,\delta) \setminus (\R \cup i\R)$.
For $z \in i\R \setminus \{0\}$ we have that $E_{0,+}(z)=E_{0,-}(z)$ as
a consequence of \eqref{eq: jump global parametrix 3} and \eqref{eq: Cpm}.
There is no jump on $\R^+$ either, because of \eqref{eq: jump global parametrix 1} and the observation
\[
 L_+ =  L_- \begin{pmatrix} 0&0&-1 \\ 0&1&0 \\ 1&0&0 \end{pmatrix}.
\]
With a bit more effort one can also check that the jump on $\R^-$ is trivial.
Hence $E_0(z)$ is analytic in the punctured disk $D(0,\delta)\setminus \{0\}$.
It follows from the behavior of the global parametrix $M^{(\infty)}$ around zero
that the singularity of $E_0(z)$ at the origin is removable.
\end{proof}

It is now straightforward to check that the jumps of
$M^{(0)}$ on $D(0,\delta) \cap \Sigma_{M^{(4)}}$ and its large $|a|$ behavior on $\partial D(0,\delta)$
(with the aid of \eqref{eq:asy of mod RHP} and \eqref{eq:theta
123}) are indeed of our required form. In summary, we have the following lemma.

\begin{lemma}
The matrix valued function $M^{(0)}$ defined in \eqref{eq: local
parametrix} satisfies conditions (1)--(4) of RH problem
\ref{rhp:local origin}.
\end{lemma}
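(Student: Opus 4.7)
The plan is to verify the four items of RH problem \ref{rhp:local origin} one by one, using the building blocks $\Phi$, $\phi$, $E_0$, $C_\pm$, and $\diag(e^{|a|^3\lambda_j(z)})$ assembled in \eqref{eq: local parametrix}. Items (1) and (4) are the quickest: analyticity of $M^{(0)}$ on $D(0,\delta)\setminus\Sigma_{M^{(4)}}$ follows from the fact that $E_0$ is analytic on the whole disk (just established), that $\phi$ is analytic on the disk because $\lambda_3$ and $\lambda_4$ agree across $(-i\delta,0)\cup(0,i\delta)$ by \eqref{eq:lamda3 4 on ic} (note $\delta<c^*$, so the cut $[-ic,ic]$ does not intrude on $D(0,\delta)$), and the remaining two factors are manifestly analytic off $\Sigma_{M^{(4)}}$ by construction of $f$ and the sheet structure of the $\lambda_j$. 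Boundedness near the origin (item (4)) then follows from the boundedness of $\Phi$ at $0$ (item (4) of RH problem \ref{rhp:modelorigin}) together with the expansions of $\lambda_j$ in Lemma \ref{thm: prop of lambda}(d), which show each $e^{|a|^3\lambda_j(z)}$ remains bounded as $z\to 0$ for fixed $a$.

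For condition (2), I would verify the jumps ray by ray. On the rays $\Gamma_1,\Gamma_4,\Gamma_6,\Gamma_9$ inside $D(0,\delta)$ the function $\Phi(f(z);\rho(z))$ has the constant jumps of Figure \ref{fig: model rhp} (pulled back through the conformal $f$); the analytic factors $E_0$ and $\phi$ commute through these trivially, and conjugating by $\diag(e^{|a|^3\lambda_j})$ (the $\lambda_j$ are all continuous on these rays) produces exactly the oscillatory entries $e^{|a|^3(\lambda_i-\lambda_j)}$ shown in Figure \ref{fig: contour local parametrix}. On $\R^+$ one has $C_+=I$ and the jump of $\Phi$ combined with the jumps $\lambda_{1,\pm}=\lambda_{3,\mp}$ from Lemma \ref{thm: prop of lambda}(a) reproduces the required constant jump; on $\R^-$ the difference between $C_-$ and $I$ rearranges the last two rows of the Pearcey-type jump into the form shown. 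Finally, on $(-i\delta,0)\cup(0,i\delta)$ the model $\Phi$ has no jump, but $C_+\neq C_-$ together with the relation $\lambda_{3,\pm}=\lambda_{4,\mp}$ from \eqref{eq:lamda3 4 on ic} produces precisely the permutation jump $I-E_{3,4}+E_{4,3}$ demanded in Figure \ref{fig: contour local parametrix}.

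For condition (3), the matching on $\partial D(0,\delta)$, I would substitute the asymptotic expansion \eqref{eq:asy of mod RHP} into \eqref{eq: local parametrix}. Since $f(z)$ grows like $|a|^{9/4}|z|$ on the boundary circle by \eqref{eq: conf f}, the error term $\mathcal O(f^{-2/3})$ becomes $\mathcal O(|a|^{-3/2})$, which is the rate claimed in \eqref{eq: matching condition origin}. The heart of the calculation is the cancellation of the diagonal exponentials: the identities \eqref{eq:theta 123} were arranged quadrant by quadrant so that, together with the choice $\phi(z)=e^{-|a|^3\lambda_{3\text{ or }4}(z)}$, the factor $e^{\Theta(f;\rho)}$ from the asymptotics of $\Phi$ cancels three of the four entries in $\diag(e^{|a|^3\lambda_j})$, leaving only the analytic remainders $e^{|a|^3K(z)z^2}$ coming from the subleading $K(z)z^2$ piece in Lemma \ref{thm: prop of lambda}(d). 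The explicit formula \eqref{eq: E0} for $E_0$ is tailored precisely to absorb the remaining prefactor $-i\sqrt{6\pi}\,e^{\rho^2/8-|a|^3K(z)z^2}(L_\pm)^{-1}\diag(f^{-1/3},1,f^{1/3})$ and produce $M^{(\infty)}(z)$ in the leading term.

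The main obstacle is not any single computation but the combinatorial bookkeeping across the four open quadrants. One must simultaneously track which of $\theta_1,\theta_2,\theta_3$ corresponds to which $\lambda_j$ in each quadrant (governed by \eqref{eq:theta 123}), match the cube-root-of-unity factors $\omega,\omega^2$ appearing in the local expansions of $\lambda_4$ in Lemma \ref{thm: prop of lambda}(d) with those in \eqref{def:theta k}, and keep the row permutations effected by $C_\pm$ consistent with the signs in $L_\pm$ and the jumps of $M^{(\infty)}$ on $\R^\pm$. Once these three ingredients are aligned, each individual check reduces to a direct matrix identity, but arriving at the alignment is where essentially all the delicacy of the proof lies.
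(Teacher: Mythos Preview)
Your proposal is correct and follows exactly the approach the paper indicates: the paper's own proof is simply the sentence ``It is now straightforward to check that the jumps of $M^{(0)}$ on $D(0,\delta)\cap\Sigma_{M^{(4)}}$ and its large $|a|$ behavior on $\partial D(0,\delta)$ (with the aid of \eqref{eq:asy of mod RHP} and \eqref{eq:theta 123}) are indeed of our required form,'' and your write-up is precisely that check carried out in detail. Two small slips worth fixing: the jump on $(-i\delta,0)\cup(0,i\delta)$ is the block permutation $\diag\bigl(1,1,\bigl(\begin{smallmatrix}0&1\\-1&0\end{smallmatrix}\bigr)\bigr)$, not $I-E_{3,4}+E_{4,3}$; and the parenthetical ``the cut $[-ic,ic]$ does not intrude on $D(0,\delta)$'' is backwards (the cut certainly passes through the disk---what you need is that the \emph{branch points} $\pm ic$ lie outside it so that \eqref{eq:lamda3 4 on ic} applies on all of $(-i\delta,i\delta)$).
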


\subsection{Final transformation: $M^{(4)}\mapsto M^{(5)}$}

Using the global parametrix $M^{(\infty)}$ and the local
parametrices $M^{(\pm ic)}$ and $M^{(0)}$, we define the fifth
transformation $M^{(4)}\mapsto M^{(5)}$ as follows
\begin{equation} \label{eq:defM(5)}
M^{(5)}(z)=\begin{cases}
M^{(4)}(z)\left( M^{(\infty)}(z) \right)^{-1} &
\text{for $z \in \C \setminus (\Sigma_{M^{(4)}}\cup D(0,\delta) \cup D(\pm ic^*,\delta))$}, \\
M^{(4)}(z)\left( M^{(0)}(z) \right)^{-1}      & \text{for $z \in D(0,\delta) \setminus \Sigma_{M^{(4)}}$}, \\
M^{(4)}(z)\left( M^{(\pm ic)}(z) \right)^{-1} & \text{for $z \in
D(\pm ic^*,\delta) \setminus \Sigma_{M^{(4)}}$}.
\end{cases}
\end{equation}

Then $M^{(5)}$ is defined and analytic outside of $\Sigma_{M^{(4)}}$
and the three disks around $0$ and $\pm ic^*$, with an analytic
continuation across those parts of $\Sigma_{M^{(4)}}$ where the
jumps of the parametrices coincide with those of $M^{(4)}$. What
remains are the jumps on a contour $\Sigma_{M^{(5)}}$ that consists
of the three circles around $0$, $\pm ic^*$, the parts of
$\Gamma_{1}$, $\wtil \Gamma_{2}$, $\wtil \Gamma_{3}$, $\Gamma_{4}$,
$\Gamma_{6}$, $\wtil \Gamma_{7}$, $\wtil \Gamma_{8}$ and
$\Gamma_{9}$ outside of the disks, and
$(i(-c^*+\delta),i(c^*-\delta))$. The circles are oriented
clockwise. Then $M^{(5)}$ satisfies the following RH problem.

\begin{rhp} \label{rhpforR} \textrm{ }
\begin{enumerate}
\item[\rm (1)] $M^{(5)}$ is defined and analytic in $\mathbb C \setminus \Sigma_{M^{(5)}}$.
\item[\rm (2)] For $z\in\Sigma_M^{(5)}$, we have
$$M^{(5)}_+(z) = M^{(5)}_-(z) J_{M^{(5)}}(z),$$
where
\begin{align*}
    J_{M^{(5)}}(z)  = \begin{cases}
            M^{(0)}(z) (M^{(\infty)}(z))^{-1} &  \textrm{for $z\in\partial D(0, \delta)$, }   \\
            M^{(\pm i c)}(z) (M^{(\infty)}(z))^{-1} & \textrm{for $z\in\partial D(\pm i c^*,\delta)$},  \\
            M^{(0)}_-(z) J_{M^{(4)}}(z) (M^{(0)}_+(z))^{-1}  &  \textrm{for $z\in(-i\delta, i\delta)$},  \\
            M^{(\infty)}_-(z) J_{M^{(4)}}(z) (M^{(\infty)}_+(z))^{-1} &\textrm{elsewhere on  $\Sigma_{M^{(5)}}$}.
            \end{cases}
\end{align*}
 \item[\rm (3)] As $z\to\infty$, we have
\begin{equation*}
        M^{(5)}(z) = I+O(1/z).
\end{equation*}
\end{enumerate}
\end{rhp}

Furthermore, the jump matrix $J_{M^{(5)}}$ tends to the identity matrix on
$\Sigma_{M^{(5)}}$ as $a \to-\infty$, both uniformly and in $L^2$-sense.
Indeed, from the matching conditions \eqref{eq: matching around ic} and
\eqref{eq: matching condition origin}, we see that
\begin{equation}
J_{M^{(5)}}(z)=I+\mathcal O (|a|^{-3/2})
\end{equation}
as $a\to -\infty$, uniformly on the circles around
$0$ and $\pm ic^*$. On the remaining parts of $\Sigma_{M^{(5)}}$, $J_{M^{(5)}}$ is uniformly
exponentially small (see Corollary \ref{cor:JM4estimates}). Then, as
in \cite{DKMVZ992,DKMVZ1}, we conclude that
\begin{equation}\label{eq:Restimate}
M^{(5)}(z)=I+O\left(\frac{1}{|a|^{3/2}(|z|+1)}\right),
\end{equation}
as $a\to -\infty$, uniformly for $z$ in the complex plane outside of $\Sigma_{M^{(5)}}$.

The estimate \eqref{eq:Restimate} is the main outcome of our
Deift-Zhou steepest descent analysis for RH problem \ref{rhp:
tacnode rhp}. We will use it to prove our main results in the next
section.

\section{Proofs of the main theorems}\label{sec:proof of thm}

\subsection{RH formula for the Pearcey kernel}
\label{sec: RH of Pearcey kernel}

Before coming to the proofs of the main theorems, we state the following lemma
giving an expression of the Pearcey kernel in terms of the Pearcey parametrix.

\begin{lemma}
Let $K^\Pe(x,y;\rho)$ be the Pearcey kernel defined in \eqref{eq:
pearcey kernel}, we have
\begin{equation} \label{eq: Pearcey kernel}
K^\Pe(x,y;\rho)=\begin{cases}
\frac{1}{2 \pi i (x-y)}  \begin{pmatrix} -1 & 1 & 0  \end{pmatrix} \Phi_+^\Pe(y;\rho)^{-1} \Phi_+^\Pe(x;\rho)  \begin{pmatrix} 1\\1\\0 \end{pmatrix}, & \text{for }x,y>0, \\
\frac{1}{2 \pi i (x-y)}  \begin{pmatrix} -1 & 0 & 1  \end{pmatrix} \Phi_+^\Pe(y;\rho)^{-1} \Phi_+^\Pe(x;\rho)  \begin{pmatrix} 1\\1\\0 \end{pmatrix}, & \text{for }x>0,y<0, \\
\frac{1}{2 \pi i (x-y)}  \begin{pmatrix} -1 & 1 & 0  \end{pmatrix} \Phi_+^\Pe(y;\rho)^{-1} \Phi_+^\Pe(x;\rho)  \begin{pmatrix} 1\\0\\1 \end{pmatrix}, & \text{for }x<0,y>0, \\
\frac{1}{2 \pi i (x-y)}  \begin{pmatrix} -1 & 0 & 1  \end{pmatrix}
\Phi_+^\Pe(y;\rho)^{-1} \Phi_+^\Pe(x;\rho)  \begin{pmatrix} 1\\0\\1
\end{pmatrix}, & \text{for }x,y<0,
\end{cases}
\end{equation}
where $\Phi^\Pe(\cdot;\rho)$ is the the unique solution of RH
problem \ref{rhp: Pearcey}.
\end{lemma}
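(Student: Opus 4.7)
The plan is to use the explicit construction of the Pearcey parametrix $\Phi^\Pe$ from \cite[Section~8.1]{BK3}, in which the columns of $\Phi^\Pe(z;\rho)$ in each sector of $\C\setminus\Sigma_\Phi^\Pe$ have the form $(\tilde p(z),\tilde p'(z),\tilde p''(z))^T$, where $\tilde p$ is a Pearcey integral of the type \eqref{eq:pearcey integral} over a specific contour. The three contours and the linear combinations used are chosen precisely so that the jumps of $\Phi^\Pe$ across $\Sigma_\Phi^\Pe$ coincide with those in Figure~\ref{fig: contour Pearcey}. The proof is thus essentially a bookkeeping exercise: translate the $\Phi^\Pe$-product on the right-hand side of \eqref{eq: Pearcey kernel} back into Pearcey integrals and match with the numerator of \eqref{eq: pearcey kernel}.

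First I would treat the case $x,y>0$. The column vector $\Phi_+^\Pe(x;\rho)\begin{pmatrix}1\\1\\0\end{pmatrix}$ is the sum of the first two columns of $\Phi_+^\Pe$ on the upper side of $\R^+$; by the construction in \cite{BK3}, the two contours involved combine to produce the full real line, so this sum equals a constant (independent of $x$) multiple of $(p(x),p'(x),p''(x))^T$, with $p$ as in \eqref{eq:pearcey integral}. For the row vector, I would use the cofactor formula for the inverse, together with the fact that $\det \Phi^\Pe$ is constant (the jump matrices in Figure~\ref{fig: contour Pearcey} all have determinant one). The $2\times 2$ minors of $\Phi_+^\Pe(y;\rho)$ appearing in $(-1,1,0)\Phi_+^\Pe(y;\rho)^{-1}$ can then be recognized, via bilinear contour manipulations and the Pearcey differential equation $\tilde p'''=z\tilde p+\rho \tilde p$, as a multiple of $(q''(y),-q'(y),q(y))$ with $q$ the Pearcey $q$-integral in \eqref{eq:pearcey integral}. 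The $-\rho p(x)q(y)$ term in \eqref{eq: pearcey kernel} arises from this use of the differential equation, which injects the $\rho$-coefficient.

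For the other three sign configurations I would use the jump relations of $\Phi^\Pe$ on $\R^+$ and $\R^-$ from Figure~\ref{fig: contour Pearcey}. These jumps permute the columns (respectively, rows of the inverse) together with sign changes, and the net effect is to turn $\begin{pmatrix}1\\1\\0\end{pmatrix}$ into $\begin{pmatrix}1\\0\\1\end{pmatrix}$ when $x$ crosses from $\R^+$ to $\R^-$ from the upper side, and similarly $(-1,1,0)\leftrightarrow(-1,0,1)$ for the $y$-side. The four cases of \eqref{eq: Pearcey kernel} therefore all encode the same analytic statement, and it suffices to verify any one of them; the transformations between them are algebraic and direct.

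The main obstacle is keeping track of the normalization factors: the prefactor $\sqrt{2\pi/3}\,i\,e^{\rho^2/8}$ in the asymptotics of $\Phi^\Pe$ at infinity, the matrices $L_\pm^\Pe$ and the diagonal factor $\diag(z^{-1/3},1,z^{1/3})$ must all cancel correctly so that the final prefactor in \eqref{eq: Pearcey kernel} is exactly $1/(2\pi i (x-y))$, and the $\rho^2$-factor is consumed by the differential equation step. Once the identification of columns and rows with Pearcey integrals is pinned down, the remaining identity is a Christoffel-Darboux-type relation for Pearcey integrals that can be verified by a direct differentiation-in-$z$ argument applied to $\Phi_+^\Pe(y;\rho)^{-1}\Phi_+^\Pe(x;\rho)$ at $x=y$.
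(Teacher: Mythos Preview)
The paper gives no argument at all; its proof is simply the citation ``See \cite[Section 10.2]{BK3}.'' Your proposal is essentially a reconstruction of that derivation, and the overall strategy---identify the relevant column of $\Phi_+^\Pe$ with $(p,p',p'')^T$, use the cofactor formula and the Pearcey ODE to identify the relevant row of $(\Phi_+^\Pe)^{-1}$ with $(q'',-q',q)$, and then reduce the other sign cases to the first via the jump relations---is correct.

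One small correction is needed in the last step. To relate the four sign configurations you should use the jumps of $\Phi^\Pe$ on the diagonal rays at angles $\pi/4$ and $3\pi/4$, not the jumps on $\R^+$ and $\R^-$. The jumps on $\R^\pm$ relate $\Phi_+$ to $\Phi_-$ across the real axis, but all four formulas in \eqref{eq: Pearcey kernel} involve $\Phi_+^\Pe$. The boundary value $\Phi_+^\Pe(x)$ for $x>0$ lives in the sector $(0,\pi/4)$, while $\Phi_+^\Pe(x)$ for $x<0$ lives in the sector $(3\pi/4,\pi)$; these two sectors are connected through the upper half-plane across the rays at $\pi/4$ and $3\pi/4$. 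Tracking the analytic continuation through those two jumps gives $J_{3\pi/4}\,J_{\pi/4}^{-1}(1,1,0)^T=(1,0,1)^T$, which is exactly the transformation of the column vector you claimed; the row-vector side is analogous. With that adjustment your argument goes through.
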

\begin{proof}
See \cite[Section 10.2]{BK3}.
\end{proof}

\subsection{Proof of Theorem \ref{th: tacnode2Pearcey}}
\label{sec:proof of tacnode}

First we prove \eqref{eq: tacnode th 2}. We will focus on the case $u,v>0$, or equivalently,
$x,y>0$. Other cases can be proved similarly. We start from
\eqref{tacnodekernel:pos}, and the strategy is to express this
kernel in terms of $M^{(5)}$ instead of $M$ by unfolding all
transformations $M\mapsto M^{(1)}\mapsto M^{(2)}\mapsto
M^{(3)}\mapsto M^{(4)} \mapsto M^{(5)}$ of the steepest descent
analysis.

From the first transformation $M\mapsto M^{(1)}$ in \eqref{M to A},
it follows
\begin{multline*}
a^2 \Ktac \left(a^2u,a^2v;-\tfrac12{a^2},|a|\left(1+\frac{\sigma}{2|a|^{3/2}}\right)\right)\\
=\frac{1}{2 \pi i (u-v)} \begin{pmatrix} -1 & 0 & 1 & 0 \end{pmatrix} {M_+^{(1)}\left(v; a\right)}^{-1} {M^{(1)}_+\left(u;
a\right)} \begin{pmatrix} 1\\0\\1\\0 \end{pmatrix},
\end{multline*}
where, since we work in Brownian paths model, we chose the top sign
in $\pm \sigma$. The transformations $M^{(1)}\mapsto M^{(2)}\mapsto
M^{(3)}$ in \eqref{eq:defM(2)} and \eqref{eq:defM(3)} leave this
formula essentially unaffected. Applying the transformation
$M^{(3)}\mapsto M^{(4)}$ in \eqref{eq:defM(4)}, however, yields
\begin{multline*}
a^2 \Ktac \left(a^2u,a^2v;a,-\tfrac12{a^2},|a|\left(1+\frac{\sigma}{2|a|^{3/2}}\right)\right)\\
=\frac{1}{2 \pi i (u-v)} \begin{pmatrix} -e^{|a|^3\lambda_{1,+}(v)}
& 0 & e^{|a|^3\lambda_{3,+}(v)} & 0 \end{pmatrix}
M^{(4)}_+(v;a)^{-1}M_+^{(4)}(u;a) \begin{pmatrix}
e^{-|a|^3\lambda_{1,+}(u)}\\0\\e^{-|a|^3\lambda_{3,+}(u)}\\0
\end{pmatrix}.
\end{multline*}
In the next step we unfold the transformation $M^{(4)}\mapsto M^{(5)}$ in \eqref{eq:defM(5)}. Assuming that $0<u,v < \delta$, it follows from \eqref{eq: local parametrix} that
\begin{multline*}
a^2\Ktac \left(a^2u,a^2v;-\tfrac12 a^2,|a|\left(1+\frac{\sigma}{2|a|^{3/2}}\right)\right)=\frac{1}{2 \pi i (u-v)}  \\
\times \begin{pmatrix} -1 & 0 & 1 & 0 \end{pmatrix}\begin{pmatrix}
\Phi(f(v);\rho(v))^{-1} & 0 \\ 0&\phi(v)^{-1}\end{pmatrix}
E_0(v)^{-1}M^{(5)}(v)^{-1} \\ \times
M^{(5)}(u)E_0(u)\begin{pmatrix} \Phi(f(u);\rho(u)) & 0 \\ 0&\phi(u)\end{pmatrix}  \begin{pmatrix} 1\\0\\1\\0 \end{pmatrix}.
\end{multline*}

Now we fix $x,y>0$ and take
\begin{equation*}
u=2^{-1/2}|a|^{-9/4}x,\qquad  v=2^{-1/2}|a|^{-9/4}y,
\end{equation*}
so that $0<u,v <\delta$ for $|a|$ sufficiently large. Under this change of variables, it follows from \eqref{eq: conf f} and
\eqref{eq:def H0} that
\begin{align*}
f(u)\to x, \qquad f(v)\to y,
\end{align*}
as $a \to -\infty$. Also \eqref{eq: def of rho}, \eqref{eq: conf f}, \eqref{eq:def H0},
\eqref{eq:def G0}, and \eqref{eq:gamma explicit} imply
\begin{align*}
\rho(u)\to  \sigma, \qquad \rho(v)\to  \sigma,
\end{align*}
as $a \to -\infty$. Furthermore, by standard considerations it
follows that
\[
M^{(5)}(v;a)^{-1} M^{(5)}(u;a)=I+\mathcal O \left(
\frac{v-u}{|a|^{3/2}}\right)=I+\mathcal O \left(
\frac{x-y}{|a|^{15/4}}\right),
\]
as $ a \to -\infty$, uniformly for $x$ and $y$ in a compact subset
of $\R$. Observe also that $E_0(u)=\mathcal O (|a|^{3/4})$ as $a \to
-\infty$; see \eqref{eq: E0} and \eqref{eq:asy of Minfty 0}. The
same bound holds for $E_0(u)^{-1}$, $E_0(v)$ and $E_0(v)^{-1}$, so
that we find
\[
E_0(v)^{-1}E_0(u)=I+\mathcal O \left(
|u-v||a|^{3/2}\right)=I+\mathcal O \left(\frac{|x-y|}{|a|^{3/4}}
\right).
\]

Combining all these results, gives
\begin{multline*}
\lim_{a\to -\infty} \frac{1}{\sqrt 2 |a|^{1/4}}\Ktac
\left(\frac{x}{\sqrt 2 |a|^{1/4}},\frac{y}{\sqrt 2 |a|^{1/4}};-\tfrac12 a^2,|a|\left(1+\frac{\sigma}{2|a|^{3/2}}\right)\right) \\
=\frac{1}{2 \pi i (x-y)}  \begin{pmatrix} -1 & 0 & 1  \end{pmatrix} \Phi(y;\sigma)^{-1}
\Phi(x;\sigma)  \begin{pmatrix} 1\\0\\1 \end{pmatrix}.
\end{multline*}
An appeal to \eqref{eq:sol of modelRHP} yields
\begin{multline*}
\lim_{a\to -\infty} \frac{1}{\sqrt 2 |a|^{1/4}}\Ktac \left(\frac{x}{\sqrt 2 |a|^{1/4}},\frac{y} {\sqrt 2 |a|^{1/4}};-\tfrac12 a^2,|a|\left(1+\frac{\sigma}{2|a|^{3/2}}\right)\right)\\
=\frac{1}{2 \pi i (x-y)} \begin{pmatrix} 1 & 1 & 0  \end{pmatrix} \Phi^\Pe(y;\sigma)^{T}  \Phi^\Pe(x;\sigma)^{-T}  \begin{pmatrix}
1\\-1\\0 \end{pmatrix}.
\end{multline*}
Finally, by taking the transpose on both sides of the above formula,
\eqref{eq: tacnode th 2} follows from \eqref{eq: Pearcey
kernel} for the case $x,y>0$.

Next we prove \eqref{eq: tacnode th 1} from symmetry considerations. We start with the following symmetry relation
\[
M^{-T}(\zeta;s,t)=\begin{pmatrix} 0 & I_2 \\ -I_2 & 0 \end{pmatrix} M(\zeta;s,-t) \begin{pmatrix} 0 & -I_2 \\ I_2 & 0 \end{pmatrix},
\]
for $s,t \in \R$, which can be checked from RH problem \ref{rhp:
tacnode rhp}. Here, $I_2$ denotes $2\times 2$ identity matrix.
Using \eqref{eq:tacnode kernel} this leads to the following symmetry
property of the tacnode kernel
\[
\Ktac(u,v;s,-t)=\Ktac(v,u;s,t).
\]
Given this, \eqref{eq: tacnode th 1} is immediate from \eqref{eq: tacnode th 2}.

This completes the proof of Theorem \ref{th: tacnode2Pearcey}.

\subsection{Proof of Theorem \ref{th: Pearcey}}
\label{sec:proof of critical}

We start with a lemma that establishes a symmetry property of the
Pearcey parametrix.

\begin{lemma} \label{lemma: symmetry Pearcey}
Let $\Phi^{\Pe}(\cdot;\rho)$ be the unique solution of RH problem \ref{rhp: Pearcey}. Then
\begin{equation} \label{eq: symmetry Pearcey parametrix}
\diag \left( -i,1,i \right) \Phi^{\Pe}(iz;-\rho) B_j= \Phi^{\Pe}(z;\rho), \qquad \text{for $z$ in the $j$-th quadrant,}
\end{equation}
where
\[
B_1= \begin{pmatrix} 0&0&-1 \\ -1&0&0 \\ 0&1&0 \end{pmatrix}, \quad
B_2= \begin{pmatrix} 0&1&0 \\ -1&0&0 \\ 0&0&1  \end{pmatrix}, \quad
B_3= \begin{pmatrix} 0&1&0 \\ 0&0&1 \\ 1&0&0   \end{pmatrix}, \quad
B_4= \begin{pmatrix} 0&0&-1 \\ 0&1&0 \\ 1&0&0  \end{pmatrix}.
\]
\end{lemma}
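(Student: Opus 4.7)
The natural strategy is to invoke the uniqueness of the solution to RH problem \ref{rhp: Pearcey} with parameter $\rho$. Accordingly, I would define the piecewise function
\[
\widetilde \Phi(z) := \diag(-i,1,i)\, \Phi^{\Pe}(iz;-\rho)\, B_j, \qquad z\text{ in the $j$-th quadrant},
\]
and verify that $\widetilde\Phi$ satisfies conditions (1)--(4) of RH problem \ref{rhp: Pearcey}; the desired identity \eqref{eq: symmetry Pearcey parametrix} then follows from uniqueness. A priori $\widetilde\Phi$ may be discontinuous on $\Sigma_\Phi$ (the real axis and the four diagonal rays, where $B_j$ changes or where $\Phi^{\Pe}(iz;-\rho)$ jumps because $iz\in\Sigma_\Phi$) together with the imaginary axis (where $iz$ crosses the real axis, inducing a jump of $\Phi^{\Pe}(iz;-\rho)$, and where $B_j$ also changes).

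The first task is to show that the potential discontinuity of $\widetilde\Phi$ on each of the two imaginary semi-axes is in fact trivial: the jump coming from the real-axis discontinuity of $\Phi^{\Pe}(\cdot;-\rho)$, conjugated suitably and composed with the switch $B_j\mapsto B_{j\pm 1}$, must multiply to the identity. This is a direct finite check using the formulas for $B_1,\ldots,B_4$ and the jumps listed in Figure \ref{fig: contour Pearcey}. The same approach, carried out ray by ray, verifies that on the positive and negative real axes the jump of $\widetilde\Phi$ (induced purely by the change of $B_j$, since $\Phi^{\Pe}(iz;-\rho)$ has no jump there) equals the required $J^{\Pe}_\Phi$, and that on each diagonal ray $\arg z = \pm\pi/4,\pm 3\pi/4$ the conjugation of the $\Phi^{\Pe}(iz;-\rho)$-jump by the single matrix $B_j$ (which is constant across that ray) reproduces the corresponding jump of $\Phi^{\Pe}(\cdot;\rho)$.

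The asymptotic matching at infinity is the main obstacle and where the choice of the $B_j$ really enters. Working in the $j$-th quadrant with principal branches, one finds
\[
(iz)^{4/3}=\omega^{a_j}\,z^{4/3},\qquad (iz)^{2/3}=\omega^{b_j}\,e^{i\pi/3}\,z^{2/3},
\]
for explicit exponents $a_j,b_j\in\{0,1,2\}$ depending on whether $iz$ remains on the principal sheet or picks up a $\pm 2\pi i$ correction. Substituting these relations into the definition \eqref{def:theta k} of $\theta_k$, one checks that $\theta_k(iz;-\rho)=\theta_{\pi_j(k)}(z;\rho)$ for a suitable cyclic permutation $\pi_j$ of $\{1,2,3\}$. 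Correspondingly, $\diag(-i,1,i)\,\diag((iz)^{-1/3},1,(iz)^{1/3})=\diag(z^{-1/3},1,z^{1/3})\,\diag(\omega^{c_j},1,\omega^{d_j})$ for explicit $c_j,d_j$. The essential point is then to verify that $B_j$ is precisely the signed permutation matrix whose underlying permutation realizes $\pi_j$, so that
\[
e^{\Theta(iz;-\rho)}\,B_j\,e^{-\Theta(z;\rho)}=B_j
\]
is constant, and that $\diag(\omega^{c_j},1,\omega^{d_j})\,L_{\pm}^{\Pe}\,B_j = L_{\pm}^{\Pe}$ (with $\pm$ determined by the sign of $\Re z$, since $\Im(iz)=\Re z$). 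Together these identities show that $\widetilde\Phi(z)\,e^{-\Theta(z;\rho)}$ admits exactly the asymptotic expansion prescribed for $\Phi^{\Pe}(z;\rho)$ in RH problem \ref{rhp: Pearcey}(3).

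Finally, boundedness of $\widetilde\Phi$ at the origin is inherited immediately from the boundedness of $\Phi^{\Pe}$ and from the boundedness of each $B_j$. With all four conditions of RH problem \ref{rhp: Pearcey} verified, uniqueness forces $\widetilde\Phi\equiv\Phi^{\Pe}(\cdot;\rho)$, which is the claim. I expect the bookkeeping in the asymptotic step (tracking branches of $(iz)^{s}$ across the four quadrants and checking the compatibility of the signed permutations $B_j$ with both $\pi_j$ and the phase corrections from $\diag(-i,1,i)$) to be the only delicate part of the argument; the jump verifications are purely combinatorial $3\times 3$ matrix multiplications.
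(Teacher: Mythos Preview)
Your approach is correct and is exactly the one the paper takes: define the left-hand side piecewise, verify that it satisfies all four conditions of RH problem~\ref{rhp: Pearcey} with parameter $\rho$, and conclude by uniqueness. The paper likewise flags the asymptotic check as the cumbersome step and records the key identities you would need, namely $\theta_k(iz;-\rho)=\theta_{\sigma(k)}(z;\rho)$ for an explicit permutation $\sigma$ depending on the quadrant (the identity in quadrant~II, the $3$-cycle $1\mapsto 3\mapsto 2\mapsto 1$ in the other three quadrants), together with the branch relations for $(iz)^{\pm 1/3}$; one small caveat is that the $L_\pm^{\Pe}$ appearing on the two sides are selected by different criteria ($\Im(iz)=\Re z$ on the left, $\Im z$ on the right), so in quadrants~II and~IV the identity you must check is of the form $\diag(\omega^{c_j},1,\omega^{d_j})\,L_\mp^{\Pe}\,B_j = L_\pm^{\Pe}$ rather than with matching signs.
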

\begin{proof}
We have to check that the left-hand side of \eqref{lemma: symmetry Pearcey} satisfies RH problem \ref{rhp: Pearcey}.
It is straightforward to check the jump conditions in item (2). Checking the asymptotics is a
bit more cumbersome. To that end the following observations are useful
\begin{align*}
\theta_1(iz;-\rho) &= \begin{cases} \theta_3(z;\rho), & \quad \text{for $z \in I \cup III \cup IV$}, \\
\theta_1(z;\rho), & \quad\text{for $z \in II$,} \end{cases} \\
\theta_2(iz;-\rho) &= \begin{cases} \theta_1(z;\rho), & \quad\text{for $z \in I \cup III \cup IV$}, \\
\theta_2(z;\rho), & \quad\text{for $z \in II$,} \end{cases} \\
\theta_3(iz;-\rho) &= \begin{cases} \theta_2(z;\rho), & \quad \text{for $z \in I \cup III \cup IV$}, \\
\theta_3(z;b), & \quad\text{for $z \in II$,} \end{cases}
\end{align*}
and also
\begin{align*}
e^{\pm \pi i/6}(iz)^{\mp 1/3} &= z^{\mp 1/3}, && \text{for $z \in I \cup III \cup IV$,} \\
\mp i(iz)^{\mp 1/3} &= z^{\mp 1/3}, && \text{for $z \in II$.}
\end{align*}
We omit the details here.
\end{proof}

The proof is similar to that of Theorem \ref{th: tacnode2Pearcey},
and we will focus on the points where both proofs differ and leave
some details to the reader.
Note that we are now working in the context of the two-matrix model, so
we choose the bottom sign in $\pm \sigma$; cf. \eqref{eq:gamma
explicit} and \eqref{t:scaling}.

Following the same ideas as in the proof of Theorem \ref{th:
tacnode2Pearcey}, we obtain
\begin{multline*}
a^2\Kcr \left(a^2u,a^2v;-\tfrac12 a^2,|a|\left(1-\frac{\sigma}{2|a|^{3/2}}\right)\right)=\frac{1}{2 \pi i (u-v)}  \\
\times \begin{pmatrix} -1 & 1 & 0 & 0 \end{pmatrix}\begin{pmatrix}
\Phi(f(iu);\rho(iu))^{-1} & 0 \\ 0&\phi(iu)^{-1}\end{pmatrix}
E_0(iu)^{-1}M^{(5)}_+(iu)^{-1} \\ \times
M^{(5)}_+(iv)E_0(iv)\begin{pmatrix} \Phi(f(iv);\rho(iv)) & 0 \\
0&\phi(iv)\end{pmatrix}
 \begin{pmatrix} 1\\1\\0\\0 \end{pmatrix},
\end{multline*}
for $0<|u|,|v|<\delta$.

Now we fix $x,y\neq0$ and take
\begin{equation*}
u=2^{-1/2}|a|^{-9/4}x,\qquad  v=2^{-1/2}|a|^{-9/4}y,
\end{equation*}
so that $|u|,|v| <\delta$ for $|a|$ sufficiently large. Hence
\begin{align*}
f(iu)\to ix, \qquad f(iv)\to iy,
\end{align*}
as $a \to -\infty$ and
\begin{align*}
\rho(iu)\to - \sigma, \qquad \rho(iv)\to - \sigma,
\end{align*}
as $a \to -\infty$. Furthermore,
\[
M^{(5)}_+(iu;a)^{-1} M^{(5)}_+(iv;a)=I+\mathcal O
\left(\frac{x-y}{|a|^{15/4}}\right),
\]
as $ a \to -\infty$, uniformly for $x$ and $y$ in a compact subset
of $\R$. Also
\[
E_0(iu)^{-1}E_0(iv)=I+\mathcal O \left( |u-v||a|^{3/2}\right)=I+\mathcal O \left(\frac{|x-y|}{|a|^{3/4}} \right).
\]

These results, together with \eqref{eq:sol of modelRHP} and taking
the transpose, imply
\begin{multline*}
\lim_{a\to -\infty} \frac{1}{\sqrt 2 |a|^{1/4}}\Kcr \left(\frac{x}{\sqrt 2 |a|^{1/4}},\frac{y} {\sqrt 2 |a|^{1/4}};-\tfrac12 a^2,|a|\left(1-\frac{\sigma}{2|a|^{3/2}}\right)\right)\\
=\frac{1}{2 \pi i (x-y)} \begin{pmatrix} 0 & 1 & 1  \end{pmatrix} \Phi^\Pe(iy;-\sigma)^{-1}  \Phi^\Pe(ix;-\sigma)  \begin{pmatrix}
0\\-1\\1 \end{pmatrix}.
\end{multline*}
Finally, we obtain Theorem \ref{th: Pearcey} by applying Lemma
\ref{lemma: symmetry Pearcey} and \eqref{eq: Pearcey kernel} to the
above formula.

\section*{Acknowledgments}
We thank Steven Delvaux and Arno Kuijlaars for their careful reading
of the manuscript and helpful comments. DG is a Research Assistant
of the Fund for Scientific Research - Flanders (FWO), Belgium. LZ is
a Postdoctoral Fellow of the Fund for Scientific Research - Flanders
(FWO), Belgium.

\end{document}